\definecolor{blue-green}{rgb}{0,0.5,0.5}
  \lstdefinestyle{tinyc}{
    basicstyle=\scriptsize\ttfamily,
    keywordstyle=\color{blue}
  }
  \lstdefinestyle{normalc}{
    basicstyle=\ttfamily,
    numbers=none,
    keywordstyle=\color{blue}
  }
  \lstdefinestyle{inlinec}{
    basicstyle=\ttfamily
  }
\tikzstyle{every picture}+=[remember picture]
\newcommand{\RM}[1]{\todo[inline]{\textcolor{red}{\textbf{RM}: #1}}}
\tikzset{gadget/.style={->,>=stealth,initial text=,minimum size=7pt,auto,on grid,scale=1,inner sep=1pt,node distance=1cm}}
\tikzset{every state/.style={minimum size=15pt,inner sep=1pt,fill=black!10,draw=black!70,thick}}
\DeclareMathSymbol{\mdot}{\mathord}{symbols}{"01}
\tikzstyle{plain circle}=[fill=white, draw=black, shape=circle]
\tikzstyle{black circle}=[fill=black, draw=black, shape=circle]
\tikzstyle{blue circle}=[fill=blue, draw=black, shape=circle]
\tikzstyle{red circle}=[fill=red, draw=black, shape=circle]
\tikzstyle{blue-green circle}=[fill={rgb,255: red,0; green,128; blue,128}, draw=black, shape=circle]
\tikzstyle{direction edge}=[->]
\tikzstyle{dashed directional edge}=[->, dashed]
\tikzstyle{red full edge}=[draw=red, ->]
\tikzstyle{dashed simple edge}=[-, dashed]
\tikzstyle{double directional}=[double, ->]
\tikzstyle{dash double}=[dashed, double, ->]
\tikzstyle{new edge style 0}=[-, dashed, draw=red]
\tikzstyle{blue}=[-, draw=blue]
\tikzstyle{blue directiona}=[draw=blue, ->]
\tikzstyle{blue dashed}=[dashed, draw=blue, ->]
\def\sforall{{\mathbf{\forall\!\!\!\!\forall\,}}}
\def\sexists{{\mathbf{\raisebox{0.15ex}{$\exists$}\!\!\!\!\exists\,}}}
\DeclareDocumentCommand{\autstep}{O{}}{%
        \xrightarrow{#1}%
        }
\DeclareDocumentCommand{\langof}{O{} m}{%
  \mathsf{L}_{#1}(#2)%
  }
\newcommand{\pGCL}{\mathsf{pGCL}}
\newcommand{\pChoice}[1]{\oplus_{#1}}
\newcommand{\pChoiceWithoutParamters}{\oplus}
\newcommand{\nChoice}{\,[\!] \,}
\newcommand{\codeStyleMath}[1]{\mathtt{#1}}
\newcommand{\codeStyleText}[1]{\texttt{#1}}
\newcommand{\evaluationUnderContext}[2]{\llbracket #1 \rrbracket_{#2}}
\newcommand{\CK}{\omega_1^{\mathsf{CK}}}
\newcommand{\executionStateProjectionProbability}{\mathsf{Prob}}
\newcommand{\executionStateProjectionHistory}{\mathsf{Hist}}
\newcommand{\termProb}{\Pr_{\text{term}}}
\newcommand{\expRuntime}{\mathsf{ExpRuntime}}
\newcommand{\expRuntimeToReach}{\mathsf{ExpReachRuntime}}
\newcommand{\AST}{\mathsf{AST}}
\newcommand{\PAST}{\mathsf{PAST}}
\newcommand{\BAST}{\mathsf{BAST}}
\newcommand{\nAST}{\mathsf{AST}} 
\newcommand{\nPAST}{\mathsf{PAST}}
\newcommand{\nbPAST}{\mathsf{BAST}}
\newcommand{\nffAST}{\mathsf{Fin\text{-}Fair\text{-}AST}}
\newcommand{\nffPAST}{\mathsf{Fin\text{-}Fair\text{-}PAST}}
\newcommand{\nffBAST}{\mathsf{Fin\text{-}Fair\text{-}BAST}}
\newcommand{\nfPAST}{\mathsf{Fair\text{-}PAST}}
\newcommand{\nfBAST}{\mathsf{Fair\text{-}BAST}}
\newcommand{\nfAST}{\mathsf{Fair\text{-}AST}}
\newcommand{\setOfReals}{\mathbb{R}}
\newcommand{\setOfRationals}{\mathbb{Q}}
\newcommand{\setOfPositiveRationals}{\mathbb{Q}^+}
\newcommand{\setOfNaturals}{\mathbb{N}}
\newcommand{\setOfVariables}{\mathsf{Var}}
\newcommand{\setOfVariableValuations}{\mathbb{V}}
\newcommand{\setOfSchedules}{\mathbb{F}}
\newcommand{\setOfExecutionStates}{\mathbb{P}}
\newcommand{\setOfProgramStates}{\Sigma}
\newcommand{\setOfReachableStates}[1]{\Sigma_r[#1]}
\newcommand{\setOfPrograms}{\codeStyleMath{Prog}}
\newcommand{\setOfPartialSchedulesOfSize}[1]{\mathcal{F}_{#1}} 
\newcommand{\setOfPartialSchedules}{\mathcal{F}}
\newcommand{\setOfRecursiveWellFoundedTrees}{\Omega_{rec}} 
\newcommand{\listingsInLatex}{Code}
\newcommand{\listingLineRef}[1]{Line ~\ref{#1}}
\newcommand{\codeRef}[1]{Program \ref{#1}} 
\newcommand{\declareCodeFigure}{\renewcommand\figurename{Prg.}} 
\newcommand{\finitaryTransformation}[1]{\codeStyleMath{fin(}#1\codeStyleMath{)}}
\newcommand{\fairnessPredicate}[1]{\mathsf{fair(}#1\mathsf{)}}
\newcommand{\fairnessPredicateWithoutParameters}{\mathsf{fair}}
\newcommand{\fairnessBranchPredicate}[1]{\mathsf{fair_{br}(}#1\mathsf{)}}
\newcommand{\fairnessBranchPredicateWithoutParameters}{\mathsf{fair_{br}}}
\newcommand{\initialExecutionState}[1]{\sigma^e_{#1, 0}} 
\newcommand{\initialProgramState}[1]{\sigma_{#1, 0}}
\newcommand{\terminalStateSet}{T_e}
\newcommand{\terminalStateSetInStep}[1]{T^{#1}_e}
\newcommand{\oneStepExec}[1]{\vdash_{#1}} 
\newcommand{\nStepExec}[2]{\vdash^{#2}_{#1}} 
\newcommand{\starStepExec}[1]{\vdash^*_{#1}}
\newcommand{\nameForNormalFormPrograms}{Knievel } 
\newcommand{\setOfLowerStates}[1]{\mathsf{Lower}_{#1}}
\newcommand{\ordinalVariableSymbol}{\mathbf{o}}
\newcommand{\ordMath}{\text{ord}}
\newcommand{\node}{\mathsf{node}}
\newcommand{\INC}{\mathsf{inc}}
\newcommand{\calM}{\mathcal{M}}
\def\sQ{\mathbf{Q}}
\title{Positive Almost-Sure Termination -- Complexity and Proof Rules}
\newcommand{\OurInstitution}{Max Planck Institute for Software Systems (MPI-SWS)}
\newcommand{\OurStreet}{Paul-Ehrlich-Stra{\ss}e, Building G26}
\newcommand{\OurCity}{Kaiserslautern}
\newcommand{\OurPostcode}{67663}
\newcommand{\OurCountry}{Germany}
\author{Rupak Majumdar}
\affiliation{
  \institution{\OurInstitution}            
  \streetaddress{\OurStreet}
  \city{\OurCity}
  \postcode{\OurPostcode}
  \country{\OurCountry}                    
}
\email{rupak@mpi-sws.org}          
\author{V.R.~Sathiyanarayana}
\affiliation{
  \institution{\OurInstitution}            
  \streetaddress{\OurStreet}
  \city{\OurCity}
  \postcode{\OurPostcode}
  \country{\OurCountry}                    
}
\email{sramesh@mpi-sws.org}          
\begin{abstract}
We study the recursion-theoretic complexity of Positive Almost-Sure Termination ($\PAST$) in an imperative
programming language with rational variables, bounded nondeterministic choice, and discrete
probabilistic choice.
A program terminates positive almost-surely if, for every scheduler, the program terminates almost-surely 
and the expected runtime to termination is finite.
We show that $\PAST$ for our language is complete for the (lightface) co-analytic sets ($\Pi_1^1$-complete).
This is in contrast to the related notions of Almost-Sure 
Termination ($\AST$) and Bounded Termination ($\BAST$), both of which are arithmetical ($\Pi^0_2$- and $\Sigma^0_2$-complete respectively). 

Our upper bound implies an effective procedure to reduce reasoning about probabilistic termination to non-probabilistic
fair termination in a model with bounded nondeterminism, and to simple program termination in models with unbounded nondeterminism.
Our lower bound shows the opposite: for every program with unbounded nondeterministic choice, there is an effectively computable probabilistic program with
bounded choice such that the original program is terminating $iff$ the transformed program is $\PAST$.

We show that every program has an effectively computable normal form,
in which each probabilistic choice either continues or terminates execution immediately, each with probability $1/2$.
For normal form programs, we provide a sound and complete proof rule for $\PAST$. 
Our proof rule uses transfinite ordinals.
We show that reasoning about $\PAST$ requires transfinite ordinals up to $\CK$; 
thus, existing techniques for probabilistic termination
based on ranking supermartingales that map program states to reals 
do not suffice to reason about $\PAST$.
\end{abstract}
\begin{document}

\makeatletter
\newtheoremstyle{dazzle}%
{.5\baselineskip\@plus.2\baselineskip
  \@minus.2\baselineskip}
{.5\baselineskip\@plus.2\baselineskip
  \@minus.2\baselineskip}
{\@acmplainbodyfont}
{\@acmplainindent}
{\bfseries}
{.}
{.5em}
{\thmname{\textcolor{red}{\textbf{#1}}}\thmnumber{ \textcolor{red}{\textbf{#2}}}\thmnote{ {\@acmplainnotefont(\textcolor{blue}{#3})}}}
\makeatother

\theoremstyle{dazzle}
\newtheorem{maintheorem}[theorem]{Theorem}
\newtheorem{mainlemma}[theorem]{Lemma}
\newtheorem{maincorollary}[theorem]{Corollary}
\newtheorem{mainproposition}[theorem]{Proposition}

\crefalias{maintheorem}{theorem}
\crefalias{mainlemma}{lemma}
\crefalias{maincorollary}{corollary}
\crefalias{mainproposition}{proposition}

\theoremstyle{acmplain}
\newtheorem{observation}[theorem]{Observation}
\theoremstyle{acmdefinition}
\newtheorem{remark}[theorem]{Remark}

\Crefname{observation}{Observation}{Observations}

\renewcommand{\lstlistingname}{\listingsInLatex}

\maketitle

\section{Introduction}
\label{sec:intro}

A \emph{probabilistic} program augments an imperative program with primitives for randomization.
Probabilistic programs allow direct implementation of randomized computation and probabilistic modeling
and have found applications in machine learning, bio-informatics, epidemiology, and information retrieval amongst others; see \citet{KatoenGJKO15} for a comprehensive presentation of their applicability.

We study programs written in a classical imperative language with constructs for bounded (binary) nondeterministic choice
$P_1 \nChoice P_2$ and discrete probabilistic choice $P_1 \pChoice{p} P_2$.
The first program can nondeterministically reduce  to either $P_1$ or $P_2$; 
the second reduces to $P_1$ with probability $p$ and to $P_2$ with probability $1-p$. 

A fundamental and classical question about programs is \emph{termination}: does the execution of a program stop
after a finite number of steps?
In the presence of nondeterministic choice, a program can have many executions, depending on how the nondeterminism is resolved.
Typically, nondeterminism is modelled as being resolved demonically by an uncaring \emph{scheduler}, and 
the termination question is modified to ask: 
does the program stop after a finite number of steps no matter how the scheduler resolves nondeterminism?

If, in addition, a program has probabilistic choice, the notion of termination has to be modified to exclude 
some ostensibly infinite executions with a total measure of zero.
For example, if a program repeatedly tosses a fair coin until it lands heads, it will halt with probability one, as the probability of observing an
infinite sequence of tails is zero.

Consequently, several qualitative notions of termination have been defined and studied.
A program is \emph{almost sure terminating}, written $\AST$, if for every scheduler, the probability of termination is one.
A program is \emph{positive} almost sure terminating, written $\PAST$, if for every scheduler, the expected run time to termination is finite.
Finally, a program is \emph{bounded} almost sure terminating, written $\BAST$, if there exists a global bound on expected run times to termination independent of the scheduler.

Clearly, every $\BAST$ program is also $\PAST$, and every $\PAST$ program is also $\AST$.
In the absence of nondeterminism, $\PAST$ and $\BAST$ coincide.
However, these notions are different in general, as illustrated in Programs \ref{fig:example-program-1} and \ref{fig:example-program-2}.

\begin{figure}
  \declareCodeFigure
  \small
  \begin{subfigure}{0.45\textwidth}
    \begin{lstlisting}[language=python, mathescape=true, escapechar=|, xleftmargin=15pt]
x $\coloneqq$ 1
while (x |$\neq$| 0):
  x $\coloneqq$ x + 1 |$\pChoice{\frac{1}{2}}$| x $\coloneqq$ x - 1
    \end{lstlisting}
    \caption{This program is $\AST$ but not $\PAST$.}
    \label{fig:example-program-1}
  \end{subfigure}
  \begin{subfigure}{0.45\textwidth}
    \begin{lstlisting}[language=python, mathescape=true, escapechar=|, xleftmargin=15pt]
x, y, z $\coloneqq$ 0, 0, 1
while (x + y = 0): |\label{line:example-npast-loop-one}|
  y $\coloneqq$ 0 |$\nChoice$| y $\coloneqq$ 1
  x $\coloneqq$ 0 |$\pChoice{\frac{1}{2}}$| x $\coloneqq$ 1
  z $\coloneqq$ z * 4 |\label{line:example-npast-inc}|
while (x = 0 |$\land$| z > 0):
  z $\coloneqq$ z - 1
    \end{lstlisting}
    \caption{This program is $\PAST$ but not $\BAST$.}
    \label{fig:example-program-2}
  \end{subfigure}
  \caption{Programs showcasing the relationships between $\AST$, $\PAST$, and $\BAST$.}
\end{figure}
\codeRef{fig:example-program-1} is the famous symmetric random walker, which terminates almost surely (i.e., is $\AST$) but cannot expect to do so in a finite amount of time \cite{Polya}.
Meanwhile, \codeRef{fig:example-program-2} is $\PAST$, but
the longer the scheduler keeps the execution inside the first loop (from Lines ~\ref{line:example-npast-loop-one} to ~\ref{line:example-npast-inc}),
the greater its expected runtime.
Thus, it is not $\BAST$.
However, replacing \listingLineRef{line:example-npast-inc} by \lstinline{z = z + 1} induces an upper bound of $4$ over all possible expected runtimes, making it $\BAST$.


All these notions have been studied extensively, both with and without (demonic) nondeterminism \cite{Pnueli83,McIverM05,McIverMKK18,BournezG05,FuC19}.
One main focus of these works has been the development of proof rules to prove that a given program terminates under one of these notions.
Most of this work has focused on $\AST$ and $\BAST$; relatively little is known for $\PAST$.
%

In this paper, we characterize the recursion-theoretic complexity of $\PAST$ and provide a semantically sound and complete
proof rule.
Our first result is that membership in $\PAST$ is complete for the (lightface) co-analytic sets, that is, $\Pi_1^1$-complete.
This is in contrast to $\AST$ and $\BAST$, both of which lie in the arithmetic hierarchy ($\Pi^0_2$-complete and $\Sigma^0_2$-complete,
respectively \cite{KaminskiKM19}). 
Hardness already holds with binary nondeterministic choice and probabilistic choice of the form 
\small\begin{equation}\label{eq:gnf}
    {
\codeStyleMath{skip}\ \oplus_{1/2} \codeStyleMath{exit}}
\tag{\nameForNormalFormPrograms form}
\end{equation}\normalsize
which continues execution or halts with probability $1/2$ each.
A consequence of our result is that every probabilistic program has an effectively constructible \emph{normal form}, 
which we call \emph{\nameForNormalFormPrograms form} (after Evel Knievel, who made many such choices in his life).
Our second main result is a sound and complete proof rule for \nameForNormalFormPrograms form $\PAST$ programs.
We prove that proof systems for $\PAST$ require transfinite ordinals up to the first non-computable ordinal $\CK$, also known as the Church-Kleene ordinal.
This is in contrast to $\AST$ and $\BAST$, neither of which require transfinite ordinals.
In fact, most proof systems for $\AST$ and $\BAST$ use \emph{ranking supermartingales} that map program states to the reals with the proviso that
each program transition decreases the expected value of the mapping by a minimum amount \cite{FuC19,FioritiH15,ChakarovS13}.
Our result shows that such an attempt will not work for $\PAST$.
To illustrate this claim, we describe in \cref{subsec:hydra-example} a stochastic variant of the \emph{Hydra} game \cite{KirbyParis1982} that shows an intuitive example of a $\PAST$ program that
requires transfinite ordinals up to $\varepsilon_0$ to demonstrate termination.
Recall that the complexity of valid statements in the standard model of arithmetic is $\Delta^1_1$ \cite{Rogers};
thus, relative completeness results for $\PAST$ must use more powerful proof systems.

Our $\PAST$ proof rule for \nameForNormalFormPrograms form programs uses two ingredients.
The first is a \emph{ranking function} from program states to ordinals up to $\CK$ with the property that only terminal states are ranked zero.
The second is a state-dependent certificate, based on ranking supermartingales, for a bound on the expected time to reach a state
with a lower rank independent of the scheduler.

We show that for every program---not necessarily in \nameForNormalFormPrograms form---the proof rule is complete: from every $\PAST$ program, one can extract a rank and a certificate.
Moreover, by analyzing the possible traces of programs in \nameForNormalFormPrograms form, we show that the rule is sound: the existence of such a ranking function and a ranking supermartingale
implies that the expected running time is bounded for each scheduler.
However, soundness depends on the normal form: the rule is not sound if applied to general programs.
Since our first result provides an effective transformation to \nameForNormalFormPrograms form, 
we nevertheless get a semantically sound and complete proof system by first transforming the program into the normal
form and then applying the proof rule.

We also show that ordinals up to $\CK$ are necessary by explicitly constructing, for each constructible ordinal $\mathsf{o} < \CK$, a $\PAST$ program for which suitable
ranking functions include $\mathsf{o}$ in their range.
Our construction encodes a recursive $\omega$-tree $T$ into a probabilistic program $P(T)$ such that $T$ is well-founded \emph{iff} $P(T)$ is $\PAST$---recall
that the constructible ordinals are coded by such trees \cite{Kozen06}.

Our results are related to termination and fair termination problems for non-probabilistic programs with unbounded countable nondeterministic choice \cite{Chandra78,HarelK84,AptP86,Harel86}.
The $\Pi^1_1$-completeness and the requirement of ordinals up to $\CK$ for deciding termination of programs with countable nondeterministic choice was shown by \citet{Chandra78} and \citet{AptP86}.
Additionally, \citet{Harel86} showed a general recursive transformation on trees with bounded nondeterministic choice and fairness that reduces fair termination to termination, thereby providing
a semantically complete proof system for fair termination.
Since fairness can simulate countable nondeterminism using bounded nondeterminism, these results also show a lower complexity bound and the necessity of transfinite ordinals for fair termination.
Our results show that countable nondeterminism and discrete probabilistic choice has the same power.

We summarize our main results below:
\begin{enumerate}[topsep=0pt]
\item Deciding if a probabilistic program with bounded nondeterministic and probabilistic choice is $\PAST$ is $\Pi^1_1$-complete. 
\item For any probabilistic program $P$, there is an effectively constructible \nameForNormalFormPrograms form program $P_K$
and non-probabilistic program $P_1$ with bounded nondeterministic choice and non-probabilistic program $P_2$ with unbounded choice 
such that $P$ is $\PAST$ \emph{iff} $P_K$ is $\PAST$ \emph{iff} $P_1$ is fairly terminating \emph{iff} $P_2$ is terminating.

\item For any recursive $\omega$-tree $T$, there is a probabilistic program $P(T)$ such that $T$ is well-founded \emph{iff} $P(T)$ is $\PAST$.
Hence, proving $\PAST$ requires ordinals up to $\CK$.

\item There is a sound and complete proof rule for \nameForNormalFormPrograms form programs that uses a (deterministic) ranking function with codomain $\CK$
and ranking supermartingales.
While the rule is complete for every $\PAST$ program, it is only sound for programs in \nameForNormalFormPrograms form.
\end{enumerate}

\section{A Hydra Game: $\PAST$ Requires Transfinite Ordinals}
\label{subsec:hydra-example}


We now illustrate our main arguments in a stochastic variant of the \emph{Hydra game}, 
a two player game between the warrior Hercules and the Lernaean Hydra.
Introduced by \citet{KirbyParis1982}, the deterministic Hydra game terminates but requires transfinite ordinals to prove as much.
Our stochastic version is $\PAST$ and similarly requires transfinite ordinals to prove its membership.
\begin{figure}[t]
    \declareCodeFigure
\smaller
\begin{lstlisting}[
%caption={The Augmented Hydra game}, captionpos=b, label={lst:hydra-game}, 
	language=python, mathescape=true, escapechar=|, xleftmargin=15pt]
n $\coloneqq$ 4	# initial regrowth capacity
while (True):
    if (empty(hydra)): exit # Hercules has killed the Hydra
    l $\coloneqq$ Hercules(hydra) # Hercules's choice |\label{line:hercules-move}|
    parent $\coloneqq$ getParent(hydra, l) # the parent node |\label{line:parent-op1}|
    grandParent $\coloneqq$ getParent(hydra, parent) # the grandparent node |\label{line:parent-op2}|
    hydra $\coloneqq$ removeLeaf(hydra, l) |\label{line:remove-head}| # disconnect head

    if (not empty(grandparent)): # grow new heads
        evolve $\coloneqq$ 0 |$\nChoice$| evolve $\coloneqq$ 1 # Hydra's move |\label{line:hydra-move1}|
        while(evolve):
            skip |$\pChoice{1/2}$| exit |\label{line:death-chance2}| # die with some probability
            n $\coloneqq$ n * 4 # quadruple regrowth capacity
            evolve $\coloneqq$ 0 |$\nChoice$| evolve $\coloneqq$ 1 # Hydra's move |\label{line:hydra-move2}|
        subtree $\coloneqq$ getSubtree(hydra, parent) |\label{line:subtree-shape}| # find the place to grow heads
        hydra $\coloneqq$ growNewHeads(n - 1, grandparent, subtree) |\label{line:head-growth}| # grow new heads
\end{lstlisting}
\caption{The Hydra Game. $s_1 \pChoice{1/2} s_2$ is a probabilistic choice between statements $s_1$ and $s_2$: 
the program transitions to $s_1$ or $s_2$ with probability $1/2$ each. 
$s_1 \nChoice s_2$ is a nondeterministic choice: the program transitions to $s_1$ or $s_2$ nondeterministically.}
\label{fig:hydra-game}
\vspace{-12pt}
\end{figure}

Just like the original \cite{KirbyParis1982}, our stochastic variant is a two player-game between Hercules and the Hydra. 
The Hydra is a finite rooted tree.
A \emph{head} of the Hydra is a leaf together with the edge connecting the leaf to the tree. Naturally, the Hydra can have multiple heads.

Each round of the game begins with Hercules chopping off one of the Hydra's heads.
In the traditional game, the Hydra responds by growing two new heads and ending the round.
Our variant is a little different.
First, our game maintains a number \codeStyleText{n}, initially \codeStyleText{4}, that measures the Hydra's head growth capabilities.
Additionally, our Hydra can (try to) improve its chances by attempting to \emph{evolve} several times. 
Evolution is risky: with probability $1/2$, it causes the Hydra to implode, instantly ending the game in Hercules' favour.
However, if successful, it quadruples the Hydra's growth capacity \codeStyleText{n}.
After (possibly many) successful evolution(s), the Hydra instantly grows new heads in the following way:
if the grandparent node \codeStyleText{grandParent} of the leaf chopped of by Hercules exists,
$\codeStyleMath{n} - 1$ smaller hydras are grown beneath \codeStyleText{grandParent}, with 
each baby hydra taking the shape of the remaining subtree rooted at the parent of the leaf that was chopped off.
Hercules now picks and chops off another head, moving the game onto its next round.

The game is described in greater detail in \codeRef{fig:hydra-game}.
See \cref{fig:hydra-game-representation} for an illustration of a move.
Notice that the Hydra cannot evolve or grow new heads if the leaf removed by Hercules had no grandparent. 
%
\begin{figure}[t]
    \scalebox{1}{\tikzfig{hydra-game}}
    \caption{\emph{A round in the Hydra game.} The curved edges represent potential intermediate nodes. The head Hercules targets is filled in black, its parent is marked \textcolor{red}{red}, and its grandparent \textcolor{blue}{blue}. The remaining subtree from the parent is shaded in \textcolor{red}{red}, and is duplicated at the end of the round. Here, the Hydra hasn't evolved, and hence only 3 new heads are grown. Observe that the subtree shaded \textcolor{blue-green}{blue green} is entirely unaffected.}
    \label{fig:hydra-game-representation}
\end{figure}

By fixing Hercules's strategy to any recursive function and considering the nondeterministic choices at Lines ~\ref{line:hydra-move1} and ~\ref{line:hydra-move2} (in \codeRef{fig:hydra-game})
demonically,
the progression of this game becomes the execution of a probabilistic program. 
The fact that this program has a \emph{finite expected runtime} for every possible nondeterministic scheduler (i.e., is $\PAST$) is observable from two facts: 
one, the deterministic hydra game only has a finite number of rounds \cite{KirbyParis1982}, and two: each round, in expectation, only takes a constant amount of time. 

Our goal in this section is to illustrate the apparatus required to prove that the game is $\PAST$.
Termination is usually demonstrated through \emph{ranking functions}.
In the original, deterministic, Hydra game, 
there is in fact a ranking function of the form discussed by \citet{Francez86} and \citet{manna1974mathematical} mapping program states to natural numbers tracking the upper bound on the remaining length of the game. 
This is because the supremum of the game's length from every state (varied over the strategies employed by Hercules) is always finite, in spite of the ordinals necessary to show this.
Unfortunately, because of nondeterministic choices, our variant does not have an upper bound on the expected runtime independent of the scheduler.

Prior work in proving $\BAST$ for probabilistic programs \cite{chatterjee2017termination, FioritiH15} uses
\emph{ranking supermartingales}, a generalization of ranking functions.
A ranking supermartingale maps program states to real values in such a way that in expectation, the function
strictly decreases by at least some minimum amount at each execution step.
Ranking supermartingales form a sound and complete proof rule for $\BAST$.
Unfortunately, we show that, despite a finite expected run time, we cannot find such a function for the stochastic Hydra game.
Indeed, we show that a termination argument for the Hydra game must use transfinite ordinals.



We begin by introducing, following \citet{KirbyParis1982}, a useful mapping $T$ from nodes in the Hydra to ordinals. 
The range of the mapping is $\varepsilon_0$, the smallest solution to the ordinal equation $x = \omega^x$.

\begin{definition}[Ordinal mapping of nodes in the Hydra]
    Let $\codeStyleMath{hydra} = (V, E)$ be a finite tree. Define the mapping $T : V \to \varepsilon_0$ with following properties:
    \begin{itemize}
        \item For every leaf node $v \in V$, $T(v) = 0$
        \item For every internal node $v \in V$ with children $v_1, v_2, \ldots v_m$ listed in decreasing order of the ordinals assigned to them by $T$,
        $$
        T(v) = \sum_{i = 1}^m \omega^{T(v_i)}
        $$
        In other words, $T(v)$ is the \emph{natural sum} of all $\omega^{T(v')}$ over all children $v'$ of the node $v$.
    \end{itemize}
\end{definition}

In each round of the Hydra game, if the Hydra survives, 
the ordinal assigned to the root of the Hydra by $T$ always reduces \cite{KirbyParis1982}. 
This is despite the increments to the regeneration capacity enabled by evolution.

In this work, we attempt to generalize ranking arguments to our setting.
We want to find ranking functions whose range are the ordinals such that they decrease in expectation in each step,
and only terminal states are given rank zero.
Since the ordinals are well-founded, this decrease in rank resembles the expected remaining length of execution. 
One could imagine that perhaps the ordinals are unnecessary and there is a clever encoding into existing ranking arguments, like the ones
by \citet{chatterjee2017termination}.
Unfortunately, we show that the naturals (or even the reals) cannot serve as an appropriate range for functions that 
guarantee an expected decrease of $1$ in each step.

Consider a starting point of a simple line Hydra of length 2 
that, after $n$ nondeterministic evolution steps, grows $4^n - 1$ new 
heads with probability $1/2^{n+1}$.
Note that the Hydra can no longer evolve or grow heads from this state, and
the game must hence be played for exactly $4^n + 1$ more steps to terminate.
%
%
%
Suppose there is a ranking supermartingale that assigns to the line Hydra a natural (or real) number $m$.
This function must necessarily assign to the new Hydra a value of greater than $4^n$.
Our requirements on the ranking function now imply that
$$
m \geq \frac{1}{2^{n + 1}} \times 4^n \implies m \geq 2^{n - 1}
$$
By engineering a sufficiently large value of $n$, the Hydra can invalidate this inequality. 
Hence, neither the naturals nor the reals can serve as a sufficient co-domain of the desired ranking function. 
However, the infinite ordinal $\omega$ is an excellent choice of rank for the line hydra.

Allowing ordinals in the ranking function creates a new challenge. 
What must be the rank of a state that, with some probability $0<p < 1$, can reach a state of rank $\omega$? 
For simplicity, we set the following additional requirement on our ranking functions: 
if, in one round, the game can reach a state with ordinal rank $x$ with positive probability $p$, 
then the source state must be ranked above $x$. 

With this additional property, we claim that \emph{the smallest appropriate ranking function for the Hydra game agrees with $T$ at all ordinal outputs.} 
This is because
$T$ assigns to the root the smallest ordinal greater than all ordinals reachable in a single step. 
We formalize this in \cref{lemma:ordinal-maintanence-property}.

\begin{lemma}
\label{lemma:ordinal-maintanence-property}
    From any Hydra $H$ with root node $r$ with $T(r) \geq \omega$, one can reach, in one step and with non-zero probabilities, an infinite sequence of hydras $H_1, H_2, \ldots$ with roots $r_1, r_2, \ldots$ such that the smallest ordinal larger than $T(r_1), T(r_2), \ldots$ is $T(r)$.
\end{lemma}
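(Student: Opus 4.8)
The statement packages two facts: that $T(r)$ strictly dominates every root ordinal reachable in a single round, and that this domination is tight — reachable root ordinals (via rounds taken with positive probability) come arbitrarily close to $T(r)$ from below, or attain its predecessor when $T(r)$ is a successor.

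\textbf{Upper bound (tightness from above).} Following \citet{KirbyParis1982}, one round strictly decreases $T$ of the root. If Hercules chops a head $l$ whose parent $p$ has a grandparent $g$, the round replaces the child $p$ of $g$ --- contributing $\omega^{T(p)}$ to $T(g)$ --- by a copy $p^{-}$ of $p$ with $l$ removed, together with $\codeStyleText{n}-1$ further copies of $p^{-}$, i.e.\ by the term $\codeStyleText{n}\cdot\omega^{T(p^{-})}$. As $T(p^{-})<T(p)$, we get $\codeStyleText{n}\cdot\omega^{T(p^{-})}<\omega^{T(p^{-})+1}\le\omega^{T(p)}$ for \emph{every} finite value of $\codeStyleText{n}$, regardless of how many evolutions occurred; so $T(g)$ strictly decreases, and by monotonicity of the natural sum this propagates up to $r$. (If $p$ is the root, the round just deletes $l$, which also strictly decreases $T(r)$.) Hence every one-step-reachable hydra $H'$ has $T(\mathrm{root}(H'))<T(r)$, so the least ordinal strictly above all such values is $\le T(r)$.

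\textbf{Lower bound (tightness from below).} I would prove by induction on the ordinal $T(r)$ that the root ordinals of hydras reachable with positive probability are cofinal in $T(r)$ (when $T(r)$ is a limit) or include $T(r)-1$ (when $T(r)$ is a successor), strengthening the induction with the clause: \emph{when $T(r)$ is a limit, every witnessing round chops a head whose parent is not $r$}. Write $T(r)=\omega^{\alpha_1}+\dots+\omega^{\alpha_m}$ in Cantor normal form ($\alpha_1\ge\dots\ge\alpha_m$) and let $v$ be a child of $r$ with $T(v)=\alpha_m$.
\begin{itemize}
  \item If $T(r)$ is a successor then $\alpha_m=0$, so $v$ is a leaf child of $r$; chopping $v$ has no grandparent and merely deletes $v$, producing a hydra of root ordinal $T(r)-1$ (with probability $1$). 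Take $H_1=H_2=\dots$ equal to it.
  \item If $T(r)$ is a limit and $\alpha_m=\gamma+1$, then $v$ has a leaf child; chopping it (with $r$ as grandparent) and evolving $k$ times --- probability $2^{-k}>0$, making $\codeStyleText{n}$ equal to $n_0\cdot 4^k$, where $n_0$ is its value entering the round --- replaces $v$ by $n_0 4^k$ children of ordinal $\gamma$, yielding root ordinal $\omega^{\alpha_1}+\dots+\omega^{\alpha_{m-1}}+n_0 4^k\cdot\omega^{\gamma}$. These are $<T(r)$ and cofinal in $\omega^{\alpha_1}+\dots+\omega^{\alpha_{m-1}}+\omega^{\gamma+1}=T(r)$, and each chopped head has parent $v\ne r$.
  \item If $T(r)$ is a limit and $\alpha_m$ is a limit, then $T(v)=\alpha_m$ is a limit ordinal $\ge\omega$ with $\alpha_m<\omega^{\alpha_m}\le T(r)$ (recall $T$ maps into $\varepsilon_0$), so the induction hypothesis applies to the sub-hydra rooted at $v$: it yields rounds --- each chopping a head whose parent is not $v$, hence whose grandparent lies strictly inside $v$'s subtree --- that make $T(v)$ take values cofinal in $\alpha_m$. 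Since these rounds never touch $v$'s parent, they act on $H$ exactly as on the sub-hydra and leave $v$'s siblings intact, so the new root ordinal is $\omega^{\alpha_1}+\dots+\omega^{\alpha_{m-1}}+\omega^{T(v)}$; these are $<T(r)$ and cofinal in $T(r)$. Each such head has parent a proper descendant of $v$, hence $\ne r$, so the strengthening is preserved.
\end{itemize}
In each case the witnessing hydras are already indexed by $k\in\mathbb{N}$, or by the $\mathbb{N}$-indexed family from the recursive call, or form a constant sequence, so together with the upper bound we obtain an infinite sequence $H_1,H_2,\dots$ whose root ordinals have $T(r)$ as their least strict upper bound.

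\textbf{Main obstacle.} The crux is the limit/limit case: a round the induction hypothesis produces for the sub-hydra at $v$ must have \emph{identical} effect inside $H$, which fails exactly if it chops a direct child of $v$ --- then the grandparent is $r$, and regrowth spills into $v$'s siblings. The invariant ``in the limit case, witnessing rounds chop a head whose parent is not the root'' is what excludes this, and one must check it survives the induction: the successor clause (which does chop a child of the root) is invoked only when $T(r)$ is a successor; the first limit clause chops a grandchild of the root; and the second limit clause inherits the invariant from the recursive call. Verifying that each evolution pattern has positive probability and that the ordinal arithmetic (cofinality of $n_0 4^k\cdot\omega^\gamma$ in $\omega^{\gamma+1}$, and of $\omega^{T(v)}$ in $\omega^{\alpha_m}$) is correct is then routine.
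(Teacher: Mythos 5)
Your proof is correct and follows the same overall strategy as the paper's: transfinite induction on $T(r)$, with a direct construction when the relevant child of the root is ``shallow'' and an appeal to the induction hypothesis on a sub-hydra when it is ``deep.'' The case split is organized differently (you work from the least exponent $\alpha_m$ of the Cantor normal form and distinguish successor/limit, whereas the paper works from the child of largest rank $x$ and distinguishes $T(x)<\omega$ from $T(x)\ge\omega$), but the constructions coincide: arbitrarily many evolutions drive $n\cdot\omega^{\gamma}$ cofinally up to $\omega^{\gamma+1}$ in the shallow case, and the recursive call supplies cofinality in the deep case.

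Where you genuinely improve on the paper is the strengthened induction invariant (``when $T(r)$ is a limit, witnessing rounds chop a head whose parent is not the root''). The paper's deep case simply says ``apply the induction hypothesis on the subtree rooted at $x$'' and asserts the resulting full hydras ``differ only at the $\omega^{y_i}$ term.'' That assertion is not literally true for every move the paper's own recursion can produce: its Case 1 with $b\neq 0$ includes a move chopping a direct child of the sub-root, and when such a move is lifted to the full hydra the grandparent becomes $r$, so regrowth spills into the siblings of $x$ and the effect is no longer confined to the $\omega^{y_i}$ term. (One can check the conclusion still survives, since those stray values remain below $T(r)$ and the other moves already supply cofinality, but the paper does not argue this.) Your invariant excludes exactly this situation, and your check that each clause preserves it is the right bookkeeping. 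The remaining ordinal arithmetic (strict decrease of $T$ under every round, cofinality of $n\cdot\omega^{\gamma}$ in $\omega^{\gamma+1}$ and of $\omega^{\beta}$ for $\beta<\alpha_m$ in $\omega^{\alpha_m}$, well-foundedness of the recursion via $\alpha_m<\omega^{\alpha_m}\le T(r)$ below $\varepsilon_0$) is correct as you state it.
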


Hence, the smallest appropriate ranking function for our requirements is $T$, 
indicating that at the very least, all ordinals under $\varepsilon_0$ are needed to reason about the expected runtime 
of programs with both nondeterministic and probabilistic operators. 
In \cref{sec:proofs}, we see that ordinals up to the Church-Kleene ordinal $\CK$ are needed to reason about general probabilistic programs.
We include the proof of \cref{lemma:ordinal-maintanence-property} in \cref{sec:hydra-proof} for completeness.

In summary, our proof rule for proving $\PAST$ has three ingredients: a \emph{normal form} for programs in which
every probabilistic choice is of the form
${\small \codeStyleMath{skip}\ \oplus_{1/2} \codeStyleMath{exit} }$
(which, fortunately, the Hydra is already in),
an ordinal-valued ranking function (like the function $T$ above), and 
a proof that the rank decreases in an expected finite number of steps
despite nondeterminism (a $\BAST$ property, for which sound and complete proof rules exist).
Putting them together, we can argue that the stochastic Hydra is $\PAST$: the rank decreases in an expected finite number of steps
and the rank decreases a finite number of times until termination.
As we show later, the proof rule is semantically sound and complete, but the normal form is essential before it can be applied.

\section{Probabilistic Programs and their Termination}
\label{sec:prelims}

We now define the program model and the various notions of termination.

\subsection{Program Model}
\label{subsec:program-model}

The program model we employ is a straightforward nondeterministic extension of the language described by \citet{KaminskiKM19}. 
The syntax mirrors $\pGCL$, an extension of Dijkstra's Guarded Command Language ($\mathsf{GCL}$, \cite{dijkstra1976discipline}) that adds binary
probabilistic and nondeterministic choice operators.

\begin{definition}[Syntax of $\pGCL$]
    \label{def:syntax-pgcl}
    Let $\mathsf{Var}$ be a countable set of variable symbols. Programs in $\pGCL$ obey the grammar:
    $$
    \begin{aligned}
        \codeStyleMath{Prog} \Coloneqq \bot \mid v \coloneqq e \mid \codeStyleMath{Prog; Prog} \mid \codeStyleMath{Prog} \pChoice{p} \codeStyleMath{Prog} \mid \codeStyleMath{Prog} \nChoice \codeStyleMath{Prog} \mid \codeStyleMath{while}(b) \{\, \codeStyleMath{Prog};\, \}
    \end{aligned}
    $$
    where $v \in \mathsf{Var}$, $e$, $p$, and $b$ are arithmetical and boolean expressions over $\mathsf{Var}$, $\pChoice{p}$ is a \emph{probabilistic choice operator}, and $\nChoice$ is a \emph{nondeterministic choice operator}.
\end{definition}

$\bot$ here is the empty program. We omit the usual $\codeStyleMath{exit}$, $\codeStyleMath{skip}$, and $\codeStyleMath{if}$ structures for brevity, as they can easily be simulated in the mentioned syntax. Note the binary branching at probabilistic and nondeterministic operators.

In order to describe our semantics for $\pGCL$ programs, we need to formalize the notion of \emph{the scheduler}. Informally, a scheduler maps execution histories to actions at nondeterministic points in the program. Since the execution of $\pGCL$ programs can be uniquely determined from the sequence of decisions made at probabilistic and nondeterministic locations, we present the following more useful non-standard (but equivalent) definition for schedulers:
\begin{definition}[Scheduler]
    \label{def:scheduler}
    Let $\Sigma_n = \{L_n, R_n\}$ and $\Sigma_p = \{L_p, R_p\}$. A scheduler is simply a \emph{total} mapping from $(\Sigma_n \cup \Sigma_p)^* \to \Sigma_n$. Here, the alphabets $\Sigma_n$ and $\Sigma_p$ represent the \emph{Left} and \emph{Right} directions available at \emph{nondeterministic} and \emph{probabilistic} operators respectively.
\end{definition}

The following operational semantics for $\pGCL$ programs extends those of \citet{KaminskiKM19} with consideration for nondeterministic choice.
\begin{definition}[Semantics of $\pGCL$]
    \label{def:semantics-pgcl}
    Declare the following notations:
    \begin{itemize}
    \item $\setOfVariableValuations \triangleq \{ \eta \mid \eta : \setOfVariables \to \setOfRationals\}$ is the set of all possible \emph{variable valuations}.
    \item $\setOfPrograms$ is the collection of all programs derivable in the grammar specified in \cref{def:syntax-pgcl}.
    \item $\setOfSchedules$ is the set of all schedulers (defined in \cref{def:scheduler}).
    \item $\setOfExecutionStates \triangleq \setOfPrograms \times \setOfVariableValuations \times (\setOfPositiveRationals \cap [0, 1]) \times \{L_n, R_n, L_p, R_p\}^*$ is the set of all \emph{execution states}.
    \end{itemize}

    Additionally, let $\evaluationUnderContext{e}{\eta}$ and $\evaluationUnderContext{b}{\eta}$ be the evaluations of the arithmetical and boolean expressions $e$ and $b$ under the variable valuation $\eta \in \setOfVariableValuations$. The operational semantics of $\pGCL$ programs under a scheduler $f \in \setOfSchedules$ is defined by the smallest relation $\oneStepExec{f} \subseteq \setOfExecutionStates \times \setOfExecutionStates$ that complies with the inference rules illustrated in \cref{fig:semantics}.
    Furthermore, we define the transitive extensions $\nStepExec{f}{n}$ and $\starStepExec{f}$ by setting $\nStepExec{f}{1} \;\triangleq\; \oneStepExec{f}$ and for all $n \in \setOfNaturals$,
    $$
    (\sigma, \sigma') \in \;\nStepExec{f}{n+1} \Longleftrightarrow \exists \tau \in \setOfExecutionStates \cdot (\sigma, \tau) \in\; \nStepExec{f}{n} \land\; (\tau, \sigma') \in\; \oneStepExec{f}
\quad 
    \mbox{ and } \quad
    \starStepExec{f}\;\triangleq \bigcup_{i \in \setOfNaturals} \nStepExec{f}{i}
    $$

\begin{figure}[t]
\small
\[
\begin{array}[t]{@{}c@{}}
\inferrule[assign]{}
{
        (v \coloneqq e, \eta, a, w) \oneStepExec{f} (\bot, \eta[v \mapsto \evaluationUnderContext{e}{\eta}], a, w)
}\\[5mm]
\inferrule[concat1]{
        (P_1, \eta, a, w) \oneStepExec{f} (P'_1, \eta', a', w')
}{
        (P_1; P_2, \eta, a, w) \oneStepExec{f} (P'_1; P_2, \eta', a', w')
}\quad\quad\quad
\inferrule[concat2]{\ }{
        (\bot; P_2, \eta, a, w) \oneStepExec{f} (P_2, \eta, a, w)
}\\[5mm]
\inferrule[prob1]{
        \evaluationUnderContext{p}{\eta} \leq 0
}{
        (P_1 \pChoice{p} P_2, \eta, a, w) \oneStepExec{f} (P_2, \eta, a, w \cdot R_p)
}\quad\quad\quad
\inferrule[prob2]{
        \evaluationUnderContext{p}{\eta} \geq 1
}{
        (P_1 \pChoice{p} P_2, \eta, a, w) \oneStepExec{f} (P_1, \eta, a, w \cdot L_p)
}\\[5mm]
\inferrule[prob3]{
        0 < \evaluationUnderContext{p}{\eta} < 1
}{
        (P_1 \pChoice{p} P_2, \eta, a, w) \oneStepExec{f} (P_1, \eta, a \times \evaluationUnderContext{p}{\eta}, w \cdot L_p)
}\quad\quad\quad 
\inferrule[prob4]{
        0 < \evaluationUnderContext{p}{\eta} < 1
}{
        (P_1 \pChoice{p} P_2, \eta, a, w) \oneStepExec{f} (P_2, \eta, a \times (1 -\evaluationUnderContext{p}{\eta}), w \cdot R_p)
}\\[5mm]
\inferrule[nondet1]{
        f(w) = L_n
}{
        (P_1 \nChoice P_2, \eta, a, w) \oneStepExec{f} (P_1, \eta, a, w \cdot L_n)
}\quad\quad\quad
\inferrule[nondet2]{
        f(w) = R_n
}{
        (P_1 \nChoice P_2, \eta, a, w) \oneStepExec{f} (P_2, \eta, a, w \cdot R_n)
}\\[5mm]
\inferrule[loop1]{
        \evaluationUnderContext{b}{\eta} = 1
}{
        (\codeStyleMath{while}(b) \{ P; \} , \eta, a, w) \oneStepExec{f} (P; \codeStyleMath{while}(b) \{ P; \}, \eta, a, w)
}\quad\quad\quad
\inferrule[loop2]{
        \evaluationUnderContext{b}{\eta} = 0
}{
        (\codeStyleMath{while}(b) \{ P; \} , \eta, a, w) \oneStepExec{f} (\bot, \eta, a, w)
}
\end{array}
\]
\caption{Semantics of $\pGCL$}
\label{fig:semantics}
\end{figure}

\end{definition}
In \cref{def:semantics-pgcl}, $\setOfNaturals$ and $\setOfRationals$ are standard denotations for the sets of natural and rational numbers respectively. The operation $\eta[v \mapsto \evaluationUnderContext{e}{\eta}]$ is standard; it refers to the variable valuation that agrees with $\eta$ on all variables in $\setOfVariables \setminus \{v\}$ and assigns to $v$ the value $\evaluationUnderContext{e}{\eta}$. 
Like \citet{dijkstra1976discipline}, we restrict the range of values available to variables to the rationals; this avoids measure-theoretic apparatus that would be required otherwise \cite{BS78,TakisakaOUH21}.
Our semantics extends that of \citet{KaminskiKM19} in remembering the decisions made at nondeterministic execution states in addition to 
the branching at probabilistic states. 
This additional information facilitates compliance with the scheduler $f$.

To clarify later definitions, we distinguish the notion of the \emph{program state} from the execution state.
\begin{definition}[Program states]
    \label{def:program-states}
    A program state is simply a pair of $\pGCL$ program $P$ and a variable valuation $\eta$. The set of all program states $\setOfProgramStates$ is hence simply $\setOfPrograms \times \setOfVariableValuations$.
\end{definition}

For a fixed program $P \in \setOfPrograms$, we denote the \emph{initial program state} $(P, \eta_0)$ by $\initialProgramState{P}$, and the  \emph{initial execution state} $(P, \eta_0, 1, \varepsilon)$ by $\initialExecutionState{P}$. Here, $\varepsilon$ is the empty word in the language $(\Sigma_p \cup \Sigma_n)^*$ and unless otherwise specified, the initial variable valuation $\eta_0$ maps all variables in $\setOfVariables$ to $0$. Furthermore, every execution state of the form $(\bot, \eta, p, w)$ and program states of the form $(\bot, \eta)$ are said to be \emph{terminal}.
\begin{definition}[Execution tree]
    \label{def:execution-tree}
    Let $P$ be a $\pGCL$ program, $f$ be a scheduler, and let $\eta_0$ be some fixed initial variable valuation. Denote the \emph{initial execution state} $(P, \eta_0, 1, \varepsilon)$ by $\initialExecutionState{P}$. The execution tree of the program $P$ under the scheduler $f$ is the subgraph of $(\setOfExecutionStates, \oneStepExec{f})$ over the vertices
    $\{\sigma \in \setOfExecutionStates \mid \initialExecutionState{P} \starStepExec{f} \sigma \}$
\end{definition}
Our semantics ensures that the execution tree is a tree rooted at $\initialExecutionState{P}$. 
Incidentally, \cref{def:semantics-pgcl} agrees with the semantics defined in Definition 4 of \citet{KaminskiKM19}
for programs in $\setOfPrograms$ that do not use the nondeterministic choice operator $\nChoice$.

\subsection{Notions of termination}
\label{subsec:termination-definitions}

In this subsection, we formalize the various notions of termination motivated in \cref{sec:intro}. We begin with two necessary projection operations.
\begin{definition}
    $\executionStateProjectionProbability$ and $\executionStateProjectionHistory$ are total functions from the set of execution states $\setOfExecutionStates$ that satisfy
    $$
    \executionStateProjectionProbability((\_, \_, p, \_)) = p \qquad \text{and} \qquad \executionStateProjectionHistory((\_,\_,\_,w)) = w
    $$
    Here, $\_$ is shorthand for any arbitrary value.
\end{definition}

We now turn to termination probabilities.
Unlike deterministic programs, the probabilities of termination of $\pGCL$
programs depend on the scheduler employed to resolve non-determinism.
It is quite possible for a program to fully terminate under one
scheduler and run forever under another.

\begin{definition}[Termination Probability]
    \label{def:term-prob}
    Let $\terminalStateSet$ be a function that takes in a program state $\sigma = (P, \eta)$ and a scheduler $f$ and returns the set of \emph{terminal execution states} reachable from the corresponding initial execution state $\sigma_e = (P, \eta, 1, \varepsilon)$ under the scheduler $f$. Thus,
    $$
    \terminalStateSet(\sigma, f) \triangleq \{(\bot, \eta, p, w) \in \setOfExecutionStates \mid \sigma_e \starStepExec{f} (\bot, \eta, p, w) \}
    $$
    \emph{Termination probability} is a function that takes in a program state $\sigma$ and a scheduler $f$ and returns the probability of the termination of the execution initialized at $\sigma$ under $f$ by adding up the probabilities of states in $\terminalStateSet(\sigma, f)$:
    $$
    \termProb(\sigma, f) = \sum_{\sigma' \in \terminalStateSet(\sigma, f)} \executionStateProjectionProbability(\sigma')
    $$
\end{definition}

We now define the set $\AST$ that we motivated in \cref{sec:intro}.


\begin{definition}[Almost-sure termination]
    \label{def:nAST}
    $\AST$ (short for \emph{Almost-Surely Terminating}) is the set of all $\pGCL$ programs $P$ that yield a termination probability of $1$ from their initial states $\initialProgramState{P}$ under every possible scheduler $f \in \setOfSchedules$, i.e.,
    $$
    \AST = \{P \in \setOfPrograms \mid \sforall f \in \setOfSchedules \cdot \termProb(\initialProgramState{P}, f) = 1\}
    $$
\end{definition}
The symbol $\sforall$ indicates that $f$ is a second-order variable. This is necessary because the set $\setOfSchedules$ is not a countable entity. We will return to this detail in \cref{sec:proofs}.
  

Before we discuss the other notions of termination motivated in \cref{sec:intro}, we present
definitions for expected runtime.
We extend a useful presentation motivated by \citet{FioritiH15}: the
expected runtime is the sum of the infinite series of the probabilities of
surviving beyond $n$ steps.

\begin{definition}[Expected runtime]
    \label{def:exp-runtime}
    Let $\terminalStateSetInStep{\leq k}$ be a function that takes as input a program state $\sigma = (P, \eta)$ and a scheduler $f$ and returns the set of all terminal states reachable in $\leq k$ steps from the corresponding execution state $\sigma_e = (P, \eta, 1, \varepsilon)$ under the scheduler $f$:
    $$
    \terminalStateSetInStep{\leq k}(\sigma, f) = \{(\bot, \eta, p, w) \in \setOfExecutionStates \mid \exists n \in \setOfNaturals \cdot n \leq k \land \sigma_e \nStepExec{f}{n} (\bot, \eta, p, w) \}
    $$
    The \emph{expected runtime} from a program state $\sigma$ under the scheduler $f$ is the sum
    $$
    \expRuntime(\sigma, f) \triangleq \sum_{k \in \setOfNaturals} \left(1 - \sum_{\sigma' \in \terminalStateSetInStep{\leq k}(\sigma, f)}\executionStateProjectionProbability\left(\sigma'\right)\right)
    $$
\end{definition}
Observe that, as in the case of deterministic programs, the expected runtime can diverge.

We now present two notions: that of \emph{positive almost-sure termination} and \emph{bounded termination}. Positive almost-sure termination, introduced in \citet{BournezG05} and refined in \citet{FioritiH15}, describes programs that yield finite (meaning converging) expected runtimes under all schedulers. This finiteness property is captured by the existence of an upper bound on the series described in \cref{def:exp-runtime}.
\begin{definition}
    \label{def:nPAST}
    The set $\PAST$ contains precisely the $\pGCL$ programs that expect to terminate in a finite amount of time under any schedule, i.e.,
    $$
    \PAST \triangleq \left\{P \in \setOfPrograms \mid \sforall f \in \setOfSchedules \; \exists n \in \setOfNaturals \cdot \expRuntime(\initialProgramState{P}, f) < n \right\}
    $$
    As with $\AST$, the initial state $\initialProgramState{P}$ maps all variables in $\setOfVariables$ to $0$.
\end{definition}
The notion of bounded termination (introduced in
\citet{chatterjee2017termination}) is obtained by swapping the
positions of the quantifiers in \cref{def:nPAST}.

\begin{definition}
    \label{def:nbPAST}
    The set $\BAST$ contains precisely the $\pGCL$ programs that possess a finite upper bound over the expected runtimes across all schedules, i.e.,
    $$
    \nbPAST \triangleq \left\{P \in \setOfPrograms \mid \exists n \in \setOfNaturals \; \sforall f \in \setOfSchedules \cdot \expRuntime(\initialProgramState{P}, f) \leq n \right\}
    $$
\end{definition}

In the following sections, we study the decision problems $\AST$, $\PAST$, and $\BAST$, which ask: given a $\pGCL$ program $P$, is $P\in \AST$ (respectively $P \in \PAST$ and $P\in \BAST$)? Note that the variants of these problems without nondeterministic choice have already been explored by \citet{KaminskiKM19}.

\subsection{Recursion-theoretic preliminaries}

In order to precisely characterize the complexities of these decision problems, we need to introduce the \emph{arithmetical} and \emph{analytical} hierarchies of undecidability. Informally, these hierarchies describe increasingly undecidable problems by linking each problem to arithmetical formulas in first and second-order logic.
We only present relevant definitions here; for a full discussion of the properties of these hierarchies, see \citet{Rogers} and \citet{Kozen06}.

\begin{definition}[Arithmetical Hierarchy]
    \label{definition:arithmetical-hierarchy}
    Let $\calM_n$ be the set of all total Turing machines characterizing a subset of $\setOfNaturals^n$. For each natural number $n \geq 1$, the family of sets $\Sigma^0_n$ contains the set $L \subseteq \setOfNaturals$ \emph{iff} there exists a machine $M_L \in \calM_{n + 1}$ such that
    $$
    L = \left\{x \in \setOfNaturals \mid \exists y_1 \in \setOfNaturals\; \forall y_2 \in \setOfNaturals\; \cdots\, Q_n y_n \in \setOfNaturals \cdot M_L\left(x, y_1, \ldots y_n\right) = 1 \right\}
    $$
    where $Q_n$ is universal if $n$ is even and existential otherwise. Additionally, define $\Pi^0_n$ as the collection of sets $L \subseteq \setOfNaturals$ such that $(\setOfNaturals \setminus L) \in \Sigma^0_n$.

    The collections of sets $\left\{\Sigma^0_n\right\}$ and $\left\{\Pi^0_n\right\}$ form the \emph{Arithmetical Hierarchy} and any set $L \in \Sigma^0_n$ (or $L \in \Pi^0_n$) is said to be \emph{arithmetical}.
\end{definition}
\begin{definition}[Analytical Hierarchy]
    \label{definition:analytical-hierarchy}
    Let $\calM^m$ be the set of all total \emph{oracle} Turing machines with access to $m$ oracles, each characterizing a total function of the form $\setOfNaturals \to \setOfNaturals$. For each natural $n \geq 1$, call $\Sigma^1_n$ the collection of sets $L \subseteq \setOfNaturals$ with the property that each $L$ is associated with an $M_L \in \calM^n$ and
    $$
    L = \left\{x \in \setOfNaturals \mid \sexists f_1 \in \setOfNaturals \to \setOfNaturals \; \sforall f_2 \in \setOfNaturals \to \setOfNaturals \cdots\, \sQ_n f_n \in \setOfNaturals \to \setOfNaturals\; Q_{n+1} y \in \setOfNaturals \cdot M_L^{f_1, f_2, \ldots f_n}(x, y) = 1 \right\}
    $$
    Here, $M_L$ has oracle access to the functions $f_1, \ldots f_n$ and the quantifier $\sQ_n$ (and $Q_{n+1}$) is universal (resp. existential) if $n$ is even and existential (resp. universal) otherwise. The doubled symbols $\sforall$, $\sexists$, and $\sQ_n$ are \emph{second-order} quantifiers and the final quantifier $Q_{n + 1}$ is first-order. Let $\Pi^1_n$ be the collection of sets $L \subseteq \setOfNaturals$ with $(\setOfNaturals \setminus L) \in \Sigma^1_n$.

    The collections $\left\{\Sigma^1_n\right\}$ and $\left\{\Pi^1_n\right\}$ form the \emph{Analytical Hierarchy}. Any set $L \in \Sigma^1_n$ (or $L \in \Pi^1_n$) is said to be a (lightface) \emph{analytical} set. 
\end{definition}
The specific classes $\Sigma^1_1$ and $\Pi^1_1$ are referred to as the (lightface) \emph{analytic} and \emph{co-analytic} sets respectively.
It can be shown that both the Arithmetical and Analytical hierarchies are strict. Note that \cref{definition:analytical-hierarchy} details a \emph{normal form} for the Analytical hierarchy. In general, there can be arbitrarily many first-order variables after $\sQ_n$; sets defined in this way can always be redefined in the normal form \cite{Rogers}. Notice the implication that the \emph{first levels} of the analytical hierarchy (i.e., $\Sigma^1_1$ and $\Pi^1_1$ sets) contain every arithmetical set.

The strictness of these hierarchies motivates notions of completeness for these complexity classes.
\begin{definition}[Completeness]
    For any $\Gamma \in \bigcup_{n \in \setOfNaturals} \left\{\Sigma^1_n, \Pi^1_n, \Sigma^0_n, \Pi^0_n\right\}$, a set $L \subseteq \setOfNaturals$ is said to be \emph{$\Gamma$-hard} if, for every $L' \in \Gamma$, there exists a recursive procedure that maps $L'$ to $L$ and $\setOfNaturals \setminus L'$ to $\setOfNaturals \setminus L$. Furthermore, $L$ is \emph{$\Gamma$-complete} if it is $\Gamma$-hard and $L \in \Gamma$.
\end{definition}

\section{The Complexity of Probabilistic Termination}
\label{sec:proofs}

\citet{KaminskiKM19} showed that the decision problems $\AST$ and $\BAST$ are arithmetical in a language without nondeterministic choice.
Their proof can be extended to the setting with nondeterministic choice:

\begin{proposition}
    \label{prop:bast-ast}
  The decision problem $\AST$ is $\Pi^0_2$-complete and
  $\BAST$ is $\Sigma^0_2$-complete.
 \end{proposition}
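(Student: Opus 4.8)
The plan is to lift the analysis of \citet{KaminskiKM19} from the deterministic $\pGCL$ fragment to the nondeterministic one, paying attention to the fact that the scheduler quantifier $\sforall f$ ranges over an uncountable set and must be tamed before the arithmetical classification survives. For the upper bounds I would first observe that, although $\setOfSchedules$ is uncountable, the behaviour of a scheduler on the execution tree of a fixed program $P$ is entirely determined by its restriction to the (countably many) reachable history words in $(\Sigma_n\cup\Sigma_p)^*$, and moreover --- this is the key reduction --- only a \emph{finite} prefix of each such word is needed to decide any given finite-depth event. Concretely, for $\BAST$ I would show $P\in\BAST$ iff $\exists n\,\forall \text{(finite partial scheduler }\pi)\,[\expRuntime$ restricted to $\pi\text{-consistent runs} \le n]$, and the inner predicate, being a statement about finitely many finite computations together with a convergence/boundedness condition on a computable monotone series, is $\Pi^0_1$; the outer $\exists n$ then places $\BAST$ in $\Sigma^0_2$. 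Symmetrically, $P\in\AST$ iff $\forall f\,\forall k\,\exists m\,[\termProb$ within $m$ steps under $f > 1 - 1/k]$; replacing $\forall f$ by a $\forall$ over finite partial schedulers (valid because reaching the $1-1/k$ threshold is a finite-horizon event) yields a $\forall\exists$ form, i.e.\ $\Pi^0_2$.

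\textbf{The partial-scheduler reduction.} The main obstacle, and the step I would spend the most care on, is making precise the claim that quantification over all (total, second-order) schedulers can be replaced by quantification over finite partial schedulers without changing membership in $\AST$ or $\BAST$. For $\AST$ one must argue: if every finite partial scheduler can be extended so that the termination probability exceeds $1-1/k$ within finitely many steps, then \emph{some single} total scheduler is still forced to terminate almost surely --- here one uses a König-style / compactness argument on the finitely-branching tree of partial schedulers, or more simply notes that $\AST$ fails iff \emph{some} total scheduler achieves termination probability $<1$, which by a limiting argument happens iff some finite partial scheduler forces termination probability bounded away from $1$ at every finite horizon --- and the latter is an arithmetical ($\Sigma^0_2$, hence its negation $\Pi^0_2$) condition. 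For $\BAST$ the analogous point is that $\expRuntime(\initialProgramState{P},f)$, written as $\sum_k (1-\Pr[\text{terminated by step }k])$, depends on $f$ only through finitely many decisions for each partial sum, so $\sup_f \expRuntime = \sup_\pi (\text{partial-run expected runtime})$ over finite partial schedulers $\pi$; boundedness of this supremum by $n$ is $\Pi^0_1$.

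\textbf{Lower bounds.} For hardness I would reduce from the known complete problems for deterministic probabilistic programs. \citet{KaminskiKM19} already establish $\Pi^0_2$-hardness of $\AST$ and $\Sigma^0_2$-hardness of $\BAST$ \emph{without} nondeterminism; since our $\pGCL$ syntax strictly contains theirs (a program with no occurrence of $\nChoice$ has exactly one scheduler and its semantics coincides with Definition~4 of \citet{KaminskiKM19}, as remarked after \cref{def:execution-tree}), the same reductions witness hardness here, and the reductions are recursive, giving the required many-one reductions mapping complements to complements. Together with the upper bounds this yields $\Pi^0_2$-completeness of $\AST$ and $\Sigma^0_2$-completeness of $\BAST$. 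The only genuinely new content over \citet{KaminskiKM19} is therefore the upper-bound argument showing the scheduler quantifier does not push these problems up into the analytical hierarchy --- which is exactly the contrast the paper wants to highlight with $\PAST$, where (as the abstract announces) it \emph{does}.
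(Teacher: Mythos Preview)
Your proposal is correct and follows essentially the same route as the paper: hardness is inherited verbatim from \citet{KaminskiKM19} via the deterministic fragment, and the upper bounds are obtained by replacing the second-order scheduler quantifier with a first-order quantifier over finite \emph{partial schedulers}, using a K\"onig-style argument on the finitely-branching scheduler tree to justify the swap $\sforall f\,\exists m \leadsto \exists m\,\forall(\text{partial }f)$ for $\AST$. The paper makes exactly this move (its \cref{lemma:finitely-many-marked-nodes} is your compactness step), so there is nothing to add.
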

We include the proof of \cref{prop:bast-ast} in \cref{sec:complexity-ast-bast} for completeness.
In contrast, we show that $\PAST$ is significantly harder.

\begin{theorem}
\label{th:PAST}
The decision problem $\PAST$ is $\Pi^1_1$-complete.
\end{theorem}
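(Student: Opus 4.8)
The plan is to establish both inclusions: $\PAST \in \Pi^1_1$ (the upper bound) and $\Pi^1_1$-hardness of $\PAST$ (the lower bound).

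\emph{Upper bound.} Recall $\PAST = \{P \mid \sforall f \in \setOfSchedules\, \exists n \in \setOfNaturals\cdot \expRuntime(\initialProgramState{P}, f) < n\}$. Since $\Sigma_n \cup \Sigma_p$ is finite, a scheduler is, after a fixed computable coding of finite words and directions, just a function $\setOfNaturals \to \setOfNaturals$, so the leading $\sforall f$ is a single second-order universal quantifier. The key structural fact is that $\oneStepExec{f}$ is finitely branching: every execution state has at most two $\oneStepExec{f}$-successors, and which they are is determined either deterministically or by a single value of $f$. Hence, for a fixed program $P$, the finite set $\terminalStateSetInStep{\leq k}(\initialProgramState{P}, f)$ and therefore the partial sum $S_K := \sum_{k \leq K}\big(1 - \sum_{\sigma' \in \terminalStateSetInStep{\leq k}(\initialProgramState{P}, f)} \executionStateProjectionProbability(\sigma')\big)$ is a rational number computable uniformly in $K$ with oracle $f$. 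As the summands are non-negative and non-increasing, $\expRuntime(\initialProgramState{P}, f) = \sup_K S_K$, so ``$\exists n\, \expRuntime < n$'' is equivalent to ``$\exists n\,\forall K\, S_K \leq n$'', a $\Sigma^0_2$ predicate relative to $f$ and in particular arithmetical in $f$. Thus $\PAST = \{P \mid \sforall f\, \theta(P,f)\}$ with $\theta$ arithmetical, which is $\Pi^1_1$ by the normal-form characterisation of the analytical hierarchy \cite{Rogers}.

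\emph{Lower bound: the reduction.} I would reduce from well-foundedness of recursive $\omega$-trees, a $\Pi^1_1$-complete problem \cite{Rogers,Kozen06} (equivalently, from termination of programs with countable nondeterministic choice \cite{Chandra78,AptP86}). Given a recursive prefix-closed $T \subseteq \setOfNaturals^{<\omega}$, build a \nameForNormalFormPrograms-form program $P(T)$ using only binary nondeterminism, as follows. $P(T)$ keeps a current node $u$ (initially the root) and repeats rounds. Each round first runs a \emph{choice gadget}: a loop whose body increments a counter $\codeStyleMath{c}$, performs one $\codeStyleMath{skip}\ \pChoice{1/2}\ \codeStyleMath{exit}$, and then uses $\codeStyleMath{flag}\coloneqq 0\ \nChoice\ \codeStyleMath{flag}\coloneqq 1$ to let the scheduler decide whether to iterate again; the scheduler can thus set $\codeStyleMath{c}$ to any $n \geq 1$, but the gadget is left without dying only with probability $2^{-n}$. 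If $u \cdot n \notin T$ the program exits; otherwise it executes a \emph{busy-work block} of length exactly $2^{\,n+1}$ times that of the previous round's block (via $\codeStyleMath{z}\coloneqq \codeStyleMath{z}\cdot 2^{\,n+1}$ followed by a down-counting loop on a fresh copy of $\codeStyleMath{z}$, with $\codeStyleMath{z}$ initially $1$), and then sets $u \coloneqq u\cdot n$. The map $T \mapsto P(T)$ is plainly recursive.

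\emph{Lower bound: correctness, and the main obstacle.} If $T$ is well-founded, take any scheduler $f$: either $f$ iterates some gadget forever, in which case the unique non-terminating branch has probability $\prod 2^{-1}=0$ and $\expRuntime(P(T),f)$ is bounded by a convergent geometric series; or $f$ completes every gadget it enters and thereby traces a branch of $T$, which must be finite, so the whole run of $P(T)$ is finite of some length $L_f$ and $\expRuntime(P(T),f) \leq L_f < \infty$. Either way $\expRuntime(P(T),f) < \infty$ for every $f$, so $P(T) \in \PAST$; crucially this uses only that $\PAST$ asks for per-scheduler finiteness, so the unboundedly long branches of a well-founded $T$ are harmless. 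Conversely, if $T$ is ill-founded, let $f$ follow an infinite branch: if that branch makes finitely many rounds, $P(T)$ thereafter runs a coin-free infinite simulation and the probability of not having terminated stays bounded below by a positive constant, so $\expRuntime(P(T),f)=\infty$; if it makes infinitely many rounds with picks $n_1,n_2,\ldots$, then after round $j$ the cumulative survival probability is $2^{-(n_1+\cdots+n_j)}$ while the round-$j$ busy-work block has length $2^{\,j+n_1+\cdots+n_j}$, so that block alone contributes at least $2^{\,j}$ to $\expRuntime(P(T),f)$, whence $\expRuntime(P(T),f)=\infty$. So $P(T)\notin\PAST$, and therefore $T$ is well-founded iff $P(T)\in\PAST$, giving $\Pi^1_1$-hardness and, with the upper bound, $\Pi^1_1$-completeness. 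I expect the delicate point to be exactly this tuning of the busy-work blocks: the $\codeStyleMath{skip}\ \pChoice{1/2}\ \codeStyleMath{exit}$ coins cannot be avoided -- they make an infinite gadget loop a measure-zero event and so keep $P(T)$ in $\PAST$ when $T$ is well-founded -- but they attenuate every run by $2^{-n}$ per pick of value $n$, and the block lengths must cancel this attenuation exactly while leaving a residual growth factor ($2^j$ after $j$ rounds) that diverges precisely when the traversal is infinite; verifying that one and the same tuning works in both directions, while correctly handling schedulers that diverge inside a gadget and infinite branches with only finitely many rounds, is the crux, whereas the upper bound is routine once one notes that bounded branching makes finite-depth execution trees computable from the scheduler.
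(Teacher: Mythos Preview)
Your proposal is correct and follows essentially the paper's approach: the upper bound is the same computation, and the lower bound reduces from well-foundedness of recursive $\omega$-trees via the same gadget---a nondeterministic loop interleaved with $\codeStyleMath{skip}\ \pChoice{1/2}\ \codeStyleMath{exit}$ to select a child, followed by busy-work calibrated so that each round along an infinite branch contributes a bounded-below amount to the expected runtime. The only differences are cosmetic: you assume the input is already a total recursive tree (so you avoid the paper's separate handling of non-total and non-prefix-closed inputs, its Cases~1 and~3), and your busy-work grows by an extra factor of~$2$ per round rather than tracking $1/p$ exactly as the paper's global variable \codeStyleText{s} does; one small slip is that your gadget outputs $c\geq 1$, so child~$0$ is never selected---use $c-1$ as the paper's \codeStyleText{numGen} does.
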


\subsubsection*{Upper Bound}
Expanding the series defining $\expRuntime$ in the definition of $\PAST$ gives
\begin{align}
    \PAST &= \left\{P \in \setOfPrograms \mid \sforall f \in \setOfSchedules \; \exists n \in \setOfNaturals \cdot \sum_{k \in \setOfNaturals} \left(1 - \sum_{\sigma \in \terminalStateSetInStep{\leq k}(\initialProgramState{P}, f)} \executionStateProjectionProbability\left(\sigma\right)\right) < n \right\} \nonumber \\
    \implies \PAST &= \left\{P \in \setOfPrograms \mid \sforall f \in \setOfSchedules \; \exists n \in \setOfNaturals \; \forall m \in \setOfNaturals \cdot \sum_{k \leq m} \left(1 - \sum_{\sigma \in \terminalStateSetInStep{\leq k}(\initialProgramState{P}, f)} \executionStateProjectionProbability\left(\sigma\right)\right) < n \right\} \label{eq:1}
\end{align}
It's quite easy to build a terminating program $M$ with oracle access to $f$ that, on inputs $m$ and $n$, computes the finite sum in the quantifier-free section of \cref{eq:1}.
This yields
    \begin{equation}
        \label{eq:past-spec}
    P \in \nPAST \Longleftrightarrow \sforall f \in \setOfSchedules\; \exists n \in \setOfNaturals\; \forall m \in \setOfNaturals \cdot M^f(P, n, m) = 1
    \end{equation}
\cref{eq:past-spec} is a characterization of $\nPAST$ that can be transformed into the normal form for $\Pi^1_1$ (as required by \cref{definition:analytical-hierarchy}) using equivalences detailed by \citet{Rogers}. Hence, $\nPAST \in \Pi^1_1$.

 \subsubsection*{Lower Bound: Recursion-Theoretic Preliminaries}

 To show the $\Pi^1_1$-hardness of $\nPAST$, we introduce a canonical $\Pi^1_1$-complete problem.
 Towards this, we define $\omega$-trees.
\begin{definition}[$\omega$-trees, well-founded $\omega$-trees, and recursive $\omega$-trees]
    Let $\setOfNaturals^*$ be the set of all finite sequences of natural numbers. Define the prefix relation $\prec_n\; \subseteq \setOfNaturals^* \times \setOfNaturals^*$ as
    $$
    w_1 \prec_n w_2 \Longleftrightarrow |w_1| < |w_2| \land \forall n \leq |w_1| \cdot w_1(n) = w_2(n)
    $$
    Here, $|w|$ stands for the length of the sequence and $w(n)$ refers to the $n^{th}$ element of $w$.

    The pair $(\setOfNaturals^*, \prec_n)$ is the \emph{complete $\omega$-tree}.
    An \emph{$\omega$-tree} is any subtree of the complete $\omega$-tree rooted at the empty sequence $\varepsilon$.
    An $\omega$-tree is \emph{well-founded} if there are no infinite branches in the tree.

    The characteristic function of an $\omega$-tree takes in sequences $w \in \setOfNaturals^*$ as input and returns $1$ when $w$ is a node in the tree and $0$ otherwise.
    An $\omega$-tree is \emph{recursive} if its characteristic function is decidable.
\end{definition}

Let $\setOfRecursiveWellFoundedTrees$ be the set of all total Turing machines that characterize \emph{well-founded recursive} $\omega$-trees. 
\begin{theorem}
    \label{theorem:well-founded-pi-1-1}
	$\setOfRecursiveWellFoundedTrees$ is $\Pi^1_1$-complete.
\end{theorem}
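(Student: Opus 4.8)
The plan is the classical two-part argument: first place $\setOfRecursiveWellFoundedTrees$ inside $\Pi^1_1$, then reduce an arbitrary $\Pi^1_1$ set to it via the Kleene normal form. \textbf{(Membership.)} Fix a recursive bijection $\setOfNaturals^* \cong \setOfNaturals$ and, for $f : \setOfNaturals \to \setOfNaturals$, let $\bar f(n)$ denote the code of $\langle f(0), \dots, f(n-1)\rangle$. If $M$ is a total Turing machine computing the characteristic function of an $\omega$-tree $T$, then an infinite branch of $T$ is exactly a function $f$ with $M(\bar f(n)) = 1$ for all $n$; hence $T$ is well-founded iff $\sforall f \in \setOfNaturals \to \setOfNaturals\; \exists n \in \setOfNaturals \cdot M(\bar f(n)) = 0$. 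Conjoined with the arithmetical side conditions that $M$ is total and computes the characteristic function of a prefix-closed subset of $\setOfNaturals^*$ containing $\varepsilon$, this puts $\setOfRecursiveWellFoundedTrees$ into the $\Pi^1_1$ normal form of \cref{definition:analytical-hierarchy} (the matrix is decidable using the oracle $f$), so $\setOfRecursiveWellFoundedTrees \in \Pi^1_1$.

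\textbf{(Hardness.)} Let $A \subseteq \setOfNaturals$ be an arbitrary $\Pi^1_1$ set; then $\setOfNaturals \setminus A \in \Sigma^1_1$. I would invoke the Kleene normal form for $\Sigma^1_1$ \cite{Rogers,Kozen06}: there is a recursive relation $R \subseteq \setOfNaturals \times \setOfNaturals^*$ with
$$
x \notin A \iff \sexists f \in \setOfNaturals \to \setOfNaturals\; \forall n \in \setOfNaturals \cdot R(x, \bar f(n)).
$$
Replacing $R(x,s)$ by ``$R(x,s')$ holds for every prefix $s'$ of $s$'' (and folding the degenerate $x$ with $R(x,\varepsilon)$ false, all of which lie in $A$, into a trivial case), I may assume each $R(x, \cdot)$ is prefix-closed and contains $\varepsilon$. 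Then $T_x \triangleq \{ s \in \setOfNaturals^* \mid R(x,s)\}$ is a recursive $\omega$-tree, and since $R$ is recursive there is a total recursive function $g$ sending $x$ to the index $g(x)$ of a total Turing machine computing the characteristic function of $T_x$.

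\textbf{(Correctness and conclusion.)} An infinite branch through $T_x$ is precisely a witness $f$ for $\forall n\, R(x, \bar f(n))$, so $x \notin A$ iff $T_x$ is ill-founded; equivalently $x \in A \iff T_x$ is well-founded $\iff g(x) \in \setOfRecursiveWellFoundedTrees$. Thus $g$ is a recursive reduction of $A$ to $\setOfRecursiveWellFoundedTrees$ (and of $\setOfNaturals \setminus A$ to its complement); since $A \in \Pi^1_1$ was arbitrary, $\setOfRecursiveWellFoundedTrees$ is $\Pi^1_1$-hard, hence $\Pi^1_1$-complete. The one genuinely technical point, and the step I expect to spend the most care on, is the normalization: extracting the Kleene normal form and massaging $R$ into a prefix-closed relation so that $T_x$ is honestly a subtree of the complete $\omega$-tree rooted at $\varepsilon$, while verifying that $g$ is uniformly recursive and that $g(x)$ names a \emph{total} machine. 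Everything else is routine bookkeeping with sequence codes.
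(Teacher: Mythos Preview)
Your proposal is correct and is exactly the standard textbook argument. The paper does not give its own proof of this theorem at all: it simply states that the proof ``can be found in various textbooks'' and cites \cite{Rogers,Kozen06}. What you have written is precisely the argument one finds in those references---membership via the obvious $\sforall f\,\exists n$ description of well-foundedness, and hardness via the Kleene normal form for $\Sigma^1_1$ followed by prefix-closing the matrix to obtain a recursive tree---so there is no methodological difference to compare. The care you flag around normalizing $R$ to be prefix-closed and ensuring $g(x)$ names a total machine is appropriate and routine.
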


The proof of \cref{theorem:well-founded-pi-1-1} can be found in various textbooks \cite{Rogers,Kozen06}.
We will reduce $\setOfRecursiveWellFoundedTrees$ to $\nPAST$. 

\subsubsection*{Lower Bound: Reduction}
\label{subsec:past-reduction}
%
Our reduction leverages nondeterminism in selecting a branch in the complete $\omega$-tree. The remainder of the reduction traverses this branch in the input $\omega$-tree to check its finiteness.
\begin{figure}
\declareCodeFigure
\small
\begin{subfigure}{0.45\textwidth}
\begin{lstlisting}[
language=python, mathescape=true, escapechar=|, label={lst:npast-hardness-numgen}, xleftmargin=15pt]
def numGen():
    x, y, w $\coloneqq$ 0, 0, 0|\label{line:rand-gen-init}|
    while (y = 0):|\label{line:inner-loop-start}|
        x $\coloneqq$ x + 1
        y $\coloneqq$ 0 $\nChoice$ y $\coloneqq$ 1 |\label{line:nondeterministic-choice}|
        if (y = 1):
            break
        skip  $\pChoice{1/2}$ exit |\label{line:npast-gambler}|
        s $\coloneqq$ 2 * s |\label{line:inner-loop-end}|
    while (w < s):|\label{line:numgen-cheer-start}|
        w $\coloneqq$ w + 1 |\label{line:cheer}| |\label{line:cheer-entrance}|
    return x - 1|\label{line:numgen-end}|
\end{lstlisting}
\caption{Number generation procedure \codeStyleText{numGen}}
\label{fig:numgen-past-simulation}
\end{subfigure}
\begin{subfigure}{0.45\textwidth}
\begin{lstlisting}[
language=python, mathescape=true, escapechar=|, label={lst:npast-hardness-program}, xleftmargin=15pt,firstnumber=13]
node, s $\coloneqq$ [], 1 # Globals
while (True):
    x $\coloneqq$ numGen() |\label{line:npast-line-init}|
    node $\coloneqq$ node.append(x) |\label{line:npast-node-append}|
    z $\coloneqq$ execute(M, node) |\label{line:execute-m}|
    if (z = 0): |\label{line:simulation-edge-case-start}|
        n $\coloneqq$ numGen()|\label{line:simulation-pick-new-node-start}|
        while (n--):|\label{line:simulation-pick-node-real-start}|
            x $\coloneqq$ numGen()
            node $\coloneqq$ node.append(x)|\label{line:simulation-pick-new-node-end}|
        z $\coloneqq$ execute(M, node)|\label{line:simulation-check-new-node}|
        if (z = 1):
            infLoop()
\end{lstlisting}
\caption{The program $P_M$, calling \codeStyleText{numGen} several times.}
\label{fig:simulation}
\end{subfigure}
\caption{The reduction $P_M$ simulating the recursive $\omega$-tree $M$.}
\end{figure}

For every Turing machine $M$, we construct the $\pGCL$ program $P_M$. The program $P_M$ is detailed in \codeRef{fig:simulation}.
The simulation of $M$ by $P_M$, enabled by the Turing completeness of $\pGCL$ \cite{McIverM05}, is encapsulated by the function \codeStyleText{execute(M, node)}.
Here, the finite sequence of natural numbers \codeStyleText{node} is supplied to $M$ as input.

$P_M$ invokes a procedure called \codeStyleText{numGen} multiple times in its execution.
At a high level, \codeStyleText{numGen}, specified in \codeRef{fig:numgen-past-simulation}, makes use of nondeterminism to produce a distribution over natural numbers with the property that every ``successful'' execution of \codeStyleText{numGen} takes, in expectation, a roughly equal amount of time.
The first inner loop of \codeStyleText{numGen} (from Lines ~\ref{line:inner-loop-start} to ~\ref{line:inner-loop-end}) requires scheduler action at 
\listingLineRef{line:nondeterministic-choice} to safely exit. 
Notice that \codeStyleText{numGen} terminates execution with probability $1/2$ at every iteration of this loop; 
hence, the probability of staying inside the loop decreases exponentially the longer the loop is run. 
The variable \codeStyleText{x} tracks the number of iterations of the loop; the output of \codeStyleText{numGen} is \codeStyleText{x - 1}.
The global variable \codeStyleText{s} doubles each time the loop is run.
Being global, its value persists through multiple executions of \codeStyleText{numGen}.
The overall design of the reduction ensures that \codeStyleText{1/s} tracks the probability value $\executionStateProjectionProbability$ of the current non-terminal execution state.

The second loop (from Lines ~\ref{line:numgen-cheer-start} to ~\ref{line:cheer}) of \codeStyleText{numGen} induces its principal feature:
the stabilization of increments to the expected runtime of the reduction $P_M$ across all executions of \codeStyleText{numGen}.
It isn't difficult to show that the expected runtime increases by at least $1$ during each successful (i.e., reaching \listingLineRef{line:numgen-end} and returning a value) execution of \codeStyleText{numGen}.

\codeStyleText{numGen} is used by $P_M$ to pick a potential child of the current node in the recursive tree characterized by $M$ (stored by the variable \codeStyleText{node}); this is precisely why \codeStyleText{numGen} returns \codeStyleText{x - 1} at \listingLineRef{line:numgen-end}.
Accordingly, the output of \codeStyleText{numGen} is appended to the end of \codeStyleText{node} at \listingLineRef{line:npast-node-append}, and the presence of \codeStyleText{node} in the $\omega$-tree is then checked by $M$ at \listingLineRef{line:execute-m}. 
If \codeStyleText{node} is in the tree, \codeStyleText{execute(M, node)} returns $1$ at \listingLineRef{line:execute-m}, and the execution returns to \listingLineRef{line:npast-line-init} and picks another potential child of \codeStyleText{node}.
Observe that the mandatory singular call to \codeStyleText{numGen} in the child-choosing process increases the expected runtime of $P_M$ by at least $1$.
Consequently, if $M$ were to characterize an infinite branch, $P_M$ could explore this infinite branch and, in the process, make infinitely many calls to \codeStyleText{numGen}, pushing its expected runtime to infinity.

In a well-founded tree, $M$ will eventually return $0$ at \listingLineRef{line:execute-m}.
Suppose this happens at $\codeStyleMath{node}'$.
From then on out (i.e., from \listingLineRef{line:simulation-pick-new-node-start}), the program checks an edge case:
Lines ~\ref{line:simulation-pick-new-node-start} to ~\ref{line:simulation-pick-new-node-end} pick an arbitrary node of the full $\omega$-tree under $\codeStyleMath{node}'$, and the \codeStyleText{execute} call at \listingLineRef{line:simulation-check-new-node} checks if that node is in the tree validated by $M$.
If $M$ characterizes a tree (and not a graph), \codeStyleText{execute(M, node)} at \listingLineRef{line:simulation-check-new-node} will always return $0$.
Note that this check only requires, in expectation, a finite amount of additional time.

We now formally argue for the correctness of the intuitions provided above.

    \emph{Case 1: $M$ is not total.}
    This implies that there is some number $n$ for which $M$ does not halt.
    Accordingly, take the scheduler $f$ that, on the first execution of \codeStyleText{numGen}, exits its inner loop after $n + 1$ iterations.
    The input to $M$ at \listingLineRef{line:execute-m} is thus $n$.
    After reaching that line, $P_M$ runs indefinitely without ever altering its probability value.

    Suppose \listingLineRef{line:execute-m} is reached at the $m^{th}$ step with probability $p > 0$. For all $m' \geq m$, the probability of termination in $\leq m'$ steps must be bounded above by $1-p$.
    This is because the probability of non-termination at the $(m')^{th}$ step is $p$. Thus, the expected runtime $\expRuntime(\initialProgramState{P_M}, f)$ is
    \begin{equation*}
        \begin{split}
\sum_{k \in \setOfNaturals} \left(1 - \sum_{\sigma \in \terminalStateSetInStep{\leq k}(\initialProgramState{P}, f)} \executionStateProjectionProbability(\sigma) \right)
            \geq \sum_{m' \in \setOfNaturals^{\geq m}} \left(1 - \sum_{\sigma \in \terminalStateSetInStep{\leq m'}(\initialProgramState{P}, f)} \executionStateProjectionProbability(\sigma) \right)
            \geq \sum_{m' \in \setOfNaturals^{\geq m}} p
            = \infty
        \end{split}
    \end{equation*}
    proving this case.
  
    \begin{figure}[t]
        \ctikzfig{mainproofs-gambler-tree}
        \caption{\emph{An execution from $\initialExecutionState{P_M}$}. The probabilistic operation at depth $n - 1$ yields one terminal and one non-terminal node at depth $n$. Hence, there is at most one non-terminal node at every depth. At depth $m$, the execution reaches $\tau_e$.}
        \label{fig:mainproofs-gambler-tree}
    \end{figure}
    \leavevmode
  \emph{Case 2: The scheduler chooses to never leave the first loop of \codeStyleText{numGen}.}
  We label these schedulers as \emph{badly behaved.}
  Let $f$ be one badly-behaved scheduler.
  The ``bad'' behaviour of $f$ can occur after many successful executions of \codeStyleText{numGen}.
  Suppose $P_M$ enters the first loop of \codeStyleText{numGen} at \listingLineRef{line:inner-loop-start} for the last time in its $m^{th}$ step with probability $p$.
  This implies an amassed termination probability of $(1-p)$ after $m$ steps.

  The design of \codeStyleText{numGen} (specifically, the available options at the probabilistic operation at \listingLineRef{line:npast-gambler}) indicates that in every execution tree rooted at the initial state $\initialExecutionState{P_M}$, there is at most one non-terminal execution state at every depth.
  Let the execution state at depth $m$ in the tree induced by the scheduler $f$ be $\tau_e$.
  Let $\tau$ be the program state corresponding to $\tau_e$ and $w_e = \executionStateProjectionHistory(\tau_e)$.
  Partitioning the expected runtime series $\expRuntime(\initialProgramState{P_M}, f)$ at the $m^{th}$ step gives
  \begin{flalign*}
            \expRuntime(\initialProgramState{P_M}, f)
            = \sum_{k \in \setOfNaturals^{< m}} \left(1 - \sum_{\sigma \in \terminalStateSetInStep{\leq k}(\initialProgramState{P_M}, f)} \executionStateProjectionProbability(\sigma) \right)
            + \sum_{k \in \setOfNaturals^{\geq m}} \left(1 - \sum_{\sigma \in \terminalStateSetInStep{\leq k}(\initialProgramState{P_M}, f)} \executionStateProjectionProbability(\sigma) \right)
  \end{flalign*}
    The series on the left is finite. The series on the right consists of the probabilities of the non-terminal execution states under $\tau_e$.

    Let $f'$ be the scheduler that satisfies $f'(u) = f(w_e u)$ for all histories $u \in (\Sigma_n \cup \Sigma_p)^*$. Let the execution tree from $\tau$ under $f'$ be $T'$ and the subtree of the execution tree from $\initialProgramState{P_M}$ under $f$ rooted at $\tau_e$ be $T$. Then, as far as the program states are concerned, $T$ and $T'$ are identical. This yields a natural mapping $g$ from nodes in $T$ to nodes in $T'$ with the property that $\executionStateProjectionProbability(\sigma) = p \times \executionStateProjectionProbability(g(\sigma))$ for every $\sigma \in T$.
    This means that the second series is just the expected runtime from $\tau$ under $f'$ scaled down by $p$:
    \begin{equation*}
            \sum_{k \in \setOfNaturals^{\geq m}} \left(1 - \sum_{\sigma \in \terminalStateSetInStep{\leq k}(\initialProgramState{P}, f)} \executionStateProjectionProbability(\sigma) \right) = p \times \expRuntime(\tau, f')
    \end{equation*}
    Because $f'$ never leaves the inner loop at \listingLineRef{line:inner-loop-start}, $\expRuntime(\tau, f')$ is finite; we omit the details for brevity.
    Thus, the expected runtime of $P_M$ under badly behaved schedulers is finite.

    A well-behaved scheduler is one that is not badly behaved.
    Well-behaved schedulers always exit \codeStyleText{numGen} with non-zero probability.
    Each well-behaved $f$ can be identified by the outputs that $f$ induces at executions of \codeStyleText{numGen}, and
    therefore every well-behaved scheduler corresponds to an infinite branch in the complete $\omega$-tree.
    From this point on, every machine $M$ is total and every scheduler $f$ is well-behaved.
    
    \emph{Case 3: $M$ fails to characterize a tree.} This means that the subgraph of the complete $\omega$-tree characterized by $M$ is disconnected. This indicates the existence of at least one broken branch, where $M$ returns $1$ until depth $m_1$, then returns $0$ until depth $m_1 + m_2$, and then returns $1$ again at depth $m_1 + m_2 + 1$, for some positive naturals $m_1$ and $m_2$.

    Let $f$ be the scheduler corresponding to this broken branch. Under $f$, $P_M$ will merrily execute onward until depth $m_1 + 1$, at which point \codeStyleText{execute(M, node)} at \listingLineRef{line:execute-m} will return $0$. This triggers the instructions under the \codeStyleText{if} condition at \listingLineRef{line:simulation-edge-case-start}, allowing $P_M$ to pick an arbitrary descendant of \codeStyleText{node}.

    Take the scheduler $f'$ that agrees with $f$ until depth $m_1 + 1$, returns $m_2$ at the \codeStyleText{numGen} call at \listingLineRef{line:simulation-pick-new-node-start}, and then picks the node in the broken branch at depth $m_1 + m_2 + 1$ included in the subgraph characterized by $M$ through the loop at Lines ~\ref{line:simulation-pick-node-real-start} to ~\ref{line:simulation-pick-new-node-end}. Under $f'$, $M$ will return $1$ at the \codeStyleText{execute(M, node)} call at \listingLineRef{line:simulation-check-new-node}, after which $P_M$ loops infinitely without ever altering its (positive) probability value. It's easy now to see that the expected runtime of $P_M$ under $f'$ is $+\infty$; we leave the details to the diligent reader.

    \emph{Case 4: $M$ characterizes a well-founded $\omega$-tree.} This means that every branch in the $\omega$-tree characterized by $M$ is finite. Every well-behaved $f$ thus begets a finite execution tree.
    Fix a well behaved $f$ and let $m$ be the depth of this finite tree. This means that $\terminalStateSetInStep{\leq m}(\initialProgramState{P_M}, f)$ is the set all leaves in the tree. Thus,
    $$
    \sum_{\sigma \in \terminalStateSetInStep{\leq m}(\initialProgramState{P_M}, f)} \executionStateProjectionProbability(\sigma) = 1 \implies \left(1 - \sum_{\sigma \in \terminalStateSetInStep{\leq m}(\initialProgramState{P_M}, f)} \executionStateProjectionProbability(\sigma)\right) = 0
    $$
    Since $m$ is the depth of the tree, for all $m' \geq m$, $\terminalStateSetInStep{\leq m'}(\initialProgramState{P}, f) = \terminalStateSetInStep{\leq m}(\initialProgramState{P}, f)$. These facts yield
    \begin{equation*}
        \begin{split}
            \expRuntime(\initialProgramState{P_M}, f) &= \sum_{k \in \setOfNaturals} \left(1 - \sum_{\sigma \in \terminalStateSetInStep{\leq k}(\initialProgramState{P_M}, f)} \executionStateProjectionProbability(\sigma) \right)
            = \sum_{k \leq m} \left(1 - \sum_{\sigma \in \terminalStateSetInStep{\leq k}(\initialProgramState{P_M}, f)} \executionStateProjectionProbability(\sigma) \right)
        \end{split}
    \end{equation*}
    This is a finite sum, meaning that the expected runtime is finite.

    \emph{Case 5: The $\omega$-tree characterized by $M$ has an infinite branch.} Let $f$ be the scheduler corresponding to this infinite branch. Observe that the execution tree of $P_M$ under $f$ must contain an infinite branch which calls \codeStyleText{numGen} infinitely often. Consequently, this branch enters the loop at \listingLineRef{line:cheer} infinitely often.

    Isolate one execution of this loop. Suppose the execution enters the loop with probability $p$ in its $m^{th}$ step. Then, the length of the loop is $\codeStyleMath{s} = 1/p$ and the execution exits the loop in its $(m+\codeStyleMath{s})^{th}$ step. Furthermore, the probability of non-termination at each step from $m$ to $(m + \codeStyleMath{s})$ is $p$. This means that
    $$
    \sum_{k = m}^{m+\codeStyleMath{s}} \left(1 - \sum_{\sigma \in \terminalStateSetInStep{\leq k}(\initialProgramState{P}, f)} \executionStateProjectionProbability(\sigma) \right) = p \times \codeStyleMath{s} = p \times (1/p) = 1
    $$
    Hence, the contribution to the expected runtime for $k \in \{m, m+1, \ldots m+ \codeStyleMath{s} \}$ is $1$. This result holds for all executions of the loop. Every execution of the loop thus corresponds to a constant increase to the expected runtime by $1$. Since the loop is executed infinitely often under $f$, the expected runtime under $f$ is $+\infty$.

    These five cases show that
    $$
    M \in \setOfRecursiveWellFoundedTrees \Longleftrightarrow P_M \in \nPAST
    $$
    Hence, $\nPAST$ is $\Pi^1_1$ hard.

\section{A proof rule for $\PAST$}
\label{sec:proof-rule-past}

The reduction proving the $\Pi^1_1$-hardness of $\PAST$ (detailed in \cref{sec:proofs}) uses the probabilistic choice operator $\pChoiceWithoutParamters$ in a very particular manner.
In effect, $\pChoiceWithoutParamters$ is only used to reduce the probability of continued execution.
This is realized by supplying $\pChoiceWithoutParamters$ with two options: one immediately terminating program execution and the other continuing it.
We use the term \emph{Knievel} to refer to these programs, reflecting the risky choices with terminal consequences made effortlessly by Evel Knievel.

\begin{definition}[\nameForNormalFormPrograms form for $\pGCL$ programs]
    \label{def:gambler-programs}
    A $\pGCL$ program $P$ is in \emph{\nameForNormalFormPrograms form} if every instance of the probabilistic choice operator in $P$ is of the form
    \begin{lstlisting}[language=python, numbers=none, mathescape=true, escapechar=|]
        skip |$\pChoice{p}$| exit
    \end{lstlisting}
    for any probability value $p$ and some fixed finite step implementation of the statements \codeStyleText{skip} and \codeStyleText{exit}.
\end{definition}

We now propose:
\begin{proposition}
    There is an effective transformation from any $\pGCL$ program $P$ into a program $P_K$ in \nameForNormalFormPrograms form such that 
    \begin{equation*}
        \qquad \qquad P \in \PAST \Longleftrightarrow P_K \in \PAST
    \end{equation*}
    \label{prop:effective-transform-to-past}
\end{proposition}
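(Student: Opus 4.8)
The plan is to sidestep any syntactic surgery on probabilistic choices and instead build the transformation out of the $\Pi^1_1$ machinery already established. Three ingredients are at hand: (i) $\PAST \in \Pi^1_1$ — this is the upper-bound half of \cref{th:PAST}, whose proof (the oracle machine $M^f$ of \cref{eq:past-spec}) never mentions any normal form; (ii) $\setOfRecursiveWellFoundedTrees$ is $\Pi^1_1$-complete (\cref{theorem:well-founded-pi-1-1}); and (iii) the map $M \mapsto P_M$ of \codeRef{fig:simulation} is recursive, satisfies $M \in \setOfRecursiveWellFoundedTrees \Longleftrightarrow P_M \in \PAST$, and already emits programs in \nameForNormalFormPrograms form, because the only probabilistic operator anywhere in $P_M$ is the $\codeStyleMath{skip}\ \pChoice{1/2}\ \codeStyleMath{exit}$ living inside \codeStyleText{numGen}.

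First I would combine (i) with the $\Pi^1_1$-hardness half of (ii): since $\PAST$ is a $\Pi^1_1$ set and $\setOfRecursiveWellFoundedTrees$ is $\Pi^1_1$-hard, there is a total recursive function $F$ sending a (code of a) $\pGCL$ program $P$ to a (code of a) Turing machine with $P \in \PAST \Longleftrightarrow F(P) \in \setOfRecursiveWellFoundedTrees$; concretely, one feeds the characterization \cref{eq:past-spec} of $\PAST$ into the textbook translation of a $\Pi^1_1$ predicate into a uniformly recursive family of recursive $\omega$-trees that are well-founded exactly where the predicate holds \cite{Rogers,Kozen06}. Then I would set $P_K \triangleq P_{F(P)}$ and chain the equivalences using (iii): $P \in \PAST \Longleftrightarrow F(P) \in \setOfRecursiveWellFoundedTrees \Longleftrightarrow P_{F(P)} \in \PAST$, i.e.\ $P \in \PAST \Longleftrightarrow P_K \in \PAST$. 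The map $P \mapsto P_K$ is a composition of recursive maps, hence recursive, and $P_K$ is in \nameForNormalFormPrograms form by (iii). That proves the proposition.

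I expect the real content to be the recognition that a \emph{direct}, syntax-directed rewriting — replacing each $P_1 \pChoice{p} P_2$ by a gadget built from $\codeStyleMath{skip}\ \pChoice{1/2}\ \codeStyleMath{exit}$, binary nondeterminism, and counters — is deceptively delicate and fails naively. Since $\codeStyleMath{skip}\ \pChoice{1/2}\ \codeStyleMath{exit}$ cannot realize a branch without a chance of halting, one is forced to let the scheduler pick the branch and to add a ``killer coin'' so that a branch the scheduler keeps repeating dies with geometrically growing probability; but that pins the decay of surviving probability mass to $2^{-n}$ after $n$ simulated choices, whereas a program may fail to be $\PAST$ only because of a much slower, e.g.\ polynomial, decay — the symmetric walker of \codeRef{fig:example-program-1}, whose probability of still running after step $k$ is $\Theta(1/\sqrt{k})$, is the standard example. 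Compensating with padding (``cheer'') loops, as \codeStyleText{numGen} does with a cheer loop whose length is the reciprocal of the current probability mass, can repair the walker but then over-slows genuinely $\PAST$ geometric loops (such as repeatedly running $P_1 \pChoice{1/2} P_2$ where $P_1$ leaves the loop), rendering them non-$\PAST$. A correct direct construction would thus have to calibrate cheer-loop lengths against a dynamically maintained, program-specific progress measure; the indirect route above avoids this altogether, at the cost of not being syntax-directed.
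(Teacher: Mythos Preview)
Your proposal is correct and is essentially identical to the paper's own argument: compose the recursive reduction $F$ from $\PAST$ to $\setOfRecursiveWellFoundedTrees$ (obtained from $\PAST\in\Pi^1_1$ and the $\Pi^1_1$-hardness of $\setOfRecursiveWellFoundedTrees$) with the recursive map $M\mapsto P_M$ of \codeRef{fig:simulation}, which lands in \nameForNormalFormPrograms form. Your final paragraph of speculation is not needed and is slightly off the mark---the paper in fact sketches a direct, syntax-free construction in \cref{sec:gambler-transform}---but this does not affect the correctness of your proof.
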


We now provide a brief sketch of the proof for \cref{prop:effective-transform-to-past}.
At a high level, the effective \nameForNormalFormPrograms form transformation involves two computable functions.
The first is induced by the $\Pi^1_1$-membership of $\PAST$ and the $\Pi^1_1$-completeness of $\setOfRecursiveWellFoundedTrees$. 
By definition of $\Pi^1_1$-hardness, there is a computable function, which we call $f$, 
that takes $\pGCL$ programs $P$ as input and outputs Turing machines $f(P)$ such that 
$P \in PAST$ \emph{iff} $f(P)$ characterizes a well-founded $\omega$-tree.
The second is the program schema we provided in \cref{sec:proofs} (more precisely, in \codeRef{fig:simulation}) to prove the $\Pi^1_1$-hardness of $\PAST$. 
More formally, it is a computable function $g$ that takes in Turing machines $M$ as input and produces $\pGCL$ programs $g(M)$ such that 
$M$ characterizes a well-founded $\omega$-tree \emph{iff} $g(M) \in \PAST$. 
Importantly, $g$ only outputs programs in \nameForNormalFormPrograms form.
The effective transformation is the composed function $g \circ f$, which satisfies the following properties: 
it is computable, its output is a Knievel program, and
for any $\pGCL$ program $P$, we have $P \in \PAST \Longleftrightarrow g(f(P)) \in \PAST$.

Note that we can derive direct constructions for $g \circ f$ from the reductions. We describe this in \cref{sec:gambler-transform}.

\begin{figure}[t]
    \tikzfig{gambler-execution-tree}
    \caption{\emph{An execution tree of a \nameForNormalFormPrograms form program.} The doubled lines represent the potential for multiple intermediate nodes. The mainline artery is depicted as the horizontal branch. Leaving this artery are single terminal nodes.}
    \label{fig:gambler-execution-tree}
\vspace*{-1mm}
\end{figure}
Therefore, the \nameForNormalFormPrograms form can be considered to be a kind of \emph{normal form} for $\PAST$ programs. Execution trees of these programs have a main arterial branch along which the \codeStyleText{skip} option is taken at every probabilistic operation. Branching away from this artery are leaves representing terminal states. See \cref{fig:gambler-execution-tree} for an illustration.

In this section, we present a proof rule for proving $\PAST$. 
We show that this proof rule is sound for programs in \nameForNormalFormPrograms form, and is complete for all $\PAST$ programs. 
Together with the effective transformation to \nameForNormalFormPrograms form,
our rule yields a semantically complete proof technique for $\PAST$. 

We begin with a few prerequisites.

\begin{definition}[Reachable States and Expected Time to Reach]
    \label{def:reachable-states}
    Let $\sigma = (P, \eta)$ be some program state, and $\sigma_e = (P, \eta, 1, \varepsilon)$ be its initial execution state. 
The set of states \emph{reachable} from $\sigma$ is 
    \begin{align*}
        \quad\setOfReachableStates{\sigma} \triangleq \left\{ (P', \eta') \in \setOfProgramStates \;\middle|\;
            \sexists f \in \setOfSchedules \; \exists p' \in \setOfPositiveRationals \; \exists w' \in (\Sigma_p \cup \Sigma_n)^* \cdot
            \sigma_e \starStepExec{f} (P', \eta', p', w')
        \right\}
    \end{align*}
    \label{def:expected-reachability-time}
Let $A \subseteq \setOfReachableStates{\sigma}$ be some subset of states reachable from $\sigma$. 
Call $A^k_{f, \sigma}$ the subset of execution states belonging to $A$ first reached in $k$ steps under the scheduler $f$. Formally,
    \begin{equation*}
        A^k_{f, \sigma} \triangleq \left\{ (P', \eta', p', w') \in \setOfExecutionStates \;\middle|\;
        \begin{aligned}
            &\exists p' \in \setOfPositiveRationals\; \exists w' \in (\Sigma_n \cup \Sigma_p)^* \cdot (P', \eta') \in A \land \sigma_e \nStepExec{f}{k} (P', \eta', p', w') \\
            &\land \left(\forall n < k \, \forall \tau \in A^n_{f, \sigma} \cdot \lnot (\tau \nStepExec{f}{k - n} (P', \eta', p', w')) \right)
        \end{aligned}
        \right\}
    \end{equation*}
    The second line ensures that there are no states belonging to $A$ along the path to the execution states in $A^k_{f, \sigma}$.
    
    Then, the expected time to reach $A$ from $\sigma$ under a scheduler $f$ is given by the series
    \begin{equation*}
            \expRuntimeToReach(\sigma, A, f) \triangleq \sum_{k \in \setOfNaturals} \Pr(\text{not reaching }A\text{ in }k\text{ steps}) 
            \triangleq \sum_{k \in \setOfNaturals} \left( 1 - \sum_{i = 1}^k  \sum_{\tau \in A^i_{f, \sigma}} \executionStateProjectionProbability(\tau)  \right)
    \end{equation*}
\end{definition}

In our proof rule, we use the notion of the \emph{Ranking Supermartingale Maps} (RSM-maps). 
RSM-maps have been proven to be a sound and complete proof technique for $\BAST$ by \citet{FuC19}. We mildly generalize their notions below.
\begin{definition}[RSM-maps]
    \label{def:rsm-map}
    Let $h : \setOfProgramStates \to \setOfReals$ be a function from the set of program states to the non-negative reals and $\epsilon > 0$ be an arbitrary real number. The pair $(h, \epsilon)$ is a \emph{Ranking Supermartingle Map} (RSM-map) \emph{iff} $h$ maps terminal states to $0$ and satisfies the following properties for every state $\sigma = (P, \eta)$ with $h(\sigma) > 0$:
    \begin{enumerate}
        \item For deterministic states $\sigma$ with their successors $\sigma' = (P', \eta')$ satisfying the property that
        $$
        \sforall f \in \setOfSchedules \cdot (P, \eta, 1, \varepsilon) \oneStepExec{f} (P', \eta', 1, \varepsilon)
        $$
        the function $h$ satisfies the following inequality:
        $$
        h(\sigma') + \epsilon \leq h(\sigma)
        $$
        
        \item For nondeterministic states $\sigma$ with successors $\sigma_l = (P_l, \eta_l)$ and $\sigma_r = (P_r, \eta_r)$ such that
        $$
        \sforall f \in \setOfSchedules \cdot (P, \eta, 1, \varepsilon) \oneStepExec{f} (P_l, \eta_l, 1, L_n) \lor (P, \eta, 1, \varepsilon) \oneStepExec{f} (P_r, \eta_r, 1, R_n)
        $$
        we have
        $$
        \max(h(\sigma_l), h(\sigma_r)) + \epsilon \leq h(\sigma)
        $$

        \item For probabilistic states $\sigma$ with the probability value $p$ and successors $\sigma_l = (P_l, \eta_l)$ and $\sigma_r = (P_r, \eta_r)$ such that
        $$
        \sforall f \in \setOfSchedules \cdot (P, \eta, 1, \varepsilon) \oneStepExec{f} (P_l, \eta_l, p, L_p) \land (P, \eta, 1, \varepsilon) \oneStepExec{f} (P_r, \eta_r, 1 - p, R_p)
        $$
        we have
        $$
        p \times h(\sigma_l) + (1 - p) \times h(\sigma_r) + \epsilon \leq h(\sigma)
        $$
    \end{enumerate}
    Note that every program state $\sigma$ is either deterministic, nondeterministic, probabilistic, or terminal.
\end{definition}
Unlike \citet{FuC19}, we do not require RSM-maps to only map terminal states to zero. 
Our goal is to use them to reason about the expected runtime to reach a collection of states. 
Towards this, we use the following two lemmas, showing soundness and completeness for $\BAST$, from \citet{FuC19}.
These are minor modifications of Lemmas~1 (Section 4.1) and 2 (Section 4.2) of \citet{FuC19}.

\begin{lemma}[Soundness of RSM-maps]
    \label{theorem:rsm-soundness}
    Let $(h, \epsilon)$ be an RSM-map. Denote by $\Sigma_{tgt}$ the set of states assigned $0$ by $h$, i.e.,
    $$
    \Sigma_{tgt} = \{ \sigma \in \setOfProgramStates \mid h(\sigma) = 0 \}
    $$
    Then, for all schedulers $f \in \setOfSchedules$ and states $\sigma \in \setOfProgramStates$, the expected runtime to reach $\Sigma_{tgt}$ is bounded above:
    $$
    \sforall f \in \setOfSchedules \; \forall \sigma \in \setOfProgramStates \cdot \expRuntimeToReach(\sigma, \Sigma_{tgt}, f) \leq \frac{h(\sigma)}{\epsilon}
    $$
\end{lemma}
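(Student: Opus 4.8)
The plan is the standard ranking-supermartingale telescoping argument, carried out directly on the execution tree of $\sigma$ under a fixed scheduler so as to avoid any measure-theoretic overhead. Fix $f\in\setOfSchedules$ and $\sigma=(P,\eta)$ with initial execution state $(P,\eta,1,\varepsilon)$; if $h(\sigma)=0$ the bound is trivial, so assume $h(\sigma)>0$. Specialising \cref{def:expected-reachability-time} to $A=\Sigma_{tgt}$, write $p_k$ for the $k$-th summand $1-\sum_{i=1}^{k}\sum_{\tau\in(\Sigma_{tgt})^i_{f,\sigma}}\executionStateProjectionProbability(\tau)$, so that $\expRuntimeToReach(\sigma,\Sigma_{tgt},f)=\sum_{k}p_k$. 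The key preliminary observation is that $p_k$ equals the total $\executionStateProjectionProbability$-mass of those depth-$k$ execution states reachable from $(P,\eta,1,\varepsilon)$ under $f$ that have \emph{not yet visited} $\Sigma_{tgt}$. Here it is essential that $h$ sends \emph{every} terminal state (not merely that $h^{-1}(0)=\Sigma_{tgt}$) to $0$: a terminated execution has already reached $\Sigma_{tgt}$, so no probability leaks out of the accounting through termination, and $p_k$ is exactly the mass of still-running executions whose current program state has $h>0$.

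Next I would introduce the depth-$k$ potential $W_k\triangleq\sum_{\tau}\executionStateProjectionProbability(\tau)\,h(\mu_\tau)$, where the sum ranges over the same ``alive'' depth-$k$ execution states and $\mu_\tau$ denotes the program state underlying $\tau$ (so $h(\mu_\tau)>0$ for every such $\tau$). Then $W_0=h(\sigma)$ and $W_k\ge 0$, and the heart of the proof is the one-step estimate $W_{k+1}\le W_k-\epsilon\,p_k$.

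To establish this, reorganise $W_{k+1}$ as a sum over the alive depth-$k$ parents. Fix an alive $\tau$ with underlying state $\mu_\tau$; since $h(\mu_\tau)>0$, by the classification of states noted after \cref{def:rsm-map} this $\mu_\tau$ is deterministic, nondeterministic, or probabilistic (it is not terminal). In each case the relevant clause of \cref{def:rsm-map} bounds the $\executionStateProjectionProbability$-weighted $h$-mass that the children of $\tau$ contribute to $W_{k+1}$ by $\executionStateProjectionProbability(\tau)(h(\mu_\tau)-\epsilon)$: for a deterministic or nondeterministic $\mu_\tau$ the unique successor actually taken — whichever branch $f$ dictates at a nondeterministic node, since both listed successors satisfy the $\epsilon$-decrease — has $h$-value at most $h(\mu_\tau)-\epsilon$; for a probabilistic $\mu_\tau$ the two successors carry masses $p\,\executionStateProjectionProbability(\tau)$ and $(1-p)\,\executionStateProjectionProbability(\tau)$ with $p\,h(\sigma_l)+(1-p)\,h(\sigma_r)\le h(\mu_\tau)-\epsilon$. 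A successor that happens to land in $\Sigma_{tgt}$ simply drops out of $W_{k+1}$, which only helps since $h\ge 0$, and children of depth-$k$ states that have already visited $\Sigma_{tgt}$ contribute nothing. Summing over all alive $\tau$ and using $\sum_{\tau\text{ alive at depth }k}\executionStateProjectionProbability(\tau)=p_k$ yields $W_{k+1}\le W_k-\epsilon\,p_k$.

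Finally, telescoping the one-step estimate from $0$ to $N-1$ gives $0\le W_N\le h(\sigma)-\epsilon\sum_{k=0}^{N-1}p_k$, hence $\sum_{k=0}^{N-1}p_k\le h(\sigma)/\epsilon$ for every $N$; letting $N\to\infty$ gives $\expRuntimeToReach(\sigma,\Sigma_{tgt},f)=\sum_k p_k\le h(\sigma)/\epsilon$, which is the claim. I expect the only genuine obstacle to be bookkeeping rather than anything deep: one must argue carefully that the definitional summand $p_k$ really is the ``alive'' probability mass at depth $k$ (the step that uses $h$'s vanishing on all terminal states, so that probability is neither lost to termination nor double-counted), and one must reorganise $W_{k+1}$ so that the branching at probabilistic nodes is split correctly among parents. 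Everything else is the routine supermartingale computation, essentially \citet{FuC19}'s argument specialised to reachability of $\Sigma_{tgt}$.
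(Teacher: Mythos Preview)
Your argument is correct and is precisely the standard telescoping supermartingale computation; there is nothing to object to. Note, however, that the paper does not actually give a proof of this lemma: it simply records it as a minor modification of Lemma~1 in \citet{FuC19} and cites that work for the argument, so your proposal is supplying the proof that the paper outsources (and your closing remark that this is ``essentially \citet{FuC19}'s argument specialised to reachability of $\Sigma_{tgt}$'' is exactly how the paper itself frames it).
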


\begin{lemma}[Completeness of RSM-maps]
    \label{theorem:rsm-completeness}
    Let $\sigma \in \setOfProgramStates$ be a program state, $\setOfReachableStates{\sigma}$ be the set of states reachable from $\sigma$, and $A_{tgt} \subseteq \setOfReachableStates{\sigma}$ be a target collection of states. Suppose that for all schedulers $f \in \setOfSchedules$, the expected runtime to reach $A_{tgt}$ from $\sigma$ bounded above by some $k \in \setOfReals$:
    $$
    \exists k \in \setOfReals\; \sforall f \in \setOfSchedules \cdot \expRuntimeToReach(\sigma, A_{tgt}, f) \leq k
    $$
    Then, there must exist an RSM-map $(h_\sigma, 1)$ such that $h_\sigma$ only assigns $0$ to states unreachable from $\sigma$ and to states in $A_{tgt}$, i.e.,
    $$
    h_\sigma(\tau) = 0 \Longleftrightarrow \tau \in A_{tgt} \cup \left( \setOfProgramStates \setminus \setOfReachableStates{\sigma} \right)
    $$
    Additionally, $h_\sigma(\sigma)$ upper bounds the expected runtime to reach $A_{tgt}$ under any scheduler. 
\end{lemma}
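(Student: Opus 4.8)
I would use the optimal‑value‑function construction. Define $h_\sigma\colon\setOfProgramStates\to\setOfReals$ by $h_\sigma(\tau)=\sup_{f\in\setOfSchedules}\expRuntimeToReach(\tau, A_{tgt}, f)$ when $\tau\in\setOfReachableStates{\sigma}\setminus A_{tgt}$, and $h_\sigma(\tau)=0$ otherwise --- where throughout ``reachable from $\sigma$'' is read as ``reached by a computation that has not yet visited $A_{tgt}$'', which is exactly the set of states $h_\sigma$ must rank positively. It then suffices to check that (a) $h_\sigma$ is everywhere finite, so it genuinely maps into $\setOfReals$, and $h_\sigma(\tau)=0$ for terminal $\tau$; (b) $h_\sigma$ satisfies the three inequalities of \cref{def:rsm-map} with $\epsilon=1$; and (c) $h_\sigma$ vanishes exactly on $A_{tgt}\cup(\setOfProgramStates\setminus\setOfReachableStates{\sigma})$. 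Given these, $(h_\sigma,1)$ is the required RSM‑map; moreover $h_\sigma(\sigma)=\sup_f\expRuntimeToReach(\sigma, A_{tgt}, f)\le k$ by hypothesis, and combining this with the soundness lemma \cref{theorem:rsm-soundness} (whose zero‑target, viewed from $\sigma$, is precisely $A_{tgt}$) shows that $h_\sigma(\sigma)$ is in fact the exact worst‑case expected reach time. Part (c) is bookkeeping: $A_{tgt}$‑states reach $A_{tgt}$ in zero steps, the complement of the reachable set is zero by fiat, and a reachable non‑$A_{tgt}$ state needs at least one step, hence has value $\ge 1$; also, a reachable terminal state cannot lie outside $A_{tgt}$, since otherwise a scheduler could steer into it along an $A_{tgt}$‑avoiding computation and then be stuck forever, forcing $\expRuntimeToReach(\sigma, A_{tgt}, f)=\infty$ and contradicting the hypothesis --- so all reachable terminal states lie in $A_{tgt}$ and receive $0$.

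Part (a) is the crux. Fix $\tau$ reachable from $\sigma$ and one witnessing computation of some length $n$ that avoids $A_{tgt}$, ending in an execution state $\mu$ with $\executionStateProjectionProbability(\mu)=p_\tau>0$, where $p_\tau$ depends only on the chosen witness. Given any scheduler $f'$ based at $\tau$, form the \emph{spliced} scheduler $\hat f$ that replays the nondeterministic choices recorded along the witness and then, inside the subtree below $\mu$, mimics $f'$; such an $\hat f$ exists precisely because \cref{def:scheduler} allows arbitrary total functions on histories. The subtree below $\mu$ is a copy of the execution tree of $\tau$ under $f'$ with every probability scaled by $p_\tau$, and the witness never meets $A_{tgt}$, so for every $j\ge n$ the probability under $\hat f$ of not having entered $A_{tgt}$ within $j$ steps from $\sigma$ is at least $p_\tau$ times the probability under $f'$ of not having entered $A_{tgt}$ within $j-n$ steps from $\tau$. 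Summing over $j$ gives $\expRuntimeToReach(\sigma, A_{tgt}, \hat f)\ge p_\tau\cdot\expRuntimeToReach(\tau, A_{tgt}, f')$, hence $\expRuntimeToReach(\tau, A_{tgt}, f')\le k/p_\tau$, and taking the supremum over $f'$ yields $h_\sigma(\tau)\le k/p_\tau<\infty$.

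Part (b) follows from a one‑step Bellman identity. Fix $\tau$ with $h_\sigma(\tau)>0$, so $\tau\notin A_{tgt}$ is reachable and (by the above) non‑terminal, and unroll $\expRuntimeToReach(\tau, A_{tgt}, f)$ at the first transition, using $h_\sigma\equiv 0$ on $A_{tgt}$. For a deterministic $\tau$ with successor $\sigma'$ one gets $\expRuntimeToReach(\tau, A_{tgt}, f)=1+\expRuntimeToReach(\sigma', A_{tgt}, f)$; for a nondeterministic $\tau$ with successors $\sigma_l,\sigma_r$ one gets $1+\expRuntimeToReach(\sigma_d, A_{tgt}, g)$, where $d$ is the branch the scheduler takes at $\tau$ and $g$ is the corresponding residual scheduler; for a probabilistic $\tau$ with parameter $p$ and successors $\sigma_l,\sigma_r$ one gets $1+p\,\expRuntimeToReach(\sigma_l, A_{tgt}, g_l)+(1-p)\,\expRuntimeToReach(\sigma_r, A_{tgt}, g_r)$ for the two residual schedulers $g_l,g_r$. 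Taking suprema over $f$, and using again that a scheduler may be chosen to act independently in disjoint subtrees (and to pick the worse branch at a nondeterministic node), the supremum distributes and yields $h_\sigma(\tau)=1+h_\sigma(\sigma')$, $h_\sigma(\tau)=1+\max(h_\sigma(\sigma_l),h_\sigma(\sigma_r))$, and $h_\sigma(\tau)=1+p\,h_\sigma(\sigma_l)+(1-p)\,h_\sigma(\sigma_r)$ respectively. These are precisely the inequalities of \cref{def:rsm-map} with $\epsilon=1$ --- in fact equalities.

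The step I expect to be the main obstacle is part (a): making the splice/restart argument airtight --- constructing $\hat f$, identifying the subtree below $\mu$ with the execution tree of $\tau$ under $f'$, and establishing the probability‑factorisation inequality --- together with the care needed to see that finiteness is required, and claimed, only for states reachable before $A_{tgt}$ is entered, so that the terminal‑state condition and the ``only'' direction of (c) hold simultaneously. Once (a) is secured, part (b) is a routine dynamic‑programming calculation and (c) is immediate.
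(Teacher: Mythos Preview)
The paper does not give its own proof of this lemma: it is stated as a ``minor modification'' of Lemma~2 of \citet{FuC19}, and the argument is deferred to that reference. Your construction --- defining $h_\sigma$ as the optimal value function $\sup_f \expRuntimeToReach(\cdot, A_{tgt}, f)$, establishing finiteness by splicing a witnessing prefix onto an arbitrary scheduler from $\tau$, and then verifying the one-step Bellman equalities --- is exactly the standard argument used in that setting, so your proposal is correct and matches the (cited) approach. Your parenthetical that ``reachable from $\sigma$'' must be read as ``reachable without having already entered $A_{tgt}$'' is a genuine subtlety the paper's statement glosses over; without it, part~(a) would indeed fail for states only reachable through $A_{tgt}$.
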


We now present our proof rule.
\begin{definition}[Proof rule for $\PAST$ programs in \nameForNormalFormPrograms form]
    \label{def:proof-rule}
    Let $\sigma_0 = (P, \eta_0)$ be an initial program state for the program $P \in \setOfPrograms$, $\ordinalVariableSymbol$ be some ordinal, and $\setOfReachableStates{\sigma_0}$ be the set of states reachable from $\initialProgramState{P}$. Let $g : \setOfReachableStates{\sigma_0} \to \ordinalVariableSymbol$ and $k : \setOfReachableStates{\sigma_0} \to \left(\left(\setOfProgramStates \to \setOfReals \right) \times \setOfReals \right)$ be functions that satisfy the following properties
    \begin{enumerate}
        \item For every $\sigma \in \setOfReachableStates{\sigma_0}$,
        $$
        g(\sigma) = 0 \Longleftrightarrow \sigma = (\bot, \_)
        $$
        In other words, $\sigma$ is terminal \emph{iff} $g(\sigma) = 0$.
    
        \item For a fixed \emph{non-terminal} state $\sigma \in \setOfReachableStates{\sigma_0}$, define the set $\setOfLowerStates{\sigma}$ as
        $$
        \setOfLowerStates{\sigma} \triangleq \{ \sigma' \in \setOfReachableStates{\sigma} \mid g(\sigma') < g(\sigma) \}
        $$

        The function $k$ returns an RSM-map $k(\sigma) = (h_\sigma, \epsilon_\sigma)$ that assigns $0$ \emph{only} to states \emph{not} reachable from $\sigma$ and to states in $\setOfLowerStates{\sigma}$, i.e.,
        $$
        \tau \in \left(\setOfLowerStates{\sigma} \cup \left(\setOfProgramStates \setminus \setOfReachableStates{\sigma}\right)\right) \Longleftrightarrow h_\sigma(\tau) = 0
        $$
    \end{enumerate}
    We refer to $g$ and $k$ as the \emph{rank} and \emph{certification} functions respectively.
\end{definition}
Notice that $\setOfLowerStates{\sigma}$ is simply the set of states reachable from $\sigma$ that are assigned a lower value by the rank $g$. Applying \cref{theorem:rsm-soundness}, we see that the RSM-Map $(h_\sigma, \epsilon_\sigma)$ certifies the fact that the expected time to reach $\setOfLowerStates{\sigma}$ from $\sigma$ under every scheduler $f$ is bounded above by a finite value.

\subsection{Partial Soundness}
\begin{figure}
    \ctikzfig{soundness-reaching-lower}
    \caption{\emph{Case 2 in the proof of \cref{theorem:soundness-proof-rule}.} As earlier, the doubled lines indicate the potential for multiple intermediary nodes. The node $\sigma'$ is marked in \textcolor{blue}{blue}, and occurs after $n$ steps.}
    \label{fig:soundness-reaching-lower}
\end{figure}
We now show the soundness of this rule over \nameForNormalFormPrograms form programs.

\begin{theorem}
    \label{theorem:soundness-proof-rule}
    Let $P$ be a $\pGCL$ program in \nameForNormalFormPrograms form, $\sigma_0 = (P, \eta)$ be some initial program state, and $\ordinalVariableSymbol$ be some ordinal. Then, if two functions $g : \setOfReachableStates{\sigma_0} \to \ordinalVariableSymbol$ and $k : \setOfReachableStates{\sigma_0} \to \left(\left(\setOfProgramStates \to \setOfReals \right) \times \setOfReals \right)$ exist that satisfy the properties of the proof rule detailed in \cref{def:proof-rule}, then $P \in \PAST$.
\end{theorem}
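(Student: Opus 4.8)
The plan is a transfinite induction on the ordinal rank. Concretely I would prove, by induction on $\beta$, the following statement: for every \nameForNormalFormPrograms form program $P'$, every state $\sigma=(P',\eta)$ admitting a rank $g$ and certification $k$ satisfying \cref{def:proof-rule} on $\setOfReachableStates{\sigma}$ with $g(\sigma)=\beta$, and every scheduler $f$, $\expRuntime(\sigma,f)<\infty$; the theorem is then the instance $\sigma=\initialProgramState{P}$. The base case $\beta=0$ is immediate, as property~(1) forces $\sigma$ to be terminal. Before the inductive step I would record two structural facts about \nameForNormalFormPrograms form execution trees (cf.\ \cref{fig:gambler-execution-tree}): under any $f$, the execution tree from $\sigma$ is a single \emph{artery}---the path always taking the \codeStyleText{skip} branch at probabilistic operators---together with \emph{tendrils} that leave the artery at probabilistic operators via \codeStyleText{exit} and reach a terminal state within a fixed number $d$ of steps (a bound computable from $P'$); hence only boundedly many non-terminal states occur at each depth, and the total probability mass that ever leaves the artery is at most $1$.

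For the inductive step, fix $\beta>0$ and a scheduler $f$. Write $T$ for the random termination time, so $\expRuntime(\sigma,f)=\mathbb E[T]$, and let $\tau\le T$ be the first time the execution enters $\setOfLowerStates{\sigma}=\{\sigma'\in\setOfReachableStates{\sigma}\mid g(\sigma')<g(\sigma)\}$ (always $\tau\le T$, since terminal states have rank $0<\beta$). The certification $k(\sigma)=(h_\sigma,\epsilon_\sigma)$ is an RSM-map whose zero set is exactly $\setOfLowerStates{\sigma}$ together with the states unreachable from $\sigma$, so \cref{theorem:rsm-soundness} (and the fact that unreachable states are never visited) gives $\mathbb E[\tau]=\expRuntimeToReach(\sigma,\setOfLowerStates{\sigma},f)\le h_\sigma(\sigma)/\epsilon_\sigma<\infty$; in particular $\tau<\infty$ almost surely and $\mathbb E[T]=\mathbb E[\tau]+\mathbb E[T-\tau]$. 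It remains to bound $\mathbb E[T-\tau]$. Let $K\in\setOfNaturals\cup\{\infty\}$ be the first depth at which the artery enters $\setOfLowerStates{\sigma}$, $\sigma_K$ the program state there (so $g(\sigma_K)<\beta$), and $\theta_K$ the probability of reaching $\sigma_K$. Every execution path either passes through $\sigma_K$ or leaves the artery at a probabilistic operator of depth $<K$ (illustrated in \cref{fig:soundness-reaching-lower}). Paths in the first group first enter $\setOfLowerStates{\sigma}$ precisely at $\sigma_K$, so their contribution to $\mathbb E[T-\tau]$ is $\theta_K\cdot\expRuntime(\sigma_K,f')$, where $f'$ is the suffix scheduler and we use the standard probability-rescaling isomorphism between the subtree below $\sigma_K$'s execution state and the execution tree of $\sigma_K$ under $f'$, exactly as in Case~2 of the proof of \cref{th:PAST}. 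Paths in the second group reach a terminal state within $d$ steps of first entering $\setOfLowerStates{\sigma}$ on their tendril, so their total contribution is at most $d$ (tendril mass is $\le 1$). Finally the induction hypothesis applies to $\sigma_K$: its rank is $<\beta$, the restrictions $g|_{\setOfReachableStates{\sigma_K}}$ and $k|_{\setOfReachableStates{\sigma_K}}$ still satisfy \cref{def:proof-rule} (the rule ``localises'' because $\setOfReachableStates{\sigma_K}\subseteq\setOfReachableStates{\sigma}$ and the $\mathsf{Lower}$-sets and RSM zero-sets are defined purely in terms of reachability from the point in question), and $P'$ reduces to a program still in \nameForNormalFormPrograms form. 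Hence $\expRuntime(\sigma_K,f')<\infty$, so $\mathbb E[T-\tau]\le\theta_K\cdot\expRuntime(\sigma_K,f')+d<\infty$ and $\expRuntime(\sigma,f)<\infty$. (The case $K=\infty$ is easier: $\mathbb E[\tau]<\infty$ forces the artery to have limiting mass $0$, so all mass lies on tendrils and $\mathbb E[T-\tau]\le d$.)

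The crux---and exactly where \nameForNormalFormPrograms form is indispensable, the rule being unsound in general---is the passage from ``$\expRuntime$ is finite from each state of smaller rank'' (the per-scheduler induction hypothesis) to ``$\expRuntime(\sigma,f)$ is finite'': a priori this demands controlling a sum, over all first-entry execution states into $\setOfLowerStates{\sigma}$, of individually finite expected runtimes, and $\PAST$ is precisely the regime in which such sums can diverge (the $\PAST$-vs-$\BAST$ gap of \cref{fig:example-program-2}). The \nameForNormalFormPrograms form structure rescues the argument by making the execution tree essentially a line: there is a single ``deep'' continuation state $\sigma_K$ to which the induction hypothesis is applied, and everything branching off it contributes only a bounded additive constant. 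The remaining obligations are routine bookkeeping I would still check: that $\expRuntime$ and $\expRuntimeToReach$ coincide with the expected hitting times $\mathbb E[T]$ and $\mathbb E[\tau]$, that $d$ is uniform over the (finitely presented) program, and that the probability-rescaling isomorphism between the two subtrees preserves $\executionStateProjectionProbability$ up to the factor $\theta_K$.
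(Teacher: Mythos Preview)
Your proof is correct and follows essentially the same strategy as the paper: both exploit the arterial structure of \nameForNormalFormPrograms form execution trees, bound the time to first reach $\setOfLowerStates{\sigma}$ via \cref{theorem:rsm-soundness}, split into the cases where the artery does or does not enter $\setOfLowerStates{\sigma}$, and in the former case invoke the probability-rescaling argument (from Case~2 of the hardness proof) on the single continuation state $\sigma_K$. The only cosmetic differences are that the paper argues by minimal counterexample (a least-rank element of the ``bad'' set $S$) where you use direct transfinite induction, and that you track the bounded tendril contribution $d$ explicitly where the paper absorbs it into a finite partial sum.
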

\begin{proof}
    We show that from all states $\sigma \in \setOfReachableStates{\sigma_0}$ and all schedulers $f \in \setOfSchedules$, the expected runtime is finite.

    Let $S \subseteq \setOfReachableStates{\sigma_0}$ be the set of states from which the expected runtime is not finite. Assume the contrary and suppose $S \neq \emptyset$. Order states in $S$ by the values assigned to them by the rank $g$. Since $g$ assigns ordinals under $\ordinalVariableSymbol$, the well-ordering principle implies that $S$ has a least element. Denote this least element by $\sigma$.

    \emph{Case 1: $g(\sigma) = 0$.} This means $\sigma$ is a terminal state, immediately forming a contradiction.

    \emph{Case 2: $g(\sigma) = x$} for some $0 < x < \ordinalVariableSymbol$.
    Since $x > 0$, the soundness property of the RSM-map $k(\sigma)$ implies that $\setOfLowerStates{\sigma} \neq \emptyset$.
    Furthermore, the definitions of $S$ and $\sigma$ together imply that $\setOfLowerStates{\sigma} \cap S = \emptyset$. Therefore, for every scheduler $f \in \setOfSchedules$, the expected runtime from every $\sigma' \in \setOfLowerStates{\sigma}$ is finite.

    Take an arbitrary scheduler $f$. Denote the RSM-map $k(\sigma)$ by $(h_\sigma, \epsilon_\sigma)$. Combining the properties of $(h_\sigma, \epsilon_\sigma)$ detailed in \cref{def:proof-rule} and the results in \cref{theorem:rsm-soundness} gives us
    $$
    \expRuntimeToReach(\sigma, \setOfLowerStates{\sigma}, f) \leq \frac{h_\sigma(\sigma)}{\epsilon_\sigma}
    $$
    Take the execution tree corresponding to the scheduler $f$. There are two possibilities.

    \emph{Subcase 1: The tree never reaches any $\sigma' \in \setOfLowerStates{\sigma}$ in its main arterial branch.} In this case,
    $$
    \expRuntimeToReach(\sigma, \setOfLowerStates{\sigma}, f) = \expRuntime(\sigma, f)
    $$
    This is because the only states from $\setOfLowerStates{\sigma}$ in this tree are the terminal states leaving the arterial branch. The finiteness of the expected runtime follows immediately, forming a contradiction.

    \emph{Subcase 2: The tree reaches some $\sigma' \in \setOfLowerStates{\sigma}$ its main arterial branch for the first time after $n$ steps.} See \cref{fig:soundness-reaching-lower} for an illustration of this case. Call the probability value at the execution state corresponding to $\sigma'$ along the tree $p$.

    We now repeat an argument from \cref{subsec:past-reduction}.
    The expected runtime series from $\sigma$ under $f$ can be partitioned at the $n^{th}$ step, i.e., the point at which $\sigma'$ appears in the tree.
    The actions of the scheduler $f$ in the subtree rooted at $\sigma'$ must correspond to some scheduler $f'$ in an execution initialized at $\sigma'$.
    Furthermore, the membership of $\sigma' \in \setOfLowerStates{\sigma}$ implies that $\expRuntime(\sigma', f')$ is finite.
    The expected runtime series from $\sigma$ can now be written as
    \begin{equation*}
        \begin{split}
            \expRuntime(\sigma, f) &= \sum_{k \in \setOfNaturals} \Pr(\text{not terminating in }k\text{ steps}) \\
            &= \sum_{k \leq n} \Pr(\text{not terminating in }k\text{ steps}) + \sum_{k > n} \Pr(\text{not terminating in }k\text{ steps}) \\
            &\leq \frac{h_\sigma(\sigma)}{\epsilon_\sigma} + p \times \expRuntime(\sigma', f')
        \end{split}
    \end{equation*}
    This series is hence finite, forming a contradiction and completing the proof.
\end{proof}
\begin{remark}[Soundness for $\AST$]
    For every $\pGCL$ program $P$, if there exists rank and certification functions $f$ and $g$ that satisfy the properties laid out in \cref{def:proof-rule} from the initial state $\initialProgramState{P}$ of $P$, then $P$ is $\AST$. In other words, our proof rule is sound for $\AST$ over all $\pGCL$ programs, not just those in \nameForNormalFormPrograms form. We leave the proof for this to the reader; it's a simple extension of the proof of \cref{theorem:soundness-proof-rule}.
\end{remark}

\subsection{Total Completeness}
\label{subsec:completeness-rule}

We now discuss the completeness of the rule detailed in \cref{def:proof-rule}. 
Take an arbitrary (i.e., not necessarily Knievel) $\pGCL$ program $P$ and its initial state $\initialProgramState{P} = (P, \eta_0)$.

We describe a non-constructive procedure that yields candidates for the rank and certification functions $g$ and $k$. This procedure defines three unbounded sequences: one of partial ranks $\left\{ g_n \right\}$, one of partial certifications $\{ k_n \}$, and one of subsets of reachable states $\{ \Sigma_n \}$. Every partial rank $g_n$ maps a subset of values from $\setOfReachableStates{\initialProgramState{P}}$ to ordinals under the first non-recursive ordinal $\CK$. Each partial certificate $k_n$ returns RSM-maps for a subset of $\setOfReachableStates{\initialProgramState{P}}$, and each $\Sigma_n$ is a subset of $\setOfReachableStates{\initialProgramState{P}}$. Importantly, the lengths of these sequences can only be measured in ordinals. The domains over which the functions take values grow the further they get from the start.

In parallel to our construction, we will prove the following lemma.
\begin{lemma}
    \label{lemma:upper-bound-measure}
    For every ordinal $\ordinalVariableSymbol < \CK$, let $g_\ordinalVariableSymbol$ be the rank function used in \eqref{eq:measure-defn} at the end of the procedure specified in \cref{subsec:completeness-rule}. Then, for every state $\sigma \in \setOfProgramStates$, if $g_\ordinalVariableSymbol(\sigma)$ is defined, then $g_\ordinalVariableSymbol(\sigma) \leq \ordinalVariableSymbol$.
\end{lemma}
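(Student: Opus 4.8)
The plan is to prove \cref{lemma:upper-bound-measure} by transfinite induction on $\ordinalVariableSymbol < \CK$, tracking the construction of the sequences $\{g_n\}$, $\{k_n\}$, $\{\Sigma_n\}$ stage by stage. The induction hypothesis will be precisely the statement of the lemma for all ordinals strictly below $\ordinalVariableSymbol$, and the work at each stage is simply to identify which ordinal the procedure attaches to the states that enter $\operatorname{dom}(g_\ordinalVariableSymbol)$ for the first time at that stage, and to check it is at most $\ordinalVariableSymbol$. The key structural invariant that makes this work — and which must be read off from the procedure in \cref{subsec:completeness-rule} — is that the partial ranks form a \emph{coherent chain}: $g_{\beta}$ and $g_{\beta'}$ agree on the intersection of their domains for $\beta \le \beta'$, so no state is ever re-ranked, and $\Sigma_\beta = \operatorname{dom}(g_\beta) = \{\tau : g_{\beta'}(\tau) \text{ is defined and } \le \beta\}$ for every later $\beta'$.

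For the \emph{base case} $\ordinalVariableSymbol = 0$, the procedure initialises $\Sigma_0$ to the set of terminal reachable states and sets $g_0(\sigma) = 0$ for each of them, so $g_0(\sigma) = 0 \le 0$. For a \emph{successor} $\ordinalVariableSymbol = \beta+1$, the map $g_{\beta+1}$ extends $g_\beta$: on $\Sigma_\beta$ it coincides with $g_\beta$, and on the newly admitted states $\Sigma_{\beta+1}\setminus\Sigma_\beta$ — the non-terminal reachable states from which, uniformly over all schedulers, the expected time to reach $\Sigma_\beta$ is bounded, so that \cref{theorem:rsm-completeness} delivers the certificate $k_{\beta+1}(\sigma)$ — it takes the value $\beta+1$ (there being no ordinals strictly between $\beta$ and $\beta+1$, this is forced). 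Hence for $\sigma$ in the old domain the induction hypothesis gives $g_{\beta+1}(\sigma) = g_\beta(\sigma) \le \beta < \beta+1$, and for a newly admitted $\sigma$ we have $g_{\beta+1}(\sigma) = \beta+1 = \ordinalVariableSymbol$; either way $g_{\beta+1}(\sigma) \le \ordinalVariableSymbol$.

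For a \emph{limit} $\ordinalVariableSymbol = \lambda$, the procedure first forms the union $\bigcup_{\beta<\lambda} g_\beta$ (well defined by coherence) and may then adjoin the non-terminal reachable states $\sigma$ from which the expected time to reach $\bigcup_{\beta<\lambda}\Sigma_\beta$ is uniformly bounded, assigning each such $\sigma$ the rank $\lambda$; this is legitimate because for these states $\setOfLowerStates{\sigma} = \{\tau : g_\lambda(\tau) < \lambda\} = \bigcup_{\beta<\lambda}\Sigma_\beta$, exactly the target set for which \cref{theorem:rsm-completeness} supplies $k_\lambda(\sigma)$. If $g_\lambda(\sigma)$ already equals $g_\beta(\sigma)$ for some $\beta<\lambda$, the induction hypothesis gives $g_\lambda(\sigma)\le\beta<\lambda$; otherwise $\sigma$ is freshly admitted and $g_\lambda(\sigma) = \lambda$. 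In both cases $g_\lambda(\sigma)\le\lambda$, closing the induction, and in particular $g_{\ordinalVariableSymbol}$ — the rank function appearing in \eqref{eq:measure-defn} after $\ordinalVariableSymbol$ stages — satisfies $g_{\ordinalVariableSymbol}(\sigma)\le\ordinalVariableSymbol$ wherever it is defined.

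The argument is mostly bookkeeping, so the main obstacle is not a calculation but making the limit stage precise: one has to confirm from the construction that (i) the partial ranks genuinely form a coherent chain, so that the "old part" of $g_\lambda$ inherits the bound from the induction hypothesis without any downward-then-upward re-ranking, and (ii) a state entering the domain for the first time at a limit $\lambda$ is given rank exactly $\lambda$, which rests on the identification $\setOfLowerStates{\sigma} = \bigcup_{\beta<\lambda}\Sigma_\beta$ forced by \cref{def:proof-rule}. Once these invariants of the procedure in \cref{subsec:completeness-rule} are stated explicitly, the three cases above go through verbatim.
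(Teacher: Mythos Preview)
Your proposal is correct and follows essentially the same approach as the paper: the lemma is proved by transfinite induction interleaved with the construction, handling the base, successor, and limit cases exactly as you describe, using coherence of the partial ranks and the fact that freshly admitted states at stage $\ordinalVariableSymbol$ receive rank exactly $\ordinalVariableSymbol$. Your write-up is in fact more explicit than the paper's about the coherence invariant and the identification $\setOfLowerStates{\sigma} = \bigcup_{\beta<\lambda}\Sigma_\beta$ at limit stages, but the underlying argument is the same.
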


We begin by first defining $g_0$. Let $\Sigma_0 \subseteq \setOfReachableStates{\initialProgramState{P}}$ be the set of terminal states reachable from $\initialProgramState{P}$. For all $\sigma \in \Sigma_0$, set $g_0(\sigma)$ to $0$. Thus,
$$
\Sigma_0 = \left\{(\bot, \eta) \in \setOfReachableStates{\initialProgramState{P}} \right\} \qquad \text{and} \qquad \forall \sigma \in \Sigma_0 \cdot g_0(\sigma) = 0
$$
Since the proof rule only uses RSM-maps for non-terminal states, $k_0$ assigns to each $\sigma \in \Sigma_0$ an arbitrary RSM-map. Observe that \cref{lemma:upper-bound-measure} trivially holds for the base case of $g_0$.

We now describe a technique to derive the successor rank $g_{\ordinalVariableSymbol + 1}$ and certificate $k_{\ordinalVariableSymbol + 1}$ from $g_\ordinalVariableSymbol$ and $k_\ordinalVariableSymbol$ for every ordinal $\ordinalVariableSymbol$. We begin by requiring $g_{\ordinalVariableSymbol + 1}$ and $k_{\ordinalVariableSymbol + 1}$ to agree with $g_\ordinalVariableSymbol$ and $k_\ordinalVariableSymbol$ at every state they take values on:
$$
\forall \sigma \in \Sigma_\ordinalVariableSymbol \cdot g_{\ordinalVariableSymbol + 1}(\sigma) = g_\ordinalVariableSymbol(\sigma) \land k_{\ordinalVariableSymbol + 1}(\sigma) = k_\ordinalVariableSymbol(\sigma)
$$
We then define the set $\Sigma_{\ordinalVariableSymbol + 1}$:
$$
\Sigma_{\ordinalVariableSymbol + 1} \triangleq \{ \sigma \in \setOfReachableStates{\initialProgramState{P}} \mid \exists r \in \setOfReals \; \sforall f \in \setOfSchedules \cdot \expRuntimeToReach(\sigma, \Sigma_\ordinalVariableSymbol, f) \leq r \}
$$
Informally, $\Sigma_{\ordinalVariableSymbol+1}$ is the set of states from which the expected time to reach $\Sigma_\ordinalVariableSymbol$ is bounded above by a finite value. For each $\sigma \in \Sigma_{\ordinalVariableSymbol + 1}$, denote this bound by $r_\sigma$. Observe that, for $A_{tgt} = \Sigma_\ordinalVariableSymbol$, the bound on the expected time to reach $A_{tgt}$ satisfies the conditions outlined in \cref{theorem:rsm-completeness}. Hence, there must exist a RSM-map $(h_\sigma, 1)$ with
$$
h_\sigma(\tau) = 0 \Longleftrightarrow \tau \in \Sigma_\ordinalVariableSymbol \cup \left(\Sigma \setminus \setOfReachableStates{\sigma}\right)
$$
Simply set
$$
k_{\ordinalVariableSymbol+1}(\sigma) = (h_\sigma, 1)
$$
To determine the rank $g_{\ordinalVariableSymbol + 1}$ of a state $\sigma \in \Sigma_{\ordinalVariableSymbol + 1}$, we must analyze the subset of $\Sigma_\ordinalVariableSymbol$ reachable by an execution initialized at $\sigma$. Observe that, by the induction hypothesis of \cref{lemma:upper-bound-measure},
$$
\forall \sigma \in \Sigma_\ordinalVariableSymbol \cdot g_\ordinalVariableSymbol(\sigma) \leq \ordinalVariableSymbol
$$
Hence, the largest measure of any state in $\Sigma_\ordinalVariableSymbol$ is $\leq \ordinalVariableSymbol$. We can hence safely set, for all $\sigma \in \Sigma_{\ordinalVariableSymbol + 1}$,
$$
g_{\ordinalVariableSymbol + 1}(\sigma) = \ordinalVariableSymbol + 1
$$
This trivially satisfies the successor induction step in the formal proof of \cref{lemma:upper-bound-measure}.

%

We now detail the rank $g_\ordinalVariableSymbol$ and certificate $k_\ordinalVariableSymbol$ for any \emph{limit ordinal} $\ordinalVariableSymbol$. Be begin by defining
$$
\Sigma_\cup \triangleq \bigcup_{\ordinalVariableSymbol' < \ordinalVariableSymbol} \Sigma_{\ordinalVariableSymbol'}
$$
$\Sigma_\cup$ is thus the set of states that have been assigned a rank by some $g_{\ordinalVariableSymbol'}$. For every $\ordinalVariableSymbol' < \ordinalVariableSymbol$, set
$$
\forall \sigma' \in \Sigma_{\ordinalVariableSymbol'} \cdot g_\ordinalVariableSymbol(\sigma') = g_{\ordinalVariableSymbol'}(\sigma') \land k_\ordinalVariableSymbol(\sigma') = k_{\ordinalVariableSymbol'}(\sigma')
$$
This simply merges the domains of all functions defined for lower ordinals. Now, define $\Sigma_\ordinalVariableSymbol$ as
$$
\Sigma_\ordinalVariableSymbol \triangleq \{ \sigma \in \setOfReachableStates{\initialProgramState{P}} \mid \exists r \in \setOfReals \; \sforall f \in \setOfSchedules \cdot \expRuntimeToReach(\sigma, \Sigma_\cup, f) \leq r \}
$$
It's easy to see that $\Sigma_\ordinalVariableSymbol$ is the set of states from which the runtime for reaching the region of states ranked under $\ordinalVariableSymbol$ is bounded. For each $\sigma \in \Sigma_\ordinalVariableSymbol$, denote this bound by $r_\sigma$. Set $A_{tgt} = \Sigma_\cup$ and using $r_\sigma$, apply \cref{theorem:rsm-completeness} to derive the RSM-map $(h_\sigma, 1)$ for $A_{tgt}$ and set
$$
k_\ordinalVariableSymbol(\sigma) = (h_\sigma, 1)
$$
Similar to the previous case, $g_{\ordinalVariableSymbol'}(\sigma') \leq \ordinalVariableSymbol'$ for each state $\sigma' \in \Sigma_{\ordinalVariableSymbol'}$. Thus, for all $\sigma \in \Sigma_\ordinalVariableSymbol$, we set
$$
g_\ordinalVariableSymbol(\sigma) = \ordinalVariableSymbol
$$
Notice that this completes the proof of \cref{lemma:upper-bound-measure}.

Finally, we define the candidate rank $g : \setOfReachableStates{\initialProgramState{P}} \to \CK$ and certificate $k : \setOfReachableStates{\initialProgramState{P}} \to \mathbb{R}$ as
\begin{equation}
\tag{Candidate functions}
\label{eq:measure-defn}
g \triangleq \bigcup_{\ordinalVariableSymbol < \CK} g_\ordinalVariableSymbol \qquad\text{and}\qquad k \triangleq \bigcup_{\ordinalVariableSymbol < \CK} k_\ordinalVariableSymbol
\end{equation}
It's easy to see that, over the domains they're defined, $g$ and $k$ satisfy the requirements detailed in \cref{def:proof-rule}. We now prove that, for all $\PAST$ programs, $g$ and $k$ assign a value to the initial state $\initialProgramState{P}$. In this proof, we mildly abuse our notation and ascribe expected runtimes to execution trees; these are simply the expected runtimes to reach the leaves of the tree from the root of the tree under the scheduler that produces the tree.

\begin{lemma}
    \label{lemma:completeness}
    Let $P$ be a $\PAST$ program and let $g : \setOfReachableStates{\initialProgramState{P}} \to \CK$ and $k : \setOfReachableStates{\initialProgramState{P}} \to \left(\left(\setOfProgramStates \to \setOfReals \right) \times \setOfReals \right)$ be the candidate rank and certification functions defined in \eqref{eq:measure-defn}. Then, $g$ and $k$ are \emph{total}.
\end{lemma}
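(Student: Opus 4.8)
The plan is to show that the candidate functions assign a value to $\initialProgramState{P}$; once that is done, values for all reachable states and the definedness of $k$ wherever $g$ is defined follow directly from the construction, so totality reduces to $\initialProgramState{P}\in\Sigma_\ordinalVariableSymbol$ for some $\ordinalVariableSymbol<\CK$. First I would observe, by the same prefix-then-rescale argument used in Case~2 of the $\Pi^1_1$-hardness reduction and in Subcase~2 of \cref{theorem:soundness-proof-rule} (compose a scheduler that reaches $\sigma$ with an arbitrary scheduler from $\sigma$, and note that the expected runtime of the composite dominates a positive multiple of the expected runtime from $\sigma$), that every $\sigma\in\setOfReachableStates{\initialProgramState{P}}$ is itself $\PAST$ from $\sigma$, i.e.\ $\sforall f.\ \expRuntime(\sigma,f)<\infty$. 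Thus it suffices to show that every such state lands in some $\Sigma_\ordinalVariableSymbol$ with $\ordinalVariableSymbol<\CK$.

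Next I would record the shape of the construction. The operator $\Gamma(X)=\{\sigma : \sforall f\ \exists r.\ \expRuntimeToReach(\sigma,X,f)\le r\}$ is monotone, and the construction is $\Sigma_{\ordinalVariableSymbol+1}=\Gamma(\Sigma_\ordinalVariableSymbol)$, $\Sigma_\lambda=\Gamma(\bigcup_{\ordinalVariableSymbol<\lambda}\Sigma_\ordinalVariableSymbol)$; hence $\{\Sigma_\ordinalVariableSymbol\}$ is non-decreasing, and since $\setOfReachableStates{\initialProgramState{P}}$ is countable it stabilises at a countable stage to a set $\Sigma_\infty$ with $\Gamma(\Sigma_\infty)=\Sigma_\infty$. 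A one-step unfolding of $\expRuntimeToReach$ shows each $\Sigma_\ordinalVariableSymbol$ (hence $\Sigma_\infty$) is closed under the transition relation: if $\tau\to\tau'$ then every scheduler from $\tau'$ lifts to a scheduler from $\tau$, so a finite bound for reaching $\Sigma_\ordinalVariableSymbol$ from $\tau$ yields one from $\tau'$. Consequently, if $B:=\setOfReachableStates{\initialProgramState{P}}\setminus\Sigma_\infty$ is nonempty, then $B$ contains no terminal state, every state of $B$ has a transition to a state of $B$, and — since $\Sigma_\infty=\Gamma(\Sigma_\infty)$ — every $\tau\in B$ satisfies $\sup_f\expRuntimeToReach(\tau,\Sigma_\infty,f)=\infty$ while, by the first paragraph and the fact that reaching $\Sigma_\infty\supseteq\Sigma_0$ precedes termination, $\expRuntimeToReach(\tau,\Sigma_\infty,f)<\infty$ for each fixed $f$.

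The heart of the argument is to turn this into a contradiction with $P\in\PAST$: from any $\tau\in B$ I would build a single scheduler $f^\star$ with $\expRuntime(\tau,f^\star)=\infty$. By König's lemma, every state of $B$ begins an infinite path that stays inside $B$; this is the reservoir that lets $f^\star$ retain a positive-probability foothold in $B$ at every finite time — at nondeterministic $B$-states always steer into $B$, and $B$-states never terminate. The scheduler is then built in phases with checkpoints $0=m_0<m_1<\cdots$: at checkpoint $m_j$ the surviving-in-$B$ mass is spread over the finitely many $B$-states on the depth-$m_j$ frontier, and from each of these, because the expected $\Sigma_\infty$-reaching time has infinite supremum, $f^\star$ follows a sub-scheduler that delays reaching $\Sigma_\infty$ long enough — with $m_{j+1}$ chosen accordingly — to add at least a fixed quantum to $\expRuntimeToReach(\tau,\Sigma_\infty,f^\star)$, while the infinite $B$-paths keep the surviving mass at $m_{j+1}$ positive. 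Summing over the infinitely many phases forces $\expRuntimeToReach(\tau,\Sigma_\infty,f^\star)=\infty$, hence $\expRuntime(\tau,f^\star)=\infty$, a contradiction; therefore $B=\emptyset$ and $\initialProgramState{P}\in\Sigma_\infty$, so \cref{lemma:upper-bound-measure} supplies $g(\initialProgramState{P})$ and $k(\initialProgramState{P})$. It remains to see that $\initialProgramState{P}$ is ranked \emph{below} $\CK$, i.e.\ $\Sigma_\infty=\bigcup_{\ordinalVariableSymbol<\CK}\Sigma_\ordinalVariableSymbol$; this is recursion-theoretic. Relative to the fixed program $P$, $\Gamma$ restricted to $\setOfReachableStates{\initialProgramState{P}}$ is a $\Pi^1_1$ monotone inductive operator (``$\exists r\,\sforall f$'' over an arithmetical matrix) whose closure set is recursively enumerable in $P$, so $\Pi^1_1$-boundedness forces its stage function $g$ to take values below $\CK$; equivalently, via the hardness reduction $P\mapsto M_P$, $g(\initialProgramState{P})$ is dominated by the ordinal rank of the recursive well-founded tree coded by $M_P$, which is recursive and hence $<\CK$.

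The step I expect to be the main obstacle is the scheduler construction in the third paragraph. All one knows about a residual state $\tau\in B$ is that the expected time to reach $\Sigma_\infty$ is unbounded over schedulers with every individual value finite, and $B$ cannot be re-entered once the execution leaves it; so a naive ``delay as much as possible'' scheduler may leave $B$ almost surely within each bounded horizon and thereby contribute only finitely much in total. Making the per-phase accumulated delay and the per-phase survival probability simultaneously large is exactly the delicate balance, and it is the reason the proof must lean on the infinite $B$-paths furnished by König's lemma. The $\CK$ bound is the other non-routine ingredient, and is precisely what dictates that the codomain of $g$ be $\CK$; everything else — the reduction to individual states, monotonicity, transition-closure, and the bookkeeping for $k$ — is routine.
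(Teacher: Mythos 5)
Your overall strategy is the same as the paper's: assume some reachable state is never ranked, show that from any such ``bad'' state the expected time to reach the ranked region is unbounded over schedulers (else the construction would have ranked it), and then assemble a single limit scheduler that accumulates at least a unit of expected runtime per stage, contradicting $P\in\PAST$. Your preliminary reductions (every reachable state is itself $\PAST$ via the prefix-then-rescale argument; monotonicity and stabilisation of the $\Sigma_\ordinalVariableSymbol$'s; closure of the fixed point under transitions) are all fine, and your separate $\Pi^1_1$-boundedness argument for why the closure ordinal stays below $\CK$ is actually more careful than the paper's treatment, which only asserts that the supremum of the ranks of the countably many ranked states reachable from a bad state is recursive.

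The one place where your proposal has a real hole is exactly the step you flag as ``the main obstacle'': making the per-phase delay quantum and the per-phase survival in $\Sigma_{bad}$ coexist. Your mechanism (steer into $B$ at nondeterministic states, invoke K\"onig's lemma for infinite $B$-paths) is in tension with ``follow a sub-scheduler that delays reaching $\Sigma_\infty$'' --- these may dictate different choices at the same nondeterministic state, and you do not resolve which one wins. The paper closes this with a small but essential observation: from a bad leaf $\sigma_n$ reached with probability $p_n$, pick a scheduler whose expected time to reach $\Sigma_{good}$ exceeds $1/p_n$, and truncate its execution tree to a \emph{finite} subtree whose expected reaching time is still at least $1/p_n$ but \emph{strictly} below that of the full tree. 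The strict inequality forces the finite subtree to retain at least one leaf in $\Sigma_{bad}$ with positive probability (if every leaf were already in the target set, the reaching-time series would have converged within the truncation). That single surviving bad leaf is the seed for the next stage, and the weighting by $p_n$ guarantees the unit increment $p_n\cdot(1/p_n)=1$. So no balancing act between two competing schedulers is needed --- one delaying scheduler, finitely truncated, automatically provides both the quantum and the foothold in $\Sigma_{bad}$. With that substitution your argument goes through and coincides with the paper's proof.
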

\begin{proof}
    \begin{figure}
        \tikzfig{proofrule-completeness-construction}
        \caption{\emph{The construction of $T_{n + 1}$.} In each of these trees, the \textcolor{blue}{blue} nodes represent states belonging to $\Sigma_{good}$. All other nodes belong to $\Sigma_{bad}$. The \textcolor{red}{red} nodes are bad leaf nodes selected for extension. The tree $T''_{n + 1}$ is produced by removing the subtree rooted at $\sigma_{n + 1}$ in $T'_{n + 1}$ (also depicted) without diminishing the expected time to reach $\Sigma_{good}$ too much. Observe that $T''_{n + 1}$ is merely attached to $\sigma_1$ to produce $T_{n + 1}$.}
        \label{fig:proofrule-completeness-construction}
    \end{figure}

    We prove this lemma by contradiction. Suppose $g$ and $k$ weren't total. It's easy to see that $g$ and $k$ are always defined over the same collection of states. Define
    $$
    \Sigma_{good} \triangleq \{ \sigma \in \setOfReachableStates{\initialProgramState{P}} \mid \exists \ordinalVariableSymbol < \CK \cdot g(\sigma) = \ordinalVariableSymbol \}
    $$
    In other words, $\Sigma_{good}$ is the collection of states reachable from the initial state $\initialProgramState{P}$ that are assigned a rank by $g$. Define
    $$
    \Sigma_{bad} \triangleq \setOfReachableStates{\initialProgramState{P}} \setminus \Sigma_{good}
    $$
    Our assumptions indicate that $\Sigma_{bad} \neq \emptyset$. They also indicate that all execution trees rooted at states in $\Sigma_{bad}$ yield finite expected runtimes. We claim that for these states, the expected time to reach $\Sigma_{good}$ is not bounded by a finite value.

    Why is this true? Take the set $A_{\sigma_b} \subseteq \Sigma_{good}$ of all good states reachable from some $\sigma_b \in \Sigma_{bad}$. Let $\ordinalVariableSymbol_b$ be the smallest ordinal larger than the ranks $g(\sigma_g)$ assigned to every good state $\sigma_g \in A_{\sigma_b}$. It isn't difficult to see that $\ordinalVariableSymbol_b$ is recursive, as $g(\sigma_g)$ is recursive for every good state $\sigma_g$ and there are countably many $\sigma_g \in A_{\sigma_b}$. Hence, $g_{\ordinalVariableSymbol_b} \subseteq g$ must be defined.
    If the expected time to reach $A_{\sigma_b}$ was bounded by some $r_{\sigma_b}$, the procedure forces
    $$
    g_{\ordinalVariableSymbol_b}(\sigma_b) \leq \ordinalVariableSymbol_b
    $$
    This forms a contradiction, justifying the inner claim.

    We now construct an infinite sequence of finite execution trees $\{ T_n \}$ rooted at some $\sigma_1 \in \Sigma_{bad}$ such that each $T_n$ has at least one bad state from $\Sigma_{bad}$ among its leaves and $T_{n+1}$ extends one of these leaves in $T_n$. Additionally, the expected runtime of each $T_n$ is bounded below by $n$. We then show that there exists a scheduler $f$ that produces the limit $T$ of $\{T_n\}$, and that the expected runtime from $\sigma_1$ under $f$ is $+\infty$.
    
    We begin with $T_1$. Take some state $\sigma_1 \in \Sigma_{bad}$. From our earlier arguments, we know that there must be some execution tree rooted at $\sigma_1$ that yields an expected time to reach $\Sigma_{good}$ at $r'_1$ steps with $r'_1 > 1$. Let $T'_1$ be this tree. The nature of the infinite series defined in \cref{def:expected-reachability-time} indicates that there must be a finite subtree of $T'_1$ that still yields a slightly lower expected reachability time $r_1$ with $1 \leq r_1 < r'_1$. Call this finite subtree $T_1$. Observe that there must be at least one bad state $\sigma_2 \in \Sigma_{bad}$ among the leaves of $T_1$; this arises from the strict inequality $r_1 < r'_1$. Furthermore, the expected runtime of $T_1$ is trivially above $1$.

    We now describe a procedure to build $T_{n + 1}$ from $T_n$. Take one bad leaf $\sigma_n \in T_n \cap \Sigma_{bad}$ reached with probability $p_n > 0$. We know that there must be an execution tree rooted at $\sigma_n$ with an expected time to reach $\Sigma_{good}$ of $r'_{n + 1} > \frac{1}{p_n}$. Call this tree $T'_{n + 1}$, and take the finite subtree $T''_{n + 1}$ of $T'_{n + 1}$ with an expected time to reach $\Sigma_{good}$ of $r_{n+1}$ with $\frac{1}{p} \leq r_{n + 1} < r'_{n+1}$. As before, the strict inequality means that there must be one bad leaf in $T''_{n + 1}$. Simply attach $T''_{n + 1}$ to the leaf $\sigma_n \in T_n$ to produce $T_{n + 1}$. This procedure is illustrated in \cref{fig:proofrule-completeness-construction}.

    Our construction guarantees that the expected runtime of $T_n$ is at least $n$. The construction of $T_{n + 1}$ implies that the expected runtime series of $T_{n + 1}$ simply extends that of $T_n$ with the probabilities of non-termination from $T''_{n+1}$. These new probabilities are weighted by $p_n$. Hence,
    \begin{equation*}
        \begin{split}
            \expRuntime(T_{n + 1}) &= \expRuntime(T_n) + p_n \times \expRuntime(T''_{n + 1})
            \geq n + p_n \times \frac{1}{p_n} = n + 1
        \end{split}
    \end{equation*}
    Hence, the expected runtime of $T_{n + 1}$ is at least $n + 1$, proving the primary property of the construction.

    Denote the limit of the sequence $\{T_n\}$ by $T$. Observe that the limit scheduler $f$ of the sequence of schedulers inducing each $T_n$ produces $T$ from $\sigma_1$. Furthermore, the expected runtime of $T$ must be infinite, as its subtrees $T_m \subset T$ ensure that it cannot bounded above by any $m \in \setOfNaturals$. This indicates that the program $P$ is not $\PAST$, forming a contradiction and completing the proof.
\end{proof}

We have thus shown
\begin{theorem}
    For each program $P \in \PAST$, there exist ranking and certification functions $g$ and $k$ that satisfy the requirements of the proof rule detailed in \cref{def:proof-rule}.
\end{theorem}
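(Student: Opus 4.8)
The plan is to read the theorem off the transfinite construction of \cref{subsec:completeness-rule} together with \cref{lemma:completeness}. I would take $g = \bigcup_{\ordinalVariableSymbol < \CK} g_\ordinalVariableSymbol$ and $k = \bigcup_{\ordinalVariableSymbol < \CK} k_\ordinalVariableSymbol$ exactly as in \eqref{eq:measure-defn}; \cref{lemma:completeness} already furnishes that, for $P \in \PAST$, both are \emph{total} on $\setOfReachableStates{\initialProgramState{P}}$, and \cref{lemma:upper-bound-measure} confirms every rank assigned is below $\CK$, so $g$ genuinely maps into $\CK$. What remains is to check that this pair $(g,k)$ satisfies the two clauses of \cref{def:proof-rule}.

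Clause (1), that $g(\sigma) = 0$ iff $\sigma$ is terminal, I would get by stage inspection: the base stage assigns $0$ precisely to the set $\Sigma_0$ of reachable terminal states, every successor stage assigns freshly-captured states the positive rank $\ordinalVariableSymbol+1$, and every limit stage assigns the positive limit ordinal $\ordinalVariableSymbol$; hence $g^{-1}(0) = \Sigma_0$, which is exactly the set of reachable terminal states.

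Clause (2) is where the care goes, and it is, I expect, the main obstacle. Fix a non-terminal reachable $\sigma$ and let $\ordinalVariableSymbol$ be the stage at which $\sigma$ first enters some $\Sigma_\ordinalVariableSymbol$; then $g(\sigma) = \ordinalVariableSymbol$ and $k(\sigma) = (h_\sigma, 1)$ is the RSM-map handed back by \cref{theorem:rsm-completeness} for the target $A_{tgt}$ used at that stage --- namely $\Sigma_{\ordinalVariableSymbol'}$ when $\ordinalVariableSymbol = \ordinalVariableSymbol'+1$, and $\Sigma_\cup = \bigcup_{\ordinalVariableSymbol' < \ordinalVariableSymbol}\Sigma_{\ordinalVariableSymbol'}$ when $\ordinalVariableSymbol$ is a limit. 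By \cref{theorem:rsm-completeness} we already know $h_\sigma^{-1}(0) = A_{tgt} \cup (\setOfProgramStates \setminus \setOfReachableStates{\sigma})$, so it suffices to prove $A_{tgt} \cap \setOfReachableStates{\sigma} = \setOfLowerStates{\sigma}$. For this I would establish two structural facts about the construction: first, the family $\{\Sigma_{\ordinalVariableSymbol'}\}$ is increasing along the stage order (each state of a stage reaches that stage's own target in $0$ expected steps), so $A_{tgt}$ is exactly the set of states that ever received a rank strictly below $\ordinalVariableSymbol$; second, each state's rank coincides with the stage at which it first appears and ranks never change once assigned, so $\{\tau \mid g(\tau) < g(\sigma)\} = A_{tgt}$. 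Intersecting with $\setOfReachableStates{\sigma}$ then yields $\setOfLowerStates{\sigma}$ by definition, establishing the required identity $h_\sigma^{-1}(0) = \setOfLowerStates{\sigma} \cup (\setOfProgramStates \setminus \setOfReachableStates{\sigma})$ and hence clause (2); note also that $k(\sigma)$ has second component $1 > 0$ and maps terminal states to $0$ since $\Sigma_0 \subseteq A_{tgt}$, so it is a bona fide RSM-map.

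Putting these together --- totality from \cref{lemma:completeness}, clause (1) by stage inspection, clause (2) via the increasing-stages identity above, and the bookkeeping remark that $\sigma \in \setOfReachableStates{\initialProgramState{P}}$ forces $\setOfReachableStates{\sigma} \subseteq \setOfReachableStates{\initialProgramState{P}}$ so every function is evaluated inside its declared domain --- the pair $(g,k)$ witnesses the proof rule of \cref{def:proof-rule} for $P$, which is the theorem.
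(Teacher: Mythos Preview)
Your proposal is correct and follows exactly the paper's approach: take the candidate functions $g,k$ from \eqref{eq:measure-defn}, invoke \cref{lemma:completeness} for totality, and verify the two clauses of \cref{def:proof-rule}. In fact you supply more detail than the paper does---the paper dispatches clause verification with ``It's easy to see that, over the domains they're defined, $g$ and $k$ satisfy the requirements'' and then states the theorem immediately after \cref{lemma:completeness}, whereas you spell out the stage-monotonicity argument needed for clause~(2).
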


\subsection{All the Way to $\CK$}

\begin{figure}[t]
    \declareCodeFigure
    \small
    \begin{subfigure}{0.45\textwidth}
        \begin{lstlisting}[
%caption={$\INC$}, captionpos=b, 
language=python, mathescape=true, escapechar=|, label={lst:INC}, xleftmargin=15pt]
x, y $\coloneqq$ 1, 0
while (y = 0):|\label{line:INC-number-loop-start}|
    x $\coloneqq$ 2 * x
    y $\coloneqq$ 0 $\nChoice$ y $\coloneqq$ 1
    skip $\pChoice{1/2}$ exit|\label{line:INC-number-loop-end}|
while (x > 0):|\label{line:INC-cheer-loop}| 
    x $\coloneqq$ x - 1|\label{line:INC-cheer-loop-body}|
        \end{lstlisting}
        \caption{$\INC$}
        \label{fig:INC}
    \end{subfigure}
    \begin{subfigure}{0.45\textwidth}
        \begin{lstlisting}[
            % caption={$P_M$}, captionpos=b, 
            language=python, mathescape=true, escapechar=|, label={lst:P-M}, xleftmargin=15pt, firstnumber=8]
node $\coloneqq$ []
while (True):|\label{line:P-m-main-loop-start}|
    x, y $\coloneqq$ 0, 1
    while (y = 0):|\label{line:P-m-number-loop-start}|
        x $\coloneqq$ x + 1
        y $\coloneqq$ 0 $\nChoice$ y $\coloneqq$ 1
        skip $\pChoice{1/2}$ exit|\label{line:P-m-number-loop-end}|
    node $\coloneqq$ node.append(x)
    M_st $\coloneqq$ init_M(node)|\label{line:P-m-begin-M-operation}|
    while (not M_st.terminal()):
        M_st $\coloneqq$ M_step(M_state)
        skip $\pChoice{1/2}$ exit|\label{line:P-m-gamblers-choice}|
    if (M_st.reject()):|\label{line:P-m-near-end-of-loop}|
        exit|\label{line:P-m-end-M-operation}|
    execute($\INC$)|\label{line:P-m-INC-usage}|
            \end{lstlisting}
            \caption{The program $P_M$ using $\INC$}
            \label{fig:p_m}
    \end{subfigure}
    \caption{The full program $P_M$}
    \label{fig:full-p_m-for-ck}
\end{figure}

%
%
We now show, for every recursive ordinal $\ordinalVariableSymbol < \CK$, a $\PAST$ program in \nameForNormalFormPrograms form whose rank has range $\ordinalVariableSymbol$.
Together with the upper bound in the completeness argument, we conclude that $\CK$ is the appropriate range for the rank function $g$.

We begin with $\INC$ (see \codeRef{fig:INC}), a program for which the smallest rank that can be assigned to its initial state $\initialProgramState{\INC}$ is $2$. 
The execution of $\INC$ involves a scheduler-directed selection of a power of $2$ for the variable \codeStyleText{x} through the loop from Lines ~\ref{line:INC-number-loop-start} to ~\ref{line:INC-number-loop-end}.
After this selection is made, the program busy waits for \codeStyleText{x} many steps at Lines ~\ref{line:INC-cheer-loop} and ~\ref{line:INC-cheer-loop-body}.
The smallest rank that can be ascribed to states at \listingLineRef{line:INC-cheer-loop} is $1$, and
since \listingLineRef{line:INC-cheer-loop} can be reached in finitely many steps in expectation, 
the rank $2$ can be assigned to $\initialProgramState{\INC}$.
Furthermore, because $\INC \not\in \BAST$, a rank of $1$ cannot be ascribed to $\initialProgramState{\INC}$.

We now define programs for any recursive ordinal.
Lecture 40 of \citet{Kozen06} describes a mapping between well-founded recursive $\omega$-trees and recursive ordinals.
This involves the following finer mapping from the nodes of the $\omega$-trees to recursive ordinals:
all leaf nodes are assigned $0$ and all internal nodes are assigned the smallest ordinal larger than the values assigned to their immediate children. 
Finally, the tree is assigned the value of its root. 
Formally, for every recursive well-founded tree $M \in \setOfRecursiveWellFoundedTrees$, 
define a function $\ordMath_M : \setOfNaturals^* \to \CK$ as
\begin{equation*}
    \ordMath_M(w) = \begin{cases}
        0 & M(w) = 0 \lor \forall n \in \setOfNaturals \cdot M(\langle w, n\rangle) = 0\\
        \sup_{n \in \setOfNaturals} \ordMath_M(\langle w, n\rangle) + 1 & \text{otherwise}
    \end{cases}
\end{equation*}
The first line indicates that $\ordMath_M$ only maps leaves and nodes not validated by $M$ to $0$. 
Thus, every recursive ordinal $\ordinalVariableSymbol$ is associated with some $M \in \setOfRecursiveWellFoundedTrees$ such that $\ordinalVariableSymbol = \ordMath_M(\varepsilon)$.

For every $M \in \setOfRecursiveWellFoundedTrees$, we define a program $P_M$ (see \codeRef{fig:p_m}) that needs ordinals at least as large as $\ordMath_M(\varepsilon)$. 
As in \codeRef{fig:simulation}, $P_M$ nondeterministically traverses a branch in the tree identified by $M$. 
Each loop iteration begins with the choice of a candidate child \codeStyleText{x} through the inner loop at \listingLineRef{line:P-m-number-loop-start}.
The verification of the candidate child begins at \listingLineRef{line:P-m-begin-M-operation} and ends at \listingLineRef{line:P-m-end-M-operation}.
The functions \codeStyleText{init\_M} and \codeStyleText{M\_step} abstract the initialization and single-step execution of the machine $M$.
The structure \codeStyleText{M\_st} abstracts the current state of the execution of $M$ and provides options for checking whether that state is accepting or rejecting.
The insertion of Knievel's risk (continue or terminate) at \listingLineRef{line:P-m-gamblers-choice} inside the execution of $M$ (Lines ~\ref{line:P-m-begin-M-operation} to ~\ref{line:P-m-gamblers-choice}) constrains the expected runtime across all children against the running time of $M$. 
It isn't difficult to show that the expected runtime of each loop iteration from Lines ~\ref{line:P-m-main-loop-start} to ~\ref{line:P-m-near-end-of-loop} until the execution of $\INC$ at \listingLineRef{line:P-m-INC-usage} is bounded above by a small constant value. \emph{Call this constant value $r_M$.}

\begin{figure}
    \small
    \tikzfig{ordinal-necessity-cascade}
    \caption{\emph{The increment mechanism of $P_M$.} The branching in this execution tree is purely nondeterministic; the program can potentially reach any of its children. The minimum rank ascribable to the states is shown in \textcolor{blue}{blue}. $\sigma$ reaches $\tau_n$ in $r_M$ steps in expectation. From there, for each $m \in \setOfNaturals$, it reaches $\iota_m$, which can each be ascribed rank $\ordinalVariableSymbol'$. This causes the minimum possible rank value to increase.}
    \label{fig:ordinal-necessity-cascade}
\end{figure}
The proof for the $\PAST$ membership of $P_M$ is similar to the arguments contained in \cref{subsec:past-reduction}.
We do not repeat them here; instead, we discuss the executions of $P_M$ from program states beginning at the main loop (at \listingLineRef{line:P-m-main-loop-start}). 
These program states primarily differ in their values of \codeStyleText{node}, the `current' node in the tree recognized by $M$. 
They are consequently a natural link to the value of $\ordMath_M(\codeStyleMath{node})$.
We show:

\begin{lemma}
    \label{lemma:minimal-measure}
    Let $M \in \setOfRecursiveWellFoundedTrees$ be a well-founded recursive tree and $P_M$ be the program corresponding to it in \codeRef{fig:full-p_m-for-ck}. 
    Let $S_M$ be the set of program states at \listingLineRef{line:P-m-main-loop-start} of \codeRef{fig:p_m} reachable from the initial state $\initialProgramState{P_M}$. 
    Additionally, let $\node : S_M \to \setOfNaturals^*$ be a function that maps states in $S_M$ to the value of \codeStyleText{node} (i.e., the node) contained in them.

    Every rank that satisfies the rules detailed in the proof rule (\cref{def:proof-rule}) must assign to each $\sigma \in S_M$ an ordinal at least as large as $\ordMath_M(\node(\sigma))$.
\end{lemma}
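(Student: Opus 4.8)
The plan is to prove \cref{lemma:minimal-measure} by transfinite induction on the ordinal $\ordMath_M(\node(\sigma))$, having fixed once and for all a rank $g$ and a certification $k$ satisfying \cref{def:proof-rule} from $\initialProgramState{P_M}$. The induction hypothesis is that $g(\tau) \ge \ordMath_M(\node(\tau))$ for every $\tau \in S_M$ with $\ordMath_M(\node(\tau)) < \alpha$, and we must show $g(\sigma) \ge \alpha$ for every $\sigma \in S_M$ with $\ordMath_M(\node(\sigma)) = \alpha$. The base case $\alpha = 0$ is immediate since ordinals are non-negative (in fact $g(\sigma) \ge 1$, because $\sigma$ sits at \listingLineRef{line:P-m-main-loop-start} and is therefore non-terminal, so property~1 of \cref{def:proof-rule} forbids rank $0$). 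For the inductive step write $w = \node(\sigma)$; since $\alpha > 0$, $\alpha$ is the least ordinal strictly above $\ordMath_M(\langle w, n\rangle)$ over all children $n$ validated by $M$, and at least one such child exists. Suppose toward a contradiction that $g(\sigma) < \alpha$; then there is a validated child $n$ with $g(\sigma) \le \ordMath_M(\langle w, n\rangle) =: \gamma$.

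Next I would isolate two structural facts about $P_M$, both obtained by direct inspection of \codeRef{fig:p_m} and by reusing the trace analysis of \cref{subsec:past-reduction}. First, there is a unique reachable main-loop state $\sigma_n \in S_M$ with $\node(\sigma_n) = \langle w, n\rangle$ -- once control returns to \listingLineRef{line:P-m-main-loop-start} the program state is determined by \texttt{node} up to locals that no longer matter -- and, crucially, for every $j$ there is a scheduler that drives $\sigma$ to $\sigma_n$ with positive probability along a path of length at least $2^j$: it picks candidate value $n$, survives $M$'s verification of $\langle w, n\rangle$ (taking the $\codeStyleMath{skip}$ branch at every coin), and then runs $\INC$ with its doubling variable inflated to $2^j$, so that the busy-wait at \listingLineRef{line:INC-cheer-loop} contributes $2^j$ steps. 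Second, applying the induction hypothesis to $\sigma_n$ (legitimate because $\gamma < \alpha$) gives $g(\sigma_n) \ge \gamma \ge g(\sigma)$, so $\sigma_n \notin \setOfLowerStates{\sigma}$; moreover every state strictly between $\sigma$ and $\sigma_n$ on the chosen path is non-terminal and carries \texttt{node} value $w$ or $\langle w, n\rangle$, from which one argues -- this is the delicate point, see below -- that it too has $g$-rank $\ge g(\sigma)$ and hence lies outside $\setOfLowerStates{\sigma}$.

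Granting these facts, the contradiction comes from \cref{theorem:rsm-soundness}: writing $k(\sigma) = (h_\sigma, \epsilon_\sigma)$, the expected time from $\sigma$ to reach $\setOfLowerStates{\sigma}$ is bounded by $B := h_\sigma(\sigma)/\epsilon_\sigma$ under \emph{every} scheduler. But the path to $\sigma_n$ stays out of $\setOfLowerStates{\sigma}$ and can be dilated by $\INC$ to any length; iterating the same construction one descends through validated nodes $\langle w, n\rangle, \langle w, n, n'\rangle, \dots$ whose $\ordMath_M$-values remain $\ge g(\sigma)$ (by cofinality of the children's ordinals together with the induction hypothesis), interposing an $\INC$-detour at each level. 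This yields a family of schedulers under which the first visit to $\setOfLowerStates{\sigma}$ occurs only after an unboundedly large expected number of steps, contradicting $B < \infty$; hence $g(\sigma) \ge \alpha$ and the induction closes.

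I expect the main obstacle to be the quantitative heart of that last step: a single $\INC$-detour of length $2^j$ is entered only with probability on the order of $2^{-j}$, so on its own it adds merely a constant to the expected hitting time; the contradiction must therefore be extracted from the interplay between the growth of $\INC$'s busy-wait and the exponential decay of the survival probability -- exactly the mechanism that separates $\PAST$ from $\BAST$ (witnessed by \codeRef{fig:example-program-2}) and that powers the $\Pi^1_1$-hardness reduction of \cref{subsec:past-reduction} -- and this is precisely where the interleaving of the $\codeStyleMath{skip}\,\pChoice{1/2}\,\codeStyleMath{exit}$ coins through the simulation of $M$ (which keeps each main-loop iteration's expected cost bounded by the constant $r_M$ and makes $\INC \notin \BAST$) is indispensable. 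Getting the bookkeeping right -- choosing the per-level $\INC$ inflations and the depth of descent so that path length beats survival-probability decay, while verifying that the descending path provably avoids $\setOfLowerStates{\sigma}$ -- is the crux of the argument, and mirrors \cref{lemma:ordinal-maintanence-property} one level of generality up.
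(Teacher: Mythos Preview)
Your transfinite-induction scaffold matches the paper's, but the mechanism you propose for extracting the contradiction is quite different from the paper's and has a real gap.

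The paper does not argue by dilating a $\sigma \to \sigma_n$ path and then iterating down the tree. Instead it works entirely at a \emph{single} level of the tree and isolates two specific families of intermediate states: $\tau_n$, the state at the entry of $\INC$ after $M$ has validated child $n$, and the states $\iota_m$ at \listingLineRef{line:INC-cheer-loop} with $x = m$. The argument runs: (i) each $\iota_m$ deterministically reaches $\sigma_n$ in $m$ steps, so $g(\iota_m) \geq \ordMath_M(\langle w,n\rangle) = \mathbf{o}'$; (ii) hence if $g(\tau_n) \leq \mathbf{o}'$ then no $\iota_m$ lies in $\setOfLowerStates{\tau_n}$, yet from each $\iota_m$ the distance to any state of lower rank is at least $m$ --- unbounded in $m$ --- forcing $g(\tau_n) \geq \mathbf{o}' + 1$; (iii) since $\sigma$ reaches $\tau_n$ within the constant $r_M$ in expectation, $g(\sigma) \geq \mathbf{o}' + 1$ as well. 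Taking the child $n$ with $\mathbf{o}' = \mathbf{o}$ (successor case) or ranging over all children (limit case) yields $g(\sigma) \geq \alpha$. The ``$+1$'' is manufactured \emph{locally}, from the single state $\tau_n$ dominating the unbounded family $\{\iota_m\}_m$; there is no descent through $M$.

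Your route breaks at exactly the two steps you flag as delicate. First, the claim that every state on the chosen $\sigma \to \sigma_n$ path has $g$-rank $\geq g(\sigma)$ is not justified: those states --- inside the child-picking loop, inside the simulation of $M$, inside $\INC$'s first loop --- are not in $S_M$, so the induction hypothesis says nothing about them, and nothing in \cref{def:proof-rule} forces ranks to be monotone along paths. The paper sidesteps this by reasoning only about the $\iota_m$, where the deterministic $m$-step walk to $\sigma_n$ gives the lower bound on $g(\iota_m)$ directly. Second, and more seriously, your iterated-descent patch cannot close the quantitative gap: descending through validated children with $\ordMath_M \geq g(\sigma)$ \emph{strictly} decreases $\ordMath_M$ at every step (children have strictly smaller $\ordMath_M$ than their parent), so by well-foundedness the descent terminates after finitely many levels --- indeed it halts the moment you reach a node with $\ordMath_M$ equal to $g(\sigma)$, since all of that node's children have $\ordMath_M$ strictly below $g(\sigma)$. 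You are therefore left with a bounded number of $O(1)$ contributions, which is no contradiction. The paper's single-level argument via the unbounded family $\{\iota_m\}_m$ is precisely what replaces this failed iteration.
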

\begin{proof}
    Observe that from every $\sigma \in S_M$, the execution begins with a scheduler-directed selection of a candidate child \codeStyleText{x} through the loop at Lines ~\ref{line:P-m-number-loop-start} - ~\ref{line:P-m-number-loop-end}.
    The expected runtime of $P_M$ under a scheduler that never picks a child, or picks a child not in the tree is trivially under $r_M$, the upper bound over the expected runtime of reaching \listingLineRef{line:P-m-near-end-of-loop} from \listingLineRef{line:P-m-main-loop-start}.
    The expected runtime under a scheduler that never exits the $\INC$ loop at Lines ~\ref{line:INC-number-loop-start} - ~\ref{line:INC-number-loop-end} is also similarly bounded.
    Hence, we only discuss schedulers picking actual children and actual values at $\INC$.

    We prove this lemma by transfinite induction on the value of $\ordMath_M(\node(\sigma))$.

    \emph{Base: $\ordMath_M(\node(\sigma)) = 0$.} This means $\node(\sigma)$ is a leaf. Therefore, the execution always reaches the terminal state at \listingLineRef{line:P-m-end-M-operation}, indicating that the expected runtime from $\sigma$ is bounded by $r_M$ under all schedulers. This justifies a rank assignment of $1$ to $\sigma$.

    \emph{Induction case 1: $\ordMath_M(\node(\sigma)) = \ordinalVariableSymbol + 1$} for some ordinal $\ordinalVariableSymbol$. This implies the existence of a child of $\node(\sigma)$ that was assigned the value $\ordinalVariableSymbol$ by $\ordMath_M$.

    Consider the selection of some child $n \in \setOfNaturals$ of $\node(\sigma)$ with $\ordMath_M(\langle\node(\sigma),n\rangle) = \ordinalVariableSymbol'$ and $\ordinalVariableSymbol' \leq \ordinalVariableSymbol$. 
    Call the program state in $S_M$ corresponding to this new node $\sigma_n$. By the induction hypothesis, the minimum rank that can be ascribed to $\sigma_n$ is $\ordinalVariableSymbol'$.

    From $\sigma$, the execution tree can select and validate the child $n$ within $r_M$ steps in expectation. After this, the execution enters $\INC$ and reaches \listingLineRef{line:INC-number-loop-start} of $\INC$; let $\tau_n$ be the program state at this stage. From $\tau_n$, the execution reaches \listingLineRef{line:INC-cheer-loop} of $\INC$ after selecting some $m \in \setOfNaturals$ for the variable \codeStyleText{x}. Call this program state $\iota_m$. From $\iota_m$, the execution reaches $\sigma_n$ in $m$ steps.

    We know, from the induction hypothesis, that $\sigma_n$ must be assigned a rank $\geq \ordinalVariableSymbol'$. This lower bound on the rank must also apply to $\iota_m$, as all executions from $\iota_m$ deterministically reach $\sigma_n$ in $m$ steps.
    However, from $\tau_n$, the execution can reach $\iota_m$ for any $m \in \setOfNaturals$. From each $\iota_m$, the expected runtime for reaching a lower ordinal is bounded below by $m$, an ever increasing quantity. Hence, the rank assigned to $\tau_n$ must at least be $\ordinalVariableSymbol' + 1$. Furthermore, because the execution can always expect to reach $\tau_m$ within $r_M$ steps, the state $\sigma$ can be assigned the same rank as $\tau_m$. See \cref{fig:ordinal-necessity-cascade} for an illustration. 
    
    Now, since there must be some $n \in \setOfNaturals$ such that $\sigma_n$ is ascribed $\ordinalVariableSymbol$, the state $\sigma$ must be ascribed a rank of at least $\ordinalVariableSymbol + 1$, completing this case.

    \emph{Induction case 2: $\ordMath_M(\node(\sigma)) = \ordinalVariableSymbol$} for some limit ordinal $\ordinalVariableSymbol$.
    This is only possible if there are countably many children under $\node(\sigma)$ and for every ordinal $\ordinalVariableSymbol' < \ordinalVariableSymbol$, there must be some child $n \in \setOfNaturals$ of $\node(\sigma)$ such that $\ordMath_M(\langle \node(\sigma), n \rangle) > \ordinalVariableSymbol'$.
    Let the program state in $S_M$ corresponding to the node $\langle \node(\sigma) , n\rangle$ be $\sigma_n$.
    Lifting the arguments from the previous case shows that the rank of $\sigma$ must be at least $\ordinalVariableSymbol' + 1$ for all $\ordinalVariableSymbol' < \ordinalVariableSymbol$. This forces the rank of $\sigma$ to be at least $\ordinalVariableSymbol$, completing this case, and therefore the proof.
\end{proof}
The initial program state of $P_M$ must thus be assigned a rank of at least $\ordMath_M(\varepsilon)$, justifying the need for ordinals up to $\CK$.

\subsection{\nameForNormalFormPrograms Form is Necessary}
\label{subsec:limitations-proof-rule}
While the rule defined in \cref{def:proof-rule} is complete for $\PAST$, it isn't sound for all programs. 
\begin{figure}
\begin{minipage}[c]{0.5\linewidth}
\begin{lstlisting}[
% caption={$\pGCL$ program that isn't $\PAST$ but satisfies the rule}, captionpos=b, 
language=python, mathescape=true, escapechar=|, label={lst:proof-rule-fail-example}, xleftmargin=15pt]
x, y $\coloneqq$ 0, 0
while (y = 0): |\label{lst:proof-rule-fail-while-loop-start}|
    x $\coloneqq$ x + 1
    y $\coloneqq$ 0 $\pChoice{1/2}$ y $\coloneqq$ 1
    if (y = 1): break
y $\coloneqq$ pow(4, x)
while (y > 0): y $\coloneqq$ y - 1
\end{lstlisting}
\end{minipage}
\quad\quad\quad\quad
\begin{minipage}[c]{0.3\linewidth}
    \tikzfig{unsoundness-example-proof-rule}
\end{minipage}
    \caption{\emph{The unsoundness example.} (Left) a program that is not $\PAST$; (Right) An execution tree.
Each node on the leftmost branch is labelled $\omega$, and each node leaving that branch is labeled by $4^x$. It's easy to see that the expected runtime for this tree is $+\infty$.}
    \label{fig:unsoundness-example-proof-rule}
\end{figure}
Take the program in \cref{fig:unsoundness-example-proof-rule}.
It is trivial to assign to all program states where the execution remains inside the first loop 
(at \listingLineRef{lst:proof-rule-fail-while-loop-start}) the rank $\omega$. 
We know that the expected runtime bound of $12$ (the expected runtime of the first loop) of exiting the loop yields some RSM-map for states inside the loop; 
simply assign to them this RSM-map. 
For all states leaving the loop, simply assign to them the value of $y$ and an RSM-map that sets $1$ to them and $0$ to everything else. 

This program is trivially not $\PAST$; however, the rank and certification functions we defined in the previous paragraph satisfy the properties of our proof rule. 
Thus, our rule must only be applied onto programs in \nameForNormalFormPrograms form to prove their membership in $\PAST$. 
Nevertheless, our total completeness argument indicates that if one could show that no valid rank and certification functions can exist for a particular $\pGCL$ program $P$, then $P \not\in \PAST$.

\section{Related Work}
\label{sec:related}

\subsubsection*{Termination and Fair Termination}
Termination is a classical problem in computer science, going back to Turing's paper \cite{Turing37}.
\emph{Ranking functions}, also known as \emph{progress measures}, are a standard technique for proving program termination. 
\citet{manna1974mathematical} described the use of such functions for demonstrating the termination of deterministic and nondeterministic programs. 
Their applicability for programs with unbounded nondeterminism has been explored \cite{Chandra78,HarelK84,Francez86}. 
The $\Pi^1_1$-completeness of the problem of determining if a program with these features halts is a result by \citet{Chandra78}, 
and the requirement for ordinals up to $\CK$ in these ranking functions was shown by \citet{AptP86}. 
Thus, there are recursive procedures transforming positively terminating probabilistic programs with bounded nondeterminism 
(i.e., $\PAST$ programs) to terminating non-probabilistic programs with unbounded nondeterminism, thereby ``compiling away'' the probabilities in the former.

\citet{Harel86} showed a general recursive tree transformation that reduced fair termination to termination in the setting of unbounded nondeterminism, thereby providing semantically sound and complete
proof rules for fair termination.
His reduction also proved the $\Pi_1^1$-completeness for fair termination.
We can study fairness in our context, and consider the natural $\nfAST$, $\nfPAST$, and $\nfBAST$ sets.
These quantify over the set of fair schedulers instead of the set of all schedulers.
For a general notion of strong fairness, we can show that $\nfAST$, $\nfPAST$, and $\nfBAST$ are all $\Pi_1^1$-hard and are in $\Pi^1_2$---the complexity gap
is due to a second, existential second-order quantifier over branches in an infinite tree
needed to capture fairness in the probabilistic setting.
When we restrict ourselves to the setting of \emph{finitary fairness} \cite{AlurH98}, which replaces the general fairness language with the largest safety language
contained within it, we see that $\nffAST$ and $\nffBAST$ remain $\Pi^0_2$ and $\Sigma^0_2$-complete,
and $\nffPAST$ remains $\Pi^1_1$-complete.
The appropriateness of finitary fairness for probabilistic programs have been argued before \cite{LengalLMR17}.
We include proofs for these statements in \cref{sec:discussion} for completeness.

\subsubsection*{Probabilistic Termination}
Termination for probabilistic programs is a well-studied area and trace their provenance to results on infinite-state Markov decision processes.
\emph{Ranking supermartingales} are regarded as the probabilistic generalization of ranking functions \cite{TakisakaOUH21}. 
Martingale based techniques have found applications in proving qualitative termination \cite{FuC19,HuangFC18,BournezG05,FioritiH15,ChakarovS13,AvanziniLY20}.
More recently, they have also been used in proving
quantitative termination, where one asks for the probability of termination 
\cite{ChatterjeeGMZ22,ChatterjeeNZ17,TakisakaOUH21,BeutnerO21}. 
Regarding these properties, the use of martingales in the determination of lower and upper bounds on the probability of termination has been shown by 
\citet{ChatterjeeGMZ22,ChatterjeeNZ17}. 
Futhermore, \citet{KuraUH19} have explored martingale-based approaches toward tail bounds on the expected runtime.

Our work is concerned with the qualitative properties of almost-sure and positive almost-sure termination. 
\citet{BournezG05} were the first to discuss the use of ranking supermartingales in a sound and complete proof technique for positive almost-sure termination of 
programs without nondeterminism. 
The extension of these rules for termination of programs with a global bound on the expected runtime across all schedulers (i.e., $\BAST$ programs) 
have been discussed by \citet{FioritiH15} and \citet{FuC19} with the former only including semi-completeness results and the latter proving completeness. 
Separately, sound and complete martingale-based proof rules for $\BAST$ (called \emph{strong} $\AST$ in their paper) have been explored by \citet{AvanziniLY20}.

Martingales have found applications in the study of almost-sure termination (i.e., $\AST$) as well. 
A sound proof rule for $\AST$ using martingales was described by \citet{ChakarovS13}, and \citet{McIverMKK18} paired supermartingales with 
certain intermediary progress functions in a widely applicable sound proof rule for almost-sure termination. 
Furthermore, algorithms for the synthesis of martingales for interesting subclasses of programs have been explored \cite{ChakarovS13,ChatterjeeFG16,ChatterjeeFNH18}.

Proof rules for $\AST$ and $\PAST$ that operate over the syntax of the programs have been studied \cite{McIverMKK18,KaminskiKMO18,OlmedoKKM16}. 
The most relevant are the rules that generate bounds on the expected runtime, presented by \citet{KaminskiKMO18}. 
Similar rules for recursive programs without loops have been presented by \citet{OlmedoKKM16}. 
Additionally, a relatively complete system with the ability to determine $\AST$ was introduced by \citet{BatzKKM21}. 
Importantly, none of these works include nondeterminism in their program models. Separately, algorithmic analyses of proof rules for $\AST$, $\PAST$, and non-termination have been discussed \cite{MoosbruggerBKK21}.
Interestingly, we do not know of a ``natural'' sound and complete proof rule for $\AST$.

Our focus on this paper is purely theoretical.
A number of papers have focused on automating the search for termination proofs by fixing a language for expressing ranking supermartingles
(e.g., linear or polynomial functions) and then using constraint solving to find appropriate functions \cite{ChakarovS13,ColonSS03,ChatterjeeFG16}.
We do not know of many algorithmic heuristics when ranks involve ordinals, even for non-probabilistic programs. 
Whether our proof rules can be automated in a sound way remains to be seen.
One could consider lexicographic ranking functions \cite{ChatterjeeGNZZ21,CookSZ13} as a first step, using the standard embedding of a tuple $(a_0, \ldots, a_n)$ to the ordinal sum
$a_0 \omega^n + \ldots + a_n$.

\subsubsection*{Complexity}
Finally, the complexities of $\AST$, $\PAST$, and other related decision problems for probabilistic programs with discrete distributions over their state spaces and 
without nondeterminism have been discussed in detail by \citet{KaminskiKM19}. 
Their results have been extended by \citet{BeutnerO21} to account for continuous distributions. 
As far as we know, the complexity analysis for nondeterministic extensions of these problems had not been studied before.
For $\AST$ and $\BAST$, the extensions are not difficult.
Our contribution is to notice the significantly higher complexity of $\PAST$.

\section{Conclusions}
\label{sec:conclusion}

We have characterized the complexity of $\PAST$ for $\pGCL$ programs with bounded nondeterministic and probabilistic choice operations.
We proved that this problem is $\Pi^1_1$-complete.
Using recursion-theoretic insights, we have defined an effectively computable normal form for $\pGCL$, 
and provided a sound and complete proof rules for $\PAST$ for normal form programs.
Our proof rule uses ordinals up to $\CK$ and this is necessary.
A specific implication of our results is that existing techniques based on ranking supermartinagles cannot be complete for $\PAST$.
%
%

\begin{acks}
We thank the reviewers for their helpful comments.
This research was sponsored in part by
the Deutsche Forschungsgemeinschaft project 389792660 TRR 248--CPEC
(see \url{https://perspicuous-computing.science}).
\end{acks}

\sloppy 


\label{beforebibliography}
\newoutputstream{pages}
\openoutputfile{main.pages.ctr}{pages}
\addtostream{pages}{\getpagerefnumber{beforebibliography}}
\closeoutputstream{pages}
\bibliography{bibliography}


\begin{thebibliography}{41}


\ifx \showCODEN    \undefined \def \showCODEN     #1{\unskip}     \fi
\ifx \showDOI      \undefined \def \showDOI       #1{#1}\fi
\ifx \showISBNx    \undefined \def \showISBNx     #1{\unskip}     \fi
\ifx \showISBNxiii \undefined \def \showISBNxiii  #1{\unskip}     \fi
\ifx \showISSN     \undefined \def \showISSN      #1{\unskip}     \fi
\ifx \showLCCN     \undefined \def \showLCCN      #1{\unskip}     \fi
\ifx \shownote     \undefined \def \shownote      #1{#1}          \fi
\ifx \showarticletitle \undefined \def \showarticletitle #1{#1}   \fi
\ifx \showURL      \undefined \def \showURL       {\relax}        \fi
\providecommand\bibfield[2]{#2}
\providecommand\bibinfo[2]{#2}
\providecommand\natexlab[1]{#1}
\providecommand\showeprint[2][]{arXiv:#2}

\bibitem[Alur and Henzinger(1998)]%
        {AlurH98}
\bibfield{author}{\bibinfo{person}{Rajeev Alur} {and}
  \bibinfo{person}{Thomas~A. Henzinger}.} \bibinfo{year}{1998}\natexlab{}.
\newblock \showarticletitle{Finitary Fairness}.
\newblock \bibinfo{journal}{\emph{{ACM} Trans. Program. Lang. Syst.}}
  \bibinfo{volume}{20}, \bibinfo{number}{6} (\bibinfo{year}{1998}),
  \bibinfo{pages}{1171--1194}.
\newblock
\urldef\tempurl%
\url{https://doi.org/10.1145/295656.295659}
\showDOI{\tempurl}


\bibitem[Apt and Plotkin(1986)]%
        {AptP86}
\bibfield{author}{\bibinfo{person}{Krzysztof~R. Apt} {and}
  \bibinfo{person}{Gordon~D. Plotkin}.} \bibinfo{year}{1986}\natexlab{}.
\newblock \showarticletitle{Countable nondeterminism and random assignment}.
\newblock \bibinfo{journal}{\emph{J. {ACM}}} \bibinfo{volume}{33},
  \bibinfo{number}{4} (\bibinfo{year}{1986}), \bibinfo{pages}{724--767}.
\newblock
\urldef\tempurl%
\url{https://doi.org/10.1145/6490.6494}
\showDOI{\tempurl}


\bibitem[Avanzini et~al\mbox{.}(2020)]%
        {AvanziniLY20}
\bibfield{author}{\bibinfo{person}{Martin Avanzini}, \bibinfo{person}{Ugo~Dal
  Lago}, {and} \bibinfo{person}{Akihisa Yamada}.}
  \bibinfo{year}{2020}\natexlab{}.
\newblock \showarticletitle{On probabilistic term rewriting}.
\newblock \bibinfo{journal}{\emph{Sci. Comput. Program.}}
  \bibinfo{volume}{185} (\bibinfo{year}{2020}).
\newblock
\urldef\tempurl%
\url{https://doi.org/10.1016/j.scico.2019.102338}
\showDOI{\tempurl}


\bibitem[Batz et~al\mbox{.}(2021)]%
        {BatzKKM21}
\bibfield{author}{\bibinfo{person}{Kevin Batz},
  \bibinfo{person}{Benjamin~Lucien Kaminski}, \bibinfo{person}{Joost{-}Pieter
  Katoen}, {and} \bibinfo{person}{Christoph Matheja}.}
  \bibinfo{year}{2021}\natexlab{}.
\newblock \showarticletitle{Relatively complete verification of probabilistic
  programs: an expressive language for expectation-based reasoning}.
\newblock \bibinfo{journal}{\emph{Proc. {ACM} Program. Lang.}}
  \bibinfo{volume}{5}, \bibinfo{number}{{POPL}} (\bibinfo{year}{2021}),
  \bibinfo{pages}{1--30}.
\newblock
\urldef\tempurl%
\url{https://doi.org/10.1145/3434320}
\showDOI{\tempurl}


\bibitem[Bertsekas and Shreve(1978)]%
        {BS78}
\bibfield{author}{\bibinfo{person}{D.~P. Bertsekas} {and} \bibinfo{person}{S.E.
  Shreve}.} \bibinfo{year}{1978}\natexlab{}.
\newblock \bibinfo{booktitle}{\emph{Stochastic Optimal Control: The Discrete
  Time Case}}.
\newblock \bibinfo{publisher}{Academic Press}.
\newblock


\bibitem[Beutner and Ong(2021)]%
        {BeutnerO21}
\bibfield{author}{\bibinfo{person}{Raven Beutner} {and} \bibinfo{person}{Luke
  Ong}.} \bibinfo{year}{2021}\natexlab{}.
\newblock \showarticletitle{On probabilistic termination of functional programs
  with continuous distributions}. In \bibinfo{booktitle}{\emph{{PLDI} '21: 42nd
  {ACM} {SIGPLAN} International Conference on Programming Language Design and
  Implementation, Virtual Event, Canada, June 20-25, 2021}},
  \bibfield{editor}{\bibinfo{person}{Stephen~N. Freund} {and}
  \bibinfo{person}{Eran Yahav}} (Eds.). \bibinfo{publisher}{{ACM}},
  \bibinfo{pages}{1312--1326}.
\newblock
\urldef\tempurl%
\url{https://doi.org/10.1145/3453483.3454111}
\showDOI{\tempurl}


\bibitem[Bournez and Garnier(2005)]%
        {BournezG05}
\bibfield{author}{\bibinfo{person}{Olivier Bournez} {and}
  \bibinfo{person}{Florent Garnier}.} \bibinfo{year}{2005}\natexlab{}.
\newblock \showarticletitle{Proving Positive Almost-Sure Termination}. In
  \bibinfo{booktitle}{\emph{Term Rewriting and Applications, 16th International
  Conference, {RTA} 2005, Nara, Japan, April 19-21, 2005, Proceedings}}
  \emph{(\bibinfo{series}{Lecture Notes in Computer Science},
  Vol.~\bibinfo{volume}{3467})},
  \bibfield{editor}{\bibinfo{person}{J{\"{u}}rgen Giesl}} (Ed.).
  \bibinfo{publisher}{Springer}, \bibinfo{pages}{323--337}.
\newblock
\urldef\tempurl%
\url{https://doi.org/10.1007/978-3-540-32033-3\_24}
\showDOI{\tempurl}


\bibitem[Chakarov and Sankaranarayanan(2013)]%
        {ChakarovS13}
\bibfield{author}{\bibinfo{person}{Aleksandar Chakarov} {and}
  \bibinfo{person}{Sriram Sankaranarayanan}.} \bibinfo{year}{2013}\natexlab{}.
\newblock \showarticletitle{Probabilistic Program Analysis with Martingales}.
  In \bibinfo{booktitle}{\emph{Computer Aided Verification - 25th International
  Conference, {CAV} 2013, Saint Petersburg, Russia, July 13-19, 2013.
  Proceedings}} \emph{(\bibinfo{series}{Lecture Notes in Computer Science},
  Vol.~\bibinfo{volume}{8044})}, \bibfield{editor}{\bibinfo{person}{Natasha
  Sharygina} {and} \bibinfo{person}{Helmut Veith}} (Eds.).
  \bibinfo{publisher}{Springer}, \bibinfo{pages}{511--526}.
\newblock
\urldef\tempurl%
\url{https://doi.org/10.1007/978-3-642-39799-8\_34}
\showDOI{\tempurl}


\bibitem[Chandra(1978)]%
        {Chandra78}
\bibfield{author}{\bibinfo{person}{Ashok Chandra}.}
  \bibinfo{year}{1978}\natexlab{}.
\newblock \showarticletitle{Computable nondeterministic functions}. In
  \bibinfo{booktitle}{\emph{Foundations of Computer Science (FOCS)}}.
  \bibinfo{publisher}{IEEE}, \bibinfo{pages}{127--131}.
\newblock


\bibitem[Chatterjee and Fu(2017)]%
        {chatterjee2017termination}
\bibfield{author}{\bibinfo{person}{Krishnendu Chatterjee} {and}
  \bibinfo{person}{Hongfei Fu}.} \bibinfo{year}{2017}\natexlab{}.
\newblock \bibinfo{title}{Termination of Nondeterministic Recursive
  Probabilistic Programs}.
\newblock
\newblock
\showeprint[arxiv]{1701.02944}~[cs.PL]


\bibitem[Chatterjee et~al\mbox{.}(2016)]%
        {ChatterjeeFG16}
\bibfield{author}{\bibinfo{person}{Krishnendu Chatterjee},
  \bibinfo{person}{Hongfei Fu}, {and} \bibinfo{person}{Amir~Kafshdar
  Goharshady}.} \bibinfo{year}{2016}\natexlab{}.
\newblock \showarticletitle{Termination Analysis of Probabilistic Programs
  Through Positivstellensatz's}. In \bibinfo{booktitle}{\emph{Computer Aided
  Verification - 28th International Conference, {CAV} 2016, Toronto, ON,
  Canada, July 17-23, 2016, Proceedings, Part {I}}}
  \emph{(\bibinfo{series}{Lecture Notes in Computer Science},
  Vol.~\bibinfo{volume}{9779})}, \bibfield{editor}{\bibinfo{person}{Swarat
  Chaudhuri} {and} \bibinfo{person}{Azadeh Farzan}} (Eds.).
  \bibinfo{publisher}{Springer}, \bibinfo{pages}{3--22}.
\newblock
\urldef\tempurl%
\url{https://doi.org/10.1007/978-3-319-41528-4\_1}
\showDOI{\tempurl}


\bibitem[Chatterjee et~al\mbox{.}(2018)]%
        {ChatterjeeFNH18}
\bibfield{author}{\bibinfo{person}{Krishnendu Chatterjee},
  \bibinfo{person}{Hongfei Fu}, \bibinfo{person}{Petr Novotn{\'{y}}}, {and}
  \bibinfo{person}{Rouzbeh Hasheminezhad}.} \bibinfo{year}{2018}\natexlab{}.
\newblock \showarticletitle{Algorithmic Analysis of Qualitative and
  Quantitative Termination Problems for Affine Probabilistic Programs}.
\newblock \bibinfo{journal}{\emph{{ACM} Trans. Program. Lang. Syst.}}
  \bibinfo{volume}{40}, \bibinfo{number}{2} (\bibinfo{year}{2018}),
  \bibinfo{pages}{7:1--7:45}.
\newblock
\urldef\tempurl%
\url{https://doi.org/10.1145/3174800}
\showDOI{\tempurl}


\bibitem[Chatterjee et~al\mbox{.}(2022)]%
        {ChatterjeeGMZ22}
\bibfield{author}{\bibinfo{person}{Krishnendu Chatterjee},
  \bibinfo{person}{Amir~Kafshdar Goharshady}, \bibinfo{person}{Tobias
  Meggendorfer}, {and} \bibinfo{person}{Dorde Zikelic}.}
  \bibinfo{year}{2022}\natexlab{}.
\newblock \showarticletitle{Sound and Complete Certificates for Quantitative
  Termination Analysis of Probabilistic Programs}. In
  \bibinfo{booktitle}{\emph{Computer Aided Verification - 34th International
  Conference, {CAV} 2022, Haifa, Israel, August 7-10, 2022, Proceedings, Part
  {I}}} \emph{(\bibinfo{series}{Lecture Notes in Computer Science},
  Vol.~\bibinfo{volume}{13371})}, \bibfield{editor}{\bibinfo{person}{Sharon
  Shoham} {and} \bibinfo{person}{Yakir Vizel}} (Eds.).
  \bibinfo{publisher}{Springer}, \bibinfo{pages}{55--78}.
\newblock
\urldef\tempurl%
\url{https://doi.org/10.1007/978-3-031-13185-1\_4}
\showDOI{\tempurl}


\bibitem[Chatterjee et~al\mbox{.}(2021)]%
        {ChatterjeeGNZZ21}
\bibfield{author}{\bibinfo{person}{Krishnendu Chatterjee},
  \bibinfo{person}{Ehsan~Kafshdar Goharshady}, \bibinfo{person}{Petr
  Novotn{\'{y}}}, \bibinfo{person}{Jiri Z{\'{a}}rev{\'{u}}cky}, {and}
  \bibinfo{person}{Dorde Zikelic}.} \bibinfo{year}{2021}\natexlab{}.
\newblock \showarticletitle{On Lexicographic Proof Rules for Probabilistic
  Termination}. In \bibinfo{booktitle}{\emph{Formal Methods - 24th
  International Symposium, {FM} 2021, Virtual Event, November 20-26, 2021,
  Proceedings}} \emph{(\bibinfo{series}{Lecture Notes in Computer Science},
  Vol.~\bibinfo{volume}{13047})}, \bibfield{editor}{\bibinfo{person}{Marieke
  Huisman}, \bibinfo{person}{Corina~S. Pasareanu}, {and}
  \bibinfo{person}{Naijun Zhan}} (Eds.). \bibinfo{publisher}{Springer},
  \bibinfo{pages}{619--639}.
\newblock
\urldef\tempurl%
\url{https://doi.org/10.1007/978-3-030-90870-6\_33}
\showDOI{\tempurl}


\bibitem[Chatterjee et~al\mbox{.}(2017)]%
        {ChatterjeeNZ17}
\bibfield{author}{\bibinfo{person}{Krishnendu Chatterjee},
  \bibinfo{person}{Petr Novotn{\'{y}}}, {and} \bibinfo{person}{Dorde Zikelic}.}
  \bibinfo{year}{2017}\natexlab{}.
\newblock \showarticletitle{Stochastic invariants for probabilistic
  termination}. In \bibinfo{booktitle}{\emph{Proceedings of the 44th {ACM}
  {SIGPLAN} Symposium on Principles of Programming Languages, {POPL} 2017,
  Paris, France, January 18-20, 2017}},
  \bibfield{editor}{\bibinfo{person}{Giuseppe Castagna} {and}
  \bibinfo{person}{Andrew~D. Gordon}} (Eds.). \bibinfo{publisher}{{ACM}},
  \bibinfo{pages}{145--160}.
\newblock
\urldef\tempurl%
\url{https://doi.org/10.1145/3009837.3009873}
\showDOI{\tempurl}


\bibitem[Col{\'{o}}n et~al\mbox{.}(2003)]%
        {ColonSS03}
\bibfield{author}{\bibinfo{person}{Michael Col{\'{o}}n},
  \bibinfo{person}{Sriram Sankaranarayanan}, {and} \bibinfo{person}{Henny
  Sipma}.} \bibinfo{year}{2003}\natexlab{}.
\newblock \showarticletitle{Linear Invariant Generation Using Non-linear
  Constraint Solving}. In \bibinfo{booktitle}{\emph{Computer Aided
  Verification, 15th International Conference, {CAV} 2003, Boulder, CO, USA,
  July 8-12, 2003, Proceedings}} \emph{(\bibinfo{series}{Lecture Notes in
  Computer Science}, Vol.~\bibinfo{volume}{2725})},
  \bibfield{editor}{\bibinfo{person}{Warren A.~Hunt Jr.} {and}
  \bibinfo{person}{Fabio Somenzi}} (Eds.). \bibinfo{publisher}{Springer},
  \bibinfo{pages}{420--432}.
\newblock
\urldef\tempurl%
\url{https://doi.org/10.1007/978-3-540-45069-6\_39}
\showDOI{\tempurl}


\bibitem[Cook et~al\mbox{.}(2013)]%
        {CookSZ13}
\bibfield{author}{\bibinfo{person}{Byron Cook}, \bibinfo{person}{Abigail See},
  {and} \bibinfo{person}{Florian Zuleger}.} \bibinfo{year}{2013}\natexlab{}.
\newblock \showarticletitle{Ramsey vs. Lexicographic Termination Proving}. In
  \bibinfo{booktitle}{\emph{Tools and Algorithms for the Construction and
  Analysis of Systems - 19th International Conference, {TACAS} 2013, Held as
  Part of the European Joint Conferences on Theory and Practice of Software,
  {ETAPS} 2013, Rome, Italy, March 16-24, 2013. Proceedings}}
  \emph{(\bibinfo{series}{Lecture Notes in Computer Science},
  Vol.~\bibinfo{volume}{7795})}, \bibfield{editor}{\bibinfo{person}{Nir
  Piterman} {and} \bibinfo{person}{Scott~A. Smolka}} (Eds.).
  \bibinfo{publisher}{Springer}, \bibinfo{pages}{47--61}.
\newblock
\urldef\tempurl%
\url{https://doi.org/10.1007/978-3-642-36742-7\_4}
\showDOI{\tempurl}


\bibitem[Dijkstra(1976)]%
        {dijkstra1976discipline}
\bibfield{author}{\bibinfo{person}{Edsger~Wybe Dijkstra}.}
  \bibinfo{year}{1976}\natexlab{}.
\newblock \bibinfo{booktitle}{\emph{A discipline of programming}}.
  Vol.~\bibinfo{volume}{613924118}.
\newblock \bibinfo{publisher}{prentice-hall Englewood Cliffs}.
\newblock


\bibitem[Fioriti and Hermanns(2015)]%
        {FioritiH15}
\bibfield{author}{\bibinfo{person}{Luis Mar{\'{\i}}a~Ferrer Fioriti} {and}
  \bibinfo{person}{Holger Hermanns}.} \bibinfo{year}{2015}\natexlab{}.
\newblock \showarticletitle{Probabilistic Termination: Soundness, Completeness,
  and Compositionality}. In \bibinfo{booktitle}{\emph{Proceedings of the 42nd
  Annual {ACM} {SIGPLAN-SIGACT} Symposium on Principles of Programming
  Languages, {POPL} 2015, Mumbai, India, January 15-17, 2015}},
  \bibfield{editor}{\bibinfo{person}{Sriram~K. Rajamani} {and}
  \bibinfo{person}{David Walker}} (Eds.). \bibinfo{publisher}{{ACM}},
  \bibinfo{pages}{489--501}.
\newblock
\urldef\tempurl%
\url{https://doi.org/10.1145/2676726.2677001}
\showDOI{\tempurl}


\bibitem[Francez(1986)]%
        {Francez86}
\bibfield{author}{\bibinfo{person}{Nissim Francez}.}
  \bibinfo{year}{1986}\natexlab{}.
\newblock \bibinfo{booktitle}{\emph{Fairness}}.
\newblock \bibinfo{publisher}{Springer}.
\newblock
\showISBNx{978-3-540-96235-9}
\urldef\tempurl%
\url{https://doi.org/10.1007/978-1-4612-4886-6}
\showDOI{\tempurl}


\bibitem[Fu and Chatterjee(2019)]%
        {FuC19}
\bibfield{author}{\bibinfo{person}{Hongfei Fu} {and}
  \bibinfo{person}{Krishnendu Chatterjee}.} \bibinfo{year}{2019}\natexlab{}.
\newblock \showarticletitle{Termination of Nondeterministic Probabilistic
  Programs}. In \bibinfo{booktitle}{\emph{Verification, Model Checking, and
  Abstract Interpretation - 20th International Conference, {VMCAI} 2019,
  Cascais, Portugal, January 13-15, 2019, Proceedings}}
  \emph{(\bibinfo{series}{Lecture Notes in Computer Science},
  Vol.~\bibinfo{volume}{11388})}, \bibfield{editor}{\bibinfo{person}{Constantin
  Enea} {and} \bibinfo{person}{Ruzica Piskac}} (Eds.).
  \bibinfo{publisher}{Springer}, \bibinfo{pages}{468--490}.
\newblock
\urldef\tempurl%
\url{https://doi.org/10.1007/978-3-030-11245-5\_22}
\showDOI{\tempurl}


\bibitem[Harel(1986)]%
        {Harel86}
\bibfield{author}{\bibinfo{person}{David Harel}.}
  \bibinfo{year}{1986}\natexlab{}.
\newblock \showarticletitle{Effective transformations on infinite trees, with
  applications to high undecidability, dominoes, and fairness}.
\newblock \bibinfo{journal}{\emph{J. {ACM}}} \bibinfo{volume}{33},
  \bibinfo{number}{1} (\bibinfo{year}{1986}), \bibinfo{pages}{224--248}.
\newblock
\urldef\tempurl%
\url{https://doi.org/10.1145/4904.4993}
\showDOI{\tempurl}


\bibitem[Harel and Kozen(1984)]%
        {HarelK84}
\bibfield{author}{\bibinfo{person}{David Harel} {and} \bibinfo{person}{Dexter
  Kozen}.} \bibinfo{year}{1984}\natexlab{}.
\newblock \showarticletitle{A programming language for the inductive sets and
  applications}.
\newblock \bibinfo{journal}{\emph{Information and Control}}
  \bibinfo{volume}{63} (\bibinfo{year}{1984}), \bibinfo{pages}{118--139}.
\newblock


\bibitem[Huang et~al\mbox{.}(2018)]%
        {HuangFC18}
\bibfield{author}{\bibinfo{person}{Mingzhang Huang}, \bibinfo{person}{Hongfei
  Fu}, {and} \bibinfo{person}{Krishnendu Chatterjee}.}
  \bibinfo{year}{2018}\natexlab{}.
\newblock \showarticletitle{New Approaches for Almost-Sure Termination of
  Probabilistic Programs}. In \bibinfo{booktitle}{\emph{Programming Languages
  and Systems - 16th Asian Symposium, {APLAS} 2018, Wellington, New Zealand,
  December 2-6, 2018, Proceedings}} \emph{(\bibinfo{series}{Lecture Notes in
  Computer Science}, Vol.~\bibinfo{volume}{11275})},
  \bibfield{editor}{\bibinfo{person}{Sukyoung Ryu}} (Ed.).
  \bibinfo{publisher}{Springer}, \bibinfo{pages}{181--201}.
\newblock
\urldef\tempurl%
\url{https://doi.org/10.1007/978-3-030-02768-1\_11}
\showDOI{\tempurl}


\bibitem[Kaminski et~al\mbox{.}(2019)]%
        {KaminskiKM19}
\bibfield{author}{\bibinfo{person}{Benjamin~Lucien Kaminski},
  \bibinfo{person}{Joost{-}Pieter Katoen}, {and} \bibinfo{person}{Christoph
  Matheja}.} \bibinfo{year}{2019}\natexlab{}.
\newblock \showarticletitle{On the hardness of analyzing probabilistic
  programs}.
\newblock \bibinfo{journal}{\emph{Acta Informatica}} \bibinfo{volume}{56},
  \bibinfo{number}{3} (\bibinfo{year}{2019}), \bibinfo{pages}{255--285}.
\newblock
\urldef\tempurl%
\url{https://doi.org/10.1007/s00236-018-0321-1}
\showDOI{\tempurl}


\bibitem[Kaminski et~al\mbox{.}(2018)]%
        {KaminskiKMO18}
\bibfield{author}{\bibinfo{person}{Benjamin~Lucien Kaminski},
  \bibinfo{person}{Joost{-}Pieter Katoen}, \bibinfo{person}{Christoph Matheja},
  {and} \bibinfo{person}{Federico Olmedo}.} \bibinfo{year}{2018}\natexlab{}.
\newblock \showarticletitle{Weakest Precondition Reasoning for Expected
  Runtimes of Randomized Algorithms}.
\newblock \bibinfo{journal}{\emph{J. {ACM}}} \bibinfo{volume}{65},
  \bibinfo{number}{5} (\bibinfo{year}{2018}), \bibinfo{pages}{30:1--30:68}.
\newblock
\urldef\tempurl%
\url{https://doi.org/10.1145/3208102}
\showDOI{\tempurl}


\bibitem[Katoen et~al\mbox{.}(2015)]%
        {KatoenGJKO15}
\bibfield{author}{\bibinfo{person}{Joost{-}Pieter Katoen},
  \bibinfo{person}{Friedrich Gretz}, \bibinfo{person}{Nils Jansen},
  \bibinfo{person}{Benjamin~Lucien Kaminski}, {and} \bibinfo{person}{Federico
  Olmedo}.} \bibinfo{year}{2015}\natexlab{}.
\newblock \showarticletitle{Understanding Probabilistic Programs}. In
  \bibinfo{booktitle}{\emph{Correct System Design - Symposium in Honor of
  Ernst-R{\"{u}}diger Olderog on the Occasion of His 60th Birthday, Oldenburg,
  Germany, September 8-9, 2015. Proceedings}} \emph{(\bibinfo{series}{Lecture
  Notes in Computer Science}, Vol.~\bibinfo{volume}{9360})},
  \bibfield{editor}{\bibinfo{person}{Roland Meyer},
  \bibinfo{person}{Andr{\'{e}} Platzer}, {and} \bibinfo{person}{Heike
  Wehrheim}} (Eds.). \bibinfo{publisher}{Springer}, \bibinfo{pages}{15--32}.
\newblock
\urldef\tempurl%
\url{https://doi.org/10.1007/978-3-319-23506-6\_4}
\showDOI{\tempurl}


\bibitem[Kirby and Paris(1982)]%
        {KirbyParis1982}
\bibfield{author}{\bibinfo{person}{Laurence Kirby} {and}
  \bibinfo{person}{Jeff~B. Paris}.} \bibinfo{year}{1982}\natexlab{}.
\newblock \showarticletitle{Accessible Independence Results for Peano
  Arithmetic}.
\newblock \bibinfo{journal}{\emph{Bulletin of The London Mathematical Society}}
   \bibinfo{volume}{14} (\bibinfo{year}{1982}), \bibinfo{pages}{285--293}.
\newblock


\bibitem[Kozen(2006)]%
        {Kozen06}
\bibfield{author}{\bibinfo{person}{Dexter Kozen}.}
  \bibinfo{year}{2006}\natexlab{}.
\newblock \bibinfo{booktitle}{\emph{Theory of Computation}}.
\newblock \bibinfo{publisher}{Springer}.
\newblock
\showISBNx{978-1-84628-297-3}
\urldef\tempurl%
\url{https://doi.org/10.1007/1-84628-477-5}
\showDOI{\tempurl}


\bibitem[Kura et~al\mbox{.}(2019)]%
        {KuraUH19}
\bibfield{author}{\bibinfo{person}{Satoshi Kura}, \bibinfo{person}{Natsuki
  Urabe}, {and} \bibinfo{person}{Ichiro Hasuo}.}
  \bibinfo{year}{2019}\natexlab{}.
\newblock \showarticletitle{Tail Probabilities for Randomized Program Runtimes
  via Martingales for Higher Moments}. In \bibinfo{booktitle}{\emph{Tools and
  Algorithms for the Construction and Analysis of Systems - 25th International
  Conference, {TACAS} 2019, Held as Part of the European Joint Conferences on
  Theory and Practice of Software, {ETAPS} 2019, Prague, Czech Republic, April
  6-11, 2019, Proceedings, Part {II}}} \emph{(\bibinfo{series}{Lecture Notes in
  Computer Science}, Vol.~\bibinfo{volume}{11428})},
  \bibfield{editor}{\bibinfo{person}{Tom{\'{a}}s Vojnar} {and}
  \bibinfo{person}{Lijun Zhang}} (Eds.). \bibinfo{publisher}{Springer},
  \bibinfo{pages}{135--153}.
\newblock
\urldef\tempurl%
\url{https://doi.org/10.1007/978-3-030-17465-1\_8}
\showDOI{\tempurl}


\bibitem[Leng{\'{a}}l et~al\mbox{.}(2017)]%
        {LengalLMR17}
\bibfield{author}{\bibinfo{person}{Ondrej Leng{\'{a}}l},
  \bibinfo{person}{Anthony~Widjaja Lin}, \bibinfo{person}{Rupak Majumdar},
  {and} \bibinfo{person}{Philipp R{\"{u}}mmer}.}
  \bibinfo{year}{2017}\natexlab{}.
\newblock \showarticletitle{Fair Termination for Parameterized Probabilistic
  Concurrent Systems}. In \bibinfo{booktitle}{\emph{Tools and Algorithms for
  the Construction and Analysis of Systems - 23rd International Conference,
  {TACAS} 2017, Held as Part of the European Joint Conferences on Theory and
  Practice of Software, {ETAPS} 2017, Uppsala, Sweden, April 22-29, 2017,
  Proceedings, Part {I}}} \emph{(\bibinfo{series}{Lecture Notes in Computer
  Science}, Vol.~\bibinfo{volume}{10205})},
  \bibfield{editor}{\bibinfo{person}{Axel Legay} {and} \bibinfo{person}{Tiziana
  Margaria}} (Eds.). \bibinfo{pages}{499--517}.
\newblock
\urldef\tempurl%
\url{https://doi.org/10.1007/978-3-662-54577-5\_29}
\showDOI{\tempurl}


\bibitem[Manna(1974)]%
        {manna1974mathematical}
\bibfield{author}{\bibinfo{person}{Z. Manna}.} \bibinfo{year}{1974}\natexlab{}.
\newblock \bibinfo{booktitle}{\emph{Mathematical Theory of Computation}}.
\newblock \bibinfo{publisher}{McGraw-Hill}.
\newblock
\showISBNx{9780070399105}
\showLCCN{73012753}
\urldef\tempurl%
\url{https://books.google.de/books?id=D7omAAAAMAAJ}
\showURL{%
\tempurl}


\bibitem[McIver and Morgan(2005)]%
        {McIverM05}
\bibfield{author}{\bibinfo{person}{Annabelle McIver} {and}
  \bibinfo{person}{Carroll Morgan}.} \bibinfo{year}{2005}\natexlab{}.
\newblock \bibinfo{booktitle}{\emph{Abstraction, Refinement and Proof for
  Probabilistic Systems}}.
\newblock \bibinfo{publisher}{Springer}.
\newblock
\showISBNx{978-0-387-40115-7}
\urldef\tempurl%
\url{https://doi.org/10.1007/b138392}
\showDOI{\tempurl}


\bibitem[McIver et~al\mbox{.}(2018)]%
        {McIverMKK18}
\bibfield{author}{\bibinfo{person}{Annabelle McIver}, \bibinfo{person}{Carroll
  Morgan}, \bibinfo{person}{Benjamin~Lucien Kaminski}, {and}
  \bibinfo{person}{Joost{-}Pieter Katoen}.} \bibinfo{year}{2018}\natexlab{}.
\newblock \showarticletitle{A new proof rule for almost-sure termination}.
\newblock \bibinfo{journal}{\emph{Proc. {ACM} Program. Lang.}}
  \bibinfo{volume}{2}, \bibinfo{number}{{POPL}} (\bibinfo{year}{2018}),
  \bibinfo{pages}{33:1--33:28}.
\newblock
\urldef\tempurl%
\url{https://doi.org/10.1145/3158121}
\showDOI{\tempurl}


\bibitem[Moosbrugger et~al\mbox{.}(2021)]%
        {MoosbruggerBKK21}
\bibfield{author}{\bibinfo{person}{Marcel Moosbrugger}, \bibinfo{person}{Ezio
  Bartocci}, \bibinfo{person}{Joost{-}Pieter Katoen}, {and}
  \bibinfo{person}{Laura Kov{\'{a}}cs}.} \bibinfo{year}{2021}\natexlab{}.
\newblock \showarticletitle{Automated Termination Analysis of Polynomial
  Probabilistic Programs}. In \bibinfo{booktitle}{\emph{Programming Languages
  and Systems - 30th European Symposium on Programming, {ESOP} 2021, Held as
  Part of the European Joint Conferences on Theory and Practice of Software,
  {ETAPS} 2021, Luxembourg City, Luxembourg, March 27 - April 1, 2021,
  Proceedings}} \emph{(\bibinfo{series}{Lecture Notes in Computer Science},
  Vol.~\bibinfo{volume}{12648})}, \bibfield{editor}{\bibinfo{person}{Nobuko
  Yoshida}} (Ed.). \bibinfo{publisher}{Springer}, \bibinfo{pages}{491--518}.
\newblock
\urldef\tempurl%
\url{https://doi.org/10.1007/978-3-030-72019-3\_18}
\showDOI{\tempurl}


\bibitem[Olmedo et~al\mbox{.}(2016)]%
        {OlmedoKKM16}
\bibfield{author}{\bibinfo{person}{Federico Olmedo},
  \bibinfo{person}{Benjamin~Lucien Kaminski}, \bibinfo{person}{Joost{-}Pieter
  Katoen}, {and} \bibinfo{person}{Christoph Matheja}.}
  \bibinfo{year}{2016}\natexlab{}.
\newblock \showarticletitle{Reasoning about Recursive Probabilistic Programs}.
  In \bibinfo{booktitle}{\emph{Proceedings of the 31st Annual {ACM/IEEE}
  Symposium on Logic in Computer Science, {LICS} '16, New York, NY, USA, July
  5-8, 2016}}, \bibfield{editor}{\bibinfo{person}{Martin Grohe},
  \bibinfo{person}{Eric Koskinen}, {and} \bibinfo{person}{Natarajan Shankar}}
  (Eds.). \bibinfo{publisher}{{ACM}}, \bibinfo{pages}{672--681}.
\newblock
\urldef\tempurl%
\url{https://doi.org/10.1145/2933575.2935317}
\showDOI{\tempurl}


\bibitem[Pnueli(1983)]%
        {Pnueli83}
\bibfield{author}{\bibinfo{person}{Amir Pnueli}.}
  \bibinfo{year}{1983}\natexlab{}.
\newblock \showarticletitle{On the Extremely Fair Treatment of Probabilistic
  Algorithms}. In \bibinfo{booktitle}{\emph{Proceedings of the 15th Annual
  {ACM} Symposium on Theory of Computing, 25-27 April, 1983, Boston,
  Massachusetts, {USA}}}, \bibfield{editor}{\bibinfo{person}{David~S. Johnson},
  \bibinfo{person}{Ronald Fagin}, \bibinfo{person}{Michael~L. Fredman},
  \bibinfo{person}{David Harel}, \bibinfo{person}{Richard~M. Karp},
  \bibinfo{person}{Nancy~A. Lynch}, \bibinfo{person}{Christos~H.
  Papadimitriou}, \bibinfo{person}{Ronald~L. Rivest},
  \bibinfo{person}{Walter~L. Ruzzo}, {and} \bibinfo{person}{Joel~I. Seiferas}}
  (Eds.). \bibinfo{publisher}{{ACM}}, \bibinfo{pages}{278--290}.
\newblock
\urldef\tempurl%
\url{https://doi.org/10.1145/800061.808757}
\showDOI{\tempurl}


\bibitem[P\'{o}lya(1921)]%
        {Polya}
\bibfield{author}{\bibinfo{person}{G. P\'{o}lya}.}
  \bibinfo{year}{1921}\natexlab{}.
\newblock \showarticletitle{{\"{U}ber} eine aufgabe betreffend die irrfahrt im
  strassennetz}.
\newblock \bibinfo{journal}{\emph{Math. Ann.}}  \bibinfo{volume}{84}
  (\bibinfo{year}{1921}), \bibinfo{pages}{149--160}.
\newblock


\bibitem[{Rogers Jr.}(1987)]%
        {Rogers}
\bibfield{author}{\bibinfo{person}{Hartley {Rogers Jr.}}}
  \bibinfo{year}{1987}\natexlab{}.
\newblock \bibinfo{booktitle}{\emph{Theory of recursive functions and effective
  computability (Reprint from 1967)}}.
\newblock \bibinfo{publisher}{{MIT} Press}.
\newblock
\showISBNx{978-0-262-68052-3}
\urldef\tempurl%
\url{https://mitpress.mit.edu/9780262680523/theory-of-recursive-functions-and-effective-computability/}
\showURL{%
\tempurl}


\bibitem[Takisaka et~al\mbox{.}(2021)]%
        {TakisakaOUH21}
\bibfield{author}{\bibinfo{person}{Toru Takisaka}, \bibinfo{person}{Yuichiro
  Oyabu}, \bibinfo{person}{Natsuki Urabe}, {and} \bibinfo{person}{Ichiro
  Hasuo}.} \bibinfo{year}{2021}\natexlab{}.
\newblock \showarticletitle{Ranking and Repulsing Supermartingales for
  Reachability in Randomized Programs}.
\newblock \bibinfo{journal}{\emph{{ACM} Trans. Program. Lang. Syst.}}
  \bibinfo{volume}{43}, \bibinfo{number}{2} (\bibinfo{year}{2021}),
  \bibinfo{pages}{5:1--5:46}.
\newblock
\urldef\tempurl%
\url{https://doi.org/10.1145/3450967}
\showDOI{\tempurl}


\bibitem[Turing(1937)]%
        {Turing37}
\bibfield{author}{\bibinfo{person}{Alan~M. Turing}.}
  \bibinfo{year}{1937}\natexlab{}.
\newblock \showarticletitle{On computable numbers, with an application to the
  Entscheidungsproblem}.
\newblock \bibinfo{journal}{\emph{Proc. London Math. Soc.}}
  \bibinfo{volume}{s2-42}, \bibinfo{number}{1} (\bibinfo{year}{1937}),
  \bibinfo{pages}{230--265}.
\newblock
\urldef\tempurl%
\url{https://doi.org/10.1112/plms/s2-42.1.230}
\showDOI{\tempurl}


\end{thebibliography}

\label{afterbibliography}
\newoutputstream{pagesbib}
\openoutputfile{main.pagesbib.ctr}{pagesbib}
\addtostream{pagesbib}{\getpagerefnumber{afterbibliography}}
\closeoutputstream{pagesbib}

\appendix

\section{Proof of Lemma \ref{lemma:ordinal-maintanence-property}} 
\label{sec:hydra-proof}

\begin{figure}[b]
    \begin{minipage}[c]{0.35\linewidth}
        \tikzfig{hydra-depth2}
    \end{minipage}
    \quad\quad\quad\quad
    \begin{minipage}[c]{0.3\linewidth}
        \tikzfig{hydra-simple}
    \end{minipage}
    \caption{(Left) A move from the line hydra; (Right) A hydra with T value $k_1 \omega + k_2$.}
    \label{fig:simple-hydra}
\end{figure}
We restate the lemma here:
\begin{lemma}
    From any hydra $H$ with root node $r$ with $T(r) \geq \omega$, one can reach, in one step and with non-zero probabilities, an infinite sequence of hydras $H_1, H_2, \ldots$ with roots $r_1, r_2, \ldots$ such that the smallest ordinal larger than $T(r_1), T(r_2), \ldots$ is $T(r)$.
\end{lemma}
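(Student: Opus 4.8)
The plan is to split on whether $T(r)$ is a successor ordinal or a limit ordinal and, in each case, to exhibit explicitly a family of one-round continuations of the game and read off their root ordinals. Throughout I would use standard facts about the natural (Hessenberg) sum $\oplus$ and Cantor normal form: $\oplus$ is strictly monotone in each argument; if every CNF-exponent of $\alpha$ is at most every CNF-exponent of $\beta$ then $\alpha\oplus\beta=\beta+\alpha$ (ordinary sum); $\gamma\mapsto\omega^\gamma$ is strictly increasing and continuous at limits; ordinary ordinal addition $\beta+(\cdot)$ is continuous; a node $v$ has a leaf child exactly when $T(v)$ is a successor, and deleting one leaf child of a successor-valued node $v$ replaces $T(v)$ by $T(v)-1$; and, following \citet{KirbyParis1982}, whenever a child's $T$-value strictly decreases so does its parent's, so every legal round strictly decreases $T(r)$. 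The successor case is then immediate: if $T(r)$ is a successor then $r$ has a leaf child $u$, and since $u$'s parent is the root, $u$ has no grandparent, so Hercules chopping $u$ is a deterministic round (probability $1$) after which the root ordinal is $T(r)-1$; taking $H_1=H_2=\cdots$ to be this one hydra, the least ordinal exceeding all the (equal) root ordinals is exactly $T(r)$.

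Now suppose $T(r)$ is a limit. I would build a branch $r=x_0,x_1,\dots$ by letting $x_{i+1}$ be a child of $x_i$ of \emph{minimum} $T$-value, stopping at the first $x_{j+1}$ whose $T$-value is a successor; this terminates because $T(x_i)\ge\omega^{T(x_{i+1})}>T(x_{i+1})$, and then $x_0,\dots,x_j$ all have limit $T$-values. Set $p:=x_{j+1}$ and $q:=x_j$ (the grandparent); since $T(p)$ is a successor, $p$ has a leaf child $u$, which sits at depth $\ge j+2\ge 2$, so its grandparent is $q$. Let Hercules chop $u$. For each $k\ge 0$ there is a scheduler and a branch of probability $2^{-k}$ along which the Hydra evolves exactly $k$ times, so that the regrowth capacity is $N=4^{k+1}$ and, together with the surviving original, $N$ copies of $p':=(p$ with $u$ deleted$)$, for which $T(p')=T(p)-1$, come to hang below $q$; call the resulting hydra $H^{(k)}$ and let $\alpha_k$ be its root ordinal. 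The goal is then to prove $\alpha_0<\alpha_1<\cdots$ with $\sup_k\alpha_k=T(r)$, which yields the lemma with $H_i:=H^{(i-1)}$.

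To prove this, write $T'$ for the ordinal map of $H^{(k)}$ and compute it bottom-up along $x_j,x_{j-1},\dots,x_0$ as a function of $N$; since the expressions obtained are monotone in $N$ and $\{4^{k+1}\}$ is cofinal in $\mathbb{N}$, it suffices to take suprema over all $N\in\mathbb{N}$. At $q=x_j$, writing $O_j$ for the children of $x_j$ other than $p$, we get $T'(x_j)=\omega^{T(p)-1}\!\cdot N\oplus\bigoplus_{c\in O_j}\omega^{T(c)}$; because $p$ had minimum value, every $c\in O_j$ has $T(c)\ge T(p)>T(p)-1$, so this natural sum collapses to the ordinary sum $\bigl(\bigoplus_{c\in O_j}\omega^{T(c)}\bigr)+\omega^{T(p)-1}N$, which is strictly increasing in $N$ with supremum $\bigl(\bigoplus_{c\in O_j}\omega^{T(c)}\bigr)+\omega^{T(p)}=\omega^{T(p)}\oplus\bigoplus_{c\in O_j}\omega^{T(c)}=T(x_j)$, a limit. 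Inductively, for $i<j$, since $x_{i+1}$ had minimum value among $x_i$'s children and $T'(x_{i+1})<T(x_{i+1})$, the same collapse gives $T'(x_i)=\bigl(\bigoplus_{c\in O_i}\omega^{T(c)}\bigr)+\omega^{T'(x_{i+1})}$, and by continuity of $\beta+(\cdot)$ and of $\omega^{(\cdot)}$ at the limit $T(x_{i+1})$ its supremum over $N$ is $\bigl(\bigoplus_{c\in O_i}\omega^{T(c)}\bigr)+\omega^{T(x_{i+1})}=T(x_i)$, again a limit for $i\ge 1$. At $i=0$ this reads $\sup_N T'(r)=T(r)$; since $N\mapsto T'(r)$ is strictly increasing (strictness propagates upward through $\omega^{(\cdot)}$ and $\beta+(\cdot)$), the subsequence $N=4^{k+1}$ gives a strictly increasing sequence $\alpha_k$, all below the limit $T(r)$, with supremum $T(r)$, so $T(r)$ is the least ordinal above all of them.

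The crux, and the only genuine obstacle, is this last computation: one must pick the head to chop so that the single perturbation introduced by the regrowth survives the climb to the root as a true cofinal approximation of $T(r)$, rather than being absorbed by the contributions of the sibling subtrees met along the way. Choosing the head on the \emph{minimum}-value branch, descending to the shallowest node of successor $T$-value, is exactly what forces every natural sum in the climb to degenerate into an ordinary sum with the perturbed summand sitting at the bottom, where ordinary addition is continuous in the relevant argument; a naive choice such as a deepest head or the maximum-value branch does not have this property and produces a supremum strictly below $T(r)$.
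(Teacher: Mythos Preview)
Your proof is correct and takes a genuinely different route from the paper. The paper argues by transfinite induction on $T(r)$: at the inductive step it selects the child $x$ of the root with \emph{maximal} $T$-value and, when $T(x)\ge\omega$, invokes the induction hypothesis on the subtree rooted at $x$, handling the remaining small cases by hand. You instead give a direct, induction-free construction: after dispatching the successor case by chopping a leaf immediately below the root, you treat the limit case by descending along the branch of \emph{minimal} $T$-values to the first successor-valued node, chopping a leaf there, varying the evolution count, and then verifying that the resulting root ordinals are cofinal in $T(r)$ via an explicit Cantor-normal-form computation climbing back up the branch. The minimum-versus-maximum contrast is the crux, and your final paragraph pinpoints exactly why: descending along minima forces every natural sum encountered on the climb back to the root to collapse to an ordinary sum with the perturbed summand sitting last, where $\beta+(\cdot)$ and $\omega^{(\cdot)}$ are continuous; a descent along maxima does not enjoy this property, because ordinal addition is not continuous in its left argument. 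Your argument is longer than the paper's sketch but fully explicit and self-contained, and the care you take with the ordinal arithmetic is warranted.
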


\begin{proof}
    We prove this by induction on $T(r)$.

    \emph{Base case: $r = \omega$.} This is precisely the case discussed in the left half of \cref{fig:simple-hydra}.
    
    From the hydra on the right, the game proceeds deterministically, with no potential for evolution. The $T$ value of the hydra on the right must therefore be $\geq 4^n$. Furthermore, it can be reached with probability $1/2^{n + 1}$ in one round from the Hydra on the left. This immediately implies that $r > 2^{n + 1} \times 4^{n - 1}$ for every $n \in \mathbb{N}$, which means that the smallest value for $r$ must be $\omega$. This proves this case.

    \emph{Induction step.} This time, $T(r) > \omega$. By the structure of the encoding, $T(r)$ is the natural sum of the $T$ values of its children. Let $x$ be the child assigned the largest ordinal among the root's children. There are two subcases now.

    \emph{Case 1: $T(x) < \omega$.} Then, $T(r) = a \omega + b$ for some naturals $a$ and $b$. This means that the structure of the Hydra is similar to the illustration in the right half of \cref{fig:simple-hydra}. To reiterate, there are $a$ branches of depth $2$ and $b$ branches of depth $1$.
    Since $a \neq 0$, take the game round where Hercules lops off a head at depth $2$. Observe that, for all $n \in \setOfNaturals$, if the Hydra chooses to evolve for exactly $n$ many times, the game has the probability of $1/2^{n+1}$ of reaching a state where the Hydra's root has a $T$ value of $(a - 1)\omega + b + 4^n$. If $b = 0$, this yields an infinite sequence of hydras as needed.
    
    If $b \neq 0$, there must be at least one leaf directly under the root. Simply take the case where Hercules removes one such leaf to produce a hydra with root at $a \omega + (b - 1)$. Augment this Hydra with the infinite sequence generated earlier to complete the proof.

    \emph{Case 2: $T(x) \geq \omega$.} In this case, simply apply the induction hypothesis on the subtree rooted at $x$. Let the sequence of ordinal values thus produced be $y_1, y_2, \ldots$. This yields an infinite sequence of Hydras $H_1, H_2, \ldots$ with $T$ values at the root differing only at the $\omega^{y_i}$ term for some $y_i \in {y_1, y_2, \ldots}$. It's trivial to see that the smallest ordinal larger than the $T$ values of the roots of $H_1, H_2, \ldots$ is $T(r)$, completing the lemma.
\end{proof}

\section{Arithmetic Complexity for $\AST$ and $\BAST$}
\label{sec:complexity-ast-bast}

\citet{KaminskiKM19} already prove the $\Pi^0_2$-hardness of $\nAST$ and the $\Sigma^0_2$-hardness of $\BAST$.
Intuitively, their proof effectively encodes canonical $\Pi^0_2$ and $\Sigma^0_2$-complete problems into $\AST$ and $\BAST$ respectively. Hence, we only need to show that $\nAST \in \Pi^0_2$ and $\BAST \in \Sigma^0_2$. 

We begin with $\AST$. Recall the definition of $\nAST$ from \cref{def:nAST}.
$$
\nAST = \left\{ P \in \setOfPrograms \mid \sforall f \in \setOfSchedules \cdot \sum_{\sigma \in \terminalStateSet(\initialProgramState{P}, f)} \executionStateProjectionProbability(\sigma) = 1 \right\}
$$
Generally, $\terminalStateSet(\initialProgramState{P}, f)$ can be an infinite set. Our semantics ensures that the probability values at execution states is never zero; hence, the $\AST$ series consists of strictly positive numbers. Its convergence to $1$ implies that for every rational $\delta < 1$, there must be a finite prefix of the series that exceeds $\delta$. If one were to order the elements of $\terminalStateSet(\initialProgramState{P}, f)$ by their distances from the initial execution state $\initialExecutionState{P}$, we get
$$
\nAST = \left\{ P \in \setOfPrograms \mid \forall f \in \setOfSchedules \; \forall \delta \in \setOfRationals^{(0, 1)} \; \exists k \in \setOfNaturals \cdot \sum_{\sigma \in \terminalStateSetInStep{\leq k}(\initialProgramState{P}, f)} \executionStateProjectionProbability(\sigma) > \delta \right\}
$$


To place $\nAST$ in $\Pi^0_2$, we need the notions of the \emph{partial schedule} and the \emph{scheduler tree}.
\begin{definition}[Partial schedule]
    A \emph{partial schedule} is any total function from the \emph{finite domain} $(\Sigma_n \cup \Sigma_p)^{\leq m}$ to $\Sigma_n$, for some $m \in \setOfNaturals$ where $(\Sigma_n \cup \Sigma_p)^{\leq m} = \cup_{k \leq m} (\Sigma_n \cup \Sigma_p)^k$. The \emph{size} of the partial schedule is $m$, the length of the longest word in the domain.
    
    Denote the \emph{set of all partial schedules of size $m$} by $\setOfPartialSchedulesOfSize{m}$, and the \emph{set of all partial schedules} by $\setOfPartialSchedules$. The \emph{standard extension} of the partial schedule $f_m$ of size $m$ is the scheduler $f$ such that
    $$
    f(w) = \begin{cases}
        f_m(w) & |w| \leq m\\
        L_n & |w| > m
    \end{cases}
    $$
\end{definition}
The structure of partial schedules induces a natural ancestry relation that yields the scheduler tree.
\begin{definition}[Scheduler Tree]
    Define the relation $\prec\; \subset \setOfPartialSchedules \times \setOfPartialSchedules$ as follows.
    $$
    f_m \prec f_n \Longleftrightarrow m < n \land \forall w \in (\Sigma_n \cup \Sigma_p)^{\leq m} \cdot f_m(w) = f_n(w)
    $$
    where the sizes of $f_m$ and $f_n$ are $m$ and $n$ respectively.

    The pair $(\setOfPartialSchedules, \prec)$ forms the \emph{scheduler tree}. To complete the tree, we denote by $f_0$ the \emph{empty scheduler}, and set $f_0 \prec f_n$ for all $n > 0$.
\end{definition}
Clearly, $f_0$ is the root of the scheduler tree. Observe that all partial schedules of size $m$ are present at depth $m$. Furthermore, every infinite branch in the scheduler tree corresponds to a full scheduler, and every full scheduler can be associated to a single infinite branch in the tree. We leave the details of this bijection to the diligent reader. Note that the binary nature of the branching at nondeterministic and probabilistic operations means that the scheduler tree is \emph{finitely branching}.

We now show certain useful properties of $\nAST$ programs.
\begin{lemma}
    \label{lemma:finitely-many-marked-nodes} 
    Fix a $\nAST$ program $P$. For every rational $0 < \delta < 1$ and schedule $f$, call the smallest $k$ that satisfies the inequality
    $$
    \sum_{\sigma \in \terminalStateSetInStep{\leq k}(\initialProgramState{P}, f)} \executionStateProjectionProbability(\sigma) > \delta
    $$
    the \emph{required simulation time} to cross $\delta$ for the schedule $f$. Then, the set of all required simulation times of the program $P$ to cross the threshold $\delta$ under all possible schedules $f$ has an upper bound.
\end{lemma}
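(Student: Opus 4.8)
The plan is to run a König's-lemma (compactness) argument on the scheduler tree $(\setOfPartialSchedules, \prec)$, exploiting that it is finitely branching. Fix a rational $0 < \delta < 1$. I will call a partial schedule $f_m \in \setOfPartialSchedulesOfSize{m}$ \emph{$\delta$-settled} if, under its standard extension $f$, the execution of $P$ has already accumulated termination probability exceeding $\delta$ within $m$ steps, i.e.
$$
\sum_{\sigma \in \terminalStateSetInStep{\leq m}(\initialProgramState{P}, f)} \executionStateProjectionProbability(\sigma) > \delta .
$$
This is well defined as a property of $f_m$ and not merely of a full scheduler, because the terminal execution states reachable in $\leq m$ steps depend only on scheduler decisions at histories of length $< m$, all of which lie in the domain $(\Sigma_n \cup \Sigma_p)^{\leq m}$ of $f_m$. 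Moreover, being $\delta$-settled is upward closed along $\prec$: if $f_m \prec f_n$, then $f_n$ agrees with $f_m$ on all histories of length $\leq m$, so $\terminalStateSetInStep{\leq m}$ is identical for the two standard extensions, and $\terminalStateSetInStep{\leq m} \subseteq \terminalStateSetInStep{\leq n}$ with nonnegative probabilities. Consequently the \emph{unsettled} partial schedules form a prefix-closed subtree $U$ of the scheduler tree.

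Next I would show $U$ is finite. Suppose not. Since the scheduler tree branches finitely (binary branching at nondeterministic and probabilistic operators), König's lemma yields an infinite branch of $U$, which corresponds to a full scheduler $f \in \setOfSchedules$. By construction the size-$m$ prefix of $f$ is unsettled for every $m$, hence
$$
\sum_{\sigma \in \terminalStateSetInStep{\leq m}(\initialProgramState{P}, f)} \executionStateProjectionProbability(\sigma) \leq \delta \qquad \text{for all } m \in \setOfNaturals .
$$
Taking $m \to \infty$, the termination probability $\termProb(\initialProgramState{P}, f) = \sum_{\sigma \in \terminalStateSet(\initialProgramState{P}, f)} \executionStateProjectionProbability(\sigma)$ is at most $\delta < 1$, contradicting $P \in \nAST$ (equivalently, contradicting the $\nAST$ characterization recalled just before this lemma, which for every $\delta < 1$ demands a finite prefix of the series exceeding $\delta$). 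Hence $U$ is finite.

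Finally, let $N$ exceed the depth of every node of $U$. Then for an arbitrary scheduler $f$, its size-$N$ prefix is not in $U$, so it is $\delta$-settled, giving $\sum_{\sigma \in \terminalStateSetInStep{\leq N}(\initialProgramState{P}, f)} \executionStateProjectionProbability(\sigma) > \delta$; thus the required simulation time to cross $\delta$ under $f$ is at most $N$. As $f$ was arbitrary, $N$ is the claimed uniform upper bound. The one place to be careful — everything else being a textbook application of König's lemma — is verifying that "$\delta$-settled" really depends only on the finite partial schedule $f_m$ and is monotone along $\prec$; once that is pinned down, the lemma is just the observation that a finitely-branching tree with no infinite "bad" branch has only finitely many "bad" nodes.
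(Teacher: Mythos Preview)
Your proof is correct and follows essentially the same König's-lemma compactness argument on the finitely-branching scheduler tree as the paper. The paper phrases it via ``marked nodes'' (one per scheduler, at depth equal to the required simulation time) and shows that set is finite by cutting below the marks and invoking König; your formulation via the prefix-closed subtree $U$ of unsettled partial schedules is a slightly more direct packaging of the same idea, since your $U$ is precisely the paper's tree after lopping off the children of marked nodes.
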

\begin{proof}
    Since $P \in \nAST$, every scheduler $f$ is associated with some required simulation time $k_f$. For each $f$, isolate the infinite branch corresponding to $f$ in the scheduler tree and \emph{mark} the $k_f^{th}$ node along that branch. See \cref{fig:marked-nodes-scheduler-tree} for an illustration. We will prove that there can only be finitely many marked nodes in the tree.

    We first show that there is exactly one marked node in every branch. We do so by deriving a contradiction after assuming the contrary. Take some branch with two marked nodes at distances $k_1$ and $k_2$ with $k_1 < k_2$. Let the schedulers corresponding to the marked nodes be $f_1$ and $f_2$. Since they share a prefix in the tree, $f_1$ and $f_2$ \emph{must} agree on all $(\Sigma_n \cup \Sigma_P)^{\leq k_1}$. Moreover, since $k_1$ is the required simulation time for $f_1$, $\terminalStateSetInStep{\leq k_1}(\initialProgramState{P}, f_1)$ contains enough program states to amass a probability of termination greater than $\delta$.
    \begin{figure}
        \tikzfig{marked-nodes-tree}
        \caption{Marked nodes (filled in black) in the scheduler tree. Notice that the tree is finitely branching, and that no marked node is an ancestor of another.}
        \label{fig:marked-nodes-scheduler-tree}
    \end{figure}

    These two facts make it apparent that $\terminalStateSetInStep{\leq k_1}(\initialProgramState{P}, f_1) = \terminalStateSetInStep{\leq k_1}(\initialProgramState{P}, f_2)$. Additionally, since $k_1 < k_2$, $\terminalStateSetInStep{\leq k_1}(\initialProgramState{P}, f_2) \subsetneq \terminalStateSetInStep{\leq k_2}(\initialProgramState{P}, f_2)$. This means that $k_1$ is a smaller simulation time for $f_2$, contradicting the minimality of $k_2$ for $f_2$.

    We now know that there can only be a single marked node in each branch in the scheduler tree. Suppose there are infinitely many marked nodes. These marked nodes must be spread out over infinitely many branches. Form a subtree of the scheduler tree by lopping off the children of the marked nodes. Consequently, all marked nodes in the newly formed subtree are leaves. Our assumption indicates that there are infinitely many leaves in this tree. But, the tree is finitely branching. König's lemma indicates the presence of an infinite branch in this tree, which indicates the presence of a branch \emph{without} marked nodes (as only leaves are marked)! This contradicts the $\nAST$ nature of $P$.

    Hence, there are only finitely many marked nodes. This means that the set of all required simulation times is finite, which trivially indicates the presence of an upper bound, proving the lemma.
\end{proof}

We derive the $\Pi^0_2$ formula for $\nAST$ using \cref{lemma:finitely-many-marked-nodes}.

\begin{theorem}
    $\nAST \in \Pi^0_2$.
\end{theorem}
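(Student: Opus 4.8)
The plan is to collapse the second-order quantifier $\sforall f \in \setOfSchedules$ that appears in the displayed characterization of $\nAST$ down to a first-order — indeed bounded — quantifier, and then to observe that the remaining matrix is decidable. The naive characterization
$$
\nAST = \left\{ P \in \setOfPrograms \mid \forall f \in \setOfSchedules \; \forall \delta \in \setOfRationals^{(0,1)} \; \exists k \in \setOfNaturals \cdot \textstyle\sum_{\sigma \in \terminalStateSetInStep{\leq k}(\initialProgramState{P}, f)} \executionStateProjectionProbability(\sigma) > \delta \right\}
$$
only places $\nAST$ in $\Pi^1_1$ because of the quantifier over all schedulers. The crucial leverage is \cref{lemma:finitely-many-marked-nodes}: when $P \in \nAST$, for each fixed $\delta$ the witnessing step bound $k$ can be chosen \emph{uniformly} over all schedulers, so the prefix $\forall f\,\exists k$ may be rewritten as $\exists k\,\forall f$.

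Two routine technical ingredients are then needed. First, I would note that a single execution step appends at most one symbol to the history word, so a terminal execution state reached within $k$ steps has a history of length $< k$; hence $\terminalStateSetInStep{\leq k}(\initialProgramState{P}, f)$ depends only on the restriction of $f$ to words of length $< k$, i.e.\ on a partial schedule of size $k$. This lets me replace the (swapped) $\forall f \in \setOfSchedules$ by $\forall f_k \in \setOfPartialSchedulesOfSize{k}$, where $\setOfPartialSchedulesOfSize{k}$ is finite with cardinality computable from $k$. Second, I would exhibit the decidable predicate: given $P$, a rational $\delta \in (0,1)$ (coded as a natural), a natural $k$, and a partial schedule $f_k$ of size $k$, one effectively simulates $P$ for $k$ steps along its finitely many branches under the standard extension of $f_k$, collects the terminal execution states reached, sums their probability components, and tests whether the sum exceeds $\delta$. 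Call this decidable predicate $R(P,\delta,k,f_k)$. Putting these together gives the target equivalence
$$
P \in \nAST \iff \forall \delta \in \setOfRationals^{(0,1)} \; \exists k \in \setOfNaturals \; \forall f_k \in \setOfPartialSchedulesOfSize{k} \cdot R(P,\delta,k,f_k).
$$

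Finally I would check both implications of this equivalence. Left to right is exactly \cref{lemma:finitely-many-marked-nodes}, giving the uniform bound $k$ for each $\delta$. Right to left: any full scheduler $f$ restricts to some $f_k \in \setOfPartialSchedulesOfSize{k}$, and since $\terminalStateSetInStep{\leq k}$ is insensitive to the behaviour of $f$ beyond depth $k$, the termination probability of $P$ within $k$ steps under $f$ equals that under $f_k$ and hence exceeds $\delta$; as this holds for every $\delta < 1$ and $\termProb_{\leq k}$ is non-decreasing in $k$, we get $\termProb(\initialProgramState{P}, f) = 1$, so $P \in \nAST$. Since $\setOfPartialSchedulesOfSize{k}$ is finite and uniformly computable, the trailing $\forall f_k$ is a bounded quantifier and folds into the matrix, leaving a $\forall \delta\, \exists k \cdot (\text{decidable})$ formula — a $\Pi^0_2$ definition of $\nAST$. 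The only genuinely substantive step is the quantifier swap, and that obstacle has already been cleared by \cref{lemma:finitely-many-marked-nodes} via König's lemma on the finitely-branching scheduler tree; the rest is bookkeeping about the operational semantics and the computability of finite simulations, so I expect no further difficulty.
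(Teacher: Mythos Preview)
Your proposal is correct and follows essentially the same approach as the paper: both use \cref{lemma:finitely-many-marked-nodes} to swap the $\sforall f\,\exists k$ prefix to $\exists k\,\forall f$, then replace the quantifier over full schedulers by a bounded quantifier over the finitely many partial schedules of size $k$, and finally absorb that bounded quantifier into a decidable matrix via finite simulation. The only cosmetic difference is that the paper packages the inner bounded quantifier over partial schedules as a loop inside a single total Turing machine $M(\delta,n,P)$, whereas you keep it as an explicit bounded $\forall f_k$ and fold it into the predicate $R$; the resulting $\Pi^0_2$ formula is the same.
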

\begin{proof}
    Consider a turing machine $M$ that takes in three inputs: a rational $\delta$ between $0$ and $1$, a natural number $n$, and a $\pGCL$ program $P$. The outer loop of $M$ traverses the partial schedules at depth $n$ in the scheduler tree. The finitely branching nature of the scheduler tree indicates a finite number of partial schedules at this depth. Once $M$ selects a partial schedule $f_n$, it produces the execution tree of $P$ under $f$ till depth $n$. It does so by simulating $P$ for all possible probabilistic choices up to $n$ steps. Observe that the termination of the simulation is guaranteed by the hard limit on the computation length and the binary branching at probabilistic choices.
    
    $M$ then computes the termination probability amassed in the generated execution tree. If this termination probability exceeds $\delta$, $M$ exists to the outer loop to query a new partial scheduler. Otherwise, $M$ returns $0$. $M$ only returns $1$ after exhausting all partial schedulers at level $n$.

    We state that
    $$
    P \in \nAST \Longleftrightarrow \forall \delta \in \setOfRationals^{(0, 1)} \, \exists n \in \setOfNaturals \cdot M(\delta, n, P) = 1
    $$
    Why? \cref{lemma:finitely-many-marked-nodes} indicates the presence of such an $n$ for all $\delta$ for $P \in \nAST$. If $P \not\in \nAST$, there must be some scheduler under which the termination probability is under $\delta$ for some $\delta < 1$. For such a $\delta$, $M$ will return $0$ for all $n$.
    
    Thus, the above equivalence is correct. This produces a $\Pi^0_2$ formula for $\nAST$, completing the proof.
  \end{proof}
 
The membership of $\BAST$ in $\Sigma^0_2$ can be argued similarly. For some $P \in \BAST$, let the bound on the expected runtime be $n$. \citet{KaminskiKM19} showed that in the deterministic case, for each rational $n' < n$, there must be a finite segment of the expected runtime series from the initial program state $\initialProgramState{P}$ that converges to a rational greater than $n'$. With nondeterminism, this finite segment corresponds to another set of marked nodes in the scheduler tree. It turns out that this set of marked nodes is also finite; we leave the details to the diligent reader. This finiteness immediately yields a $\Sigma^0_2$ characterization of $\BAST$.

\section{Transformation to \nameForNormalFormPrograms form}
\label{sec:gambler-transform}

In this section, we informally detail an algorithmic $\PAST$-preserving transformation of $\pGCL$ programs to \nameForNormalFormPrograms form. Our algorithm constructs a program $P_k$ in \nameForNormalFormPrograms form from an input $\mathsf{pGCL}$ program $P$ such that $P_k$ is $\mathsf{PAST}$ \emph{iff} $P$ is $\mathsf{PAST}$.

$P_k$ consists of two components. The first is a program $P^1_k$ that constructs the execution tree of $P$ (see \cref{def:execution-tree})  by simultaneously simulating \emph{all} probabilistic choices. However, $P^1_k$ includes no resolution mechanism for the non-deterministic choices; it leaves them to the scheduler. The goal of $P^1_k$ is to compute finite segments of the expected runtime series (see \cref{def:exp-runtime}) for increasing runtimes of the original program $P$. Let $P^1_k$ store this sum in a variable named \codeStyleText{currentExpRuntime}. Separately, $P^1_k$ sets up a target bound over the expected runtime; more simply, it initializes a variable named \codeStyleText{bound} to \codeStyleText{1}. Observe that $P^1_k$ contains no probabilistic instructions.

The second component $P^2_k$ is a probability halver that, using the \nameForNormalFormPrograms operation $\mathtt{skip }\oplus_{1/2}\mathtt{exit}$, halves the probability of continued execution at each execution step of $P^1_k$. Additionally, $P^2_k$ has the ability to ``cheer'' (like \codeRef{fig:simulation}) for long enough to increase the expected runtime of the overall program by a constant amount (say \codeStyleText{1}). It performs this by storing the current execution probability $\frac{1}{2^n}$ and looping for $2^n$ many steps.

The overall program $P_k$ proceeds as follows. The simulation $P^1_k$ takes in inputs at non-deterministic locations from the input scheduler, and continuously updates the variable \codeStyleText{currentExpRuntime}. In parallel to $P^1_k$, $P^2_k$ repeatedly halves the probability of continued execution. When \codeStyleText{currentExpRuntime} exceeds \codeStyleText{bound}, both $P^1_k$ and $P^2_k$ are paused, and the operation \codeStyleText{bound := bound * 2} is executed. After this, the probability halver $P^2_k$ ``cheers'' for the appropriate number of steps to increase the expected runtime of $P_k$ by a constant amount. Once this is accomplished, $P^1_k$ is resumed, and $P^2_k$ again takes up its \nameForNormalFormPrograms duties. This continues until the next time \codeStyleText{currentExpRuntime} exceeds \codeStyleText{bound}.

Observe now that if \codeStyleText{currentExpRuntime} exceeds \codeStyleText{bound} infinitely often under some scheduler $f$, the expected runtime of $P_k$ under $f$ is infinity; this is the effect of cheering infinitely often. Moreover, the fact that \codeStyleText{currentExpRuntime} always eventually exceeds \codeStyleText{bound} for any fixed value of \codeStyleText{bound} indicates that the original program $P$ is not $\mathsf{PAST}$; i.e., there must exist a scheduler under which expected runtime of the original program $P$ must be infinity (see Theorem 4.2 for a detailed presentation of similar arguments).

Now, for every scheduler, if the expected runtime of $P$ was bounded, then for every scheduler of $P_k$, eventually \codeStyleText{bound} exceeds this bound, and the constructed program only cheers finitely often. If this is the case for all schedulers, both the constructed program and the original program are PAST.

\section{Fair Probabilistic Termination}
\label{sec:discussion}

In this section, we detail a few points that augment our main results.

\subsection{Fairness}
\label{subsec:fairness}



We begin with a look at the $\PAST$, $\BAST$, and $\AST$ problems under the restriction of fairness. Unfortunately, we do not have a proper complexity characterization of these problems; we do however observe that this variant of $\AST$ is at least as hard as the general $\nPAST$, indicating a significant jump in difficulty. We begin by defining fairness in the context of probabilistic programs.
\begin{definition}[Strong Fairness] 
    \label{def:strong-fairness}
    Let $P$ be a $\pGCL$ program, and $f$ be one of its schedulers. $f$ is \emph{strongly fair} if in every infinite branch of the execution tree of $P$ under $f$, every direction at each nondeterministic command visited infinitely often along the branch is taken infinitely often in that branch.
\end{definition}
We can think of a nondeterministic direction being \emph{enabled} at an execution state if the top of the program contains a nondeterministic choice operator presenting that direction as an option. In this context, strong fairness merely mandates the infinitely-often choosing of every direction infinitely-often enabled in every branch.

We now describe a \emph{predicate} $\fairnessPredicateWithoutParameters$ that evaluates to $\top$ only for fair executions. 
Over non-probabilistic programs, strong fairness can be expressed as an arithmetical relation:
\begin{align*}
    \fairnessPredicate{P, f} \triangleq \forall d \in \mathbb{D} \cdot ( \forall k_1 \exists k_2 \cdot &k_2 > k_1 \land \mathsf{enabled}(P, f, d, k_2) )  \\
    \to &(\forall k_1 \exists k_2 \cdot k_2 > k_1 \land \mathsf{taken}(P, f, d, k_2))
\end{align*}
Here, $k_1$ and $k_2$ range over $\setOfNaturals$.
The set $\mathbb{D}$ is the (finite) collection of all \emph{nondeterministic directions} available in the program.
The recursive predicates $\mathsf{taken}(P, f, d, k_2)$ and $\mathsf{enabled}(P, f, d, k_2)$ check if the $k^{th}_2$ step of the execution of $P$ under $f$ enables or takes the direction $d$.

Over probabilistic programs, we want to check fairness over \emph{every} branch. Hence, fairness becomes
\begin{equation}
    \label{eq:fairness-predicate-definition}
    \begin{aligned}
        \fairnessPredicate{P, f} \triangleq \sforall b \in \mathbb{B}\; \forall d \in \mathbb{D} \cdot ( \forall k_1 \exists k_2 \cdot &k_2 > k_1 \land \mathsf{enabled}(P, f, b, d, k_2) )  \\
        \to &(\forall k_1 \exists k_2 \cdot k_2 > k_1 \land \mathsf{taken}(P, f, b, d, k_2))
    \end{aligned}
\end{equation}
Or equivalently,
\begin{equation}
    \label{eq:fairness-predicate-short-definition}
    \fairnessPredicate{P, f} \triangleq \sforall b \in \mathbb{B} \cdot \fairnessBranchPredicate{P, f, b}
\end{equation}
Where $\fairnessBranchPredicateWithoutParameters$ is simply the rest of the formula in \cref{eq:fairness-predicate-definition}. Here, $\mathbb{B}$ denotes the set of all possible branches. Observe that there are uncountably many branches in the infinite tree; quantifying over them is not possible in first-order arithmetic. Consequently, in \cref{eq:fairness-predicate-short-definition}, the universal quantifier is second-order.

We now precisely define the fair versions of $\AST$ and $\PAST$.
\begin{definition}[$\;\nfAST\;$]
    \label{def:nfAST}
    The set $\nfAST$ contains precisely the set of all $\pGCL$ programs $P$ that yield a termination probability of $1$ under any fair scheduler $f$, i.e.,
    $$
    \nfAST \triangleq \left\{ P \in \setOfPrograms \mid \sforall f \in \setOfSchedules \cdot \fairnessPredicate{P, f} \to \termProb(\initialProgramState{P}, f) = 1\right\}
    $$
\end{definition}
\begin{definition}[$\;\nfPAST\;$]
    \label{def:nfPAST}
    The set $\nfPAST$ contains precisely the set of all $\pGCL$ programs $P$ that yield a finite expected runtime under any fair scheduler $f$, i.e.,
    $$
    \nfPAST \triangleq \left\{P \in \setOfPrograms \mid \sforall f \in \setOfSchedules\; \exists k \in \setOfNaturals \cdot \fairnessPredicate{P, f} \to \expRuntime(\initialProgramState{P}, f) < k \right\}
    $$
\end{definition}
\begin{definition}[$\;\nfBAST\;$]
    \label{def:nfBAST}
    The set $\nfBAST$ contains precisely the set of all $\pGCL$ programs $P$ with a bound $k$ over the expected runtime under every fair scheduler $f$, i.e.,
    $$
    \nfPAST \triangleq \left\{P \in \setOfPrograms \mid \exists k \in \setOfNaturals\; \sforall f \in \setOfSchedules \cdot \fairnessPredicate{P, f} \to \expRuntime(\initialProgramState{P}, f) < k \right\}
    $$
\end{definition}
An immediate consequence of the $\Pi^1_1$-hardness of fair termination is that $\nfAST$, $\nfPAST$, and $\nfBAST$ are all $\Pi^1_1$-hard.
This is because all these sets contain every fairly terminating non-probabilistic program. Hence,
\begin{theorem}
    \label{lemma:pi11-hardness-nfAST-nfPAST}
    The sets $\nfAST$, $\nfPAST$, and $\nfBAST$ are each $\Pi^1_1$-hard.
\end{theorem}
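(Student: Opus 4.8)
The plan is to reduce from \emph{fair termination} of non-probabilistic programs with bounded nondeterministic choice, which is $\Pi^1_1$-complete by the results of \citet{Chandra78}, \citet{AptP86} and \citet{Harel86} recalled in \cref{sec:related}. Every such program is already a $\pGCL$ program: it is a derivation of the grammar in \cref{def:syntax-pgcl} that simply never uses the probabilistic choice operator $\pChoice{p}$. The intended reduction is the identity map $P \mapsto P$, and the work is to check that, on this restricted class of inputs, membership in $\nfAST$, $\nfPAST$ (and, with more care, $\nfBAST$) coincides with fair termination.

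First I would observe that a non-probabilistic $\pGCL$ program behaves deterministically once a scheduler is fixed: for every scheduler $f$ the execution tree of \cref{def:execution-tree} is a single path, since the only branching in the semantics of \cref{fig:semantics} comes from $\pChoice{p}$ with $0<p<1$ (rules \textsc{prob3}/\textsc{prob4}), and all nondeterministic steps are resolved by $f$. Consequently $\termProb(\initialProgramState{P}, f)$ is $1$ exactly when this path is finite, and $\expRuntime(\initialProgramState{P}, f)$ is just the length of the path, hence finite exactly when the path is finite. Therefore, for a non-probabilistic $P$: $P$ is fairly terminating $\iff$ every fair scheduler yields a finite execution $\iff$ $\sforall f\in\setOfSchedules$ fair $\cdot\, \termProb(\initialProgramState{P}, f)=1$ $\iff$ $P\in\nfAST$; and likewise $\iff$ $\sforall f$ fair $\cdot\, \expRuntime(\initialProgramState{P}, f)<\infty$ $\iff$ $P\in\nfPAST$, using that the inner $\exists k$ in \cref{def:nfPAST} may depend on $f$. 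Moreover, if $P$ is not fairly terminating, the witnessing fair scheduler has an infinite path, so $P\notin\nfAST$ and $P\notin\nfPAST$. Hence the identity map is a recursive many-one reduction from fair termination to each of $\nfAST$ and $\nfPAST$, giving their $\Pi^1_1$-hardness.

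The main obstacle is $\nfBAST$: its definition (\cref{def:nfBAST}, with $\exists k$ \emph{outside} $\sforall f$) demands a \emph{uniform} bound on the expected runtime over all fair schedulers, whereas a fairly terminating bounded-nondeterministic program need not admit one — the standard gadget that uses a binary nondeterministic loop together with strong fairness to force termination, but only after an unbounded number of iterations, is fairly terminating yet lies outside $\nfBAST$. So the identity embedding does not suffice here, and one cannot simply make non-termination costly by throttling with $\mathtt{skip}\,\pChoice{1/2}\,\mathtt{exit}$, since enough compensating busy-waiting to force infinite expected runtime on infinite runs also blows up the runtime on arbitrarily deep \emph{finite} runs, which abound because well-founded recursive trees have unbounded branch length. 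The route I would take is therefore to reduce from a $\Pi^1_1$-complete problem via a construction in \nameForNormalFormPrograms form (along the lines of \codeRef{fig:numgen-past-simulation}) that is specifically engineered so that \emph{yes}-instances yield a \emph{single} finite bound independent of the fair scheduler — or, alternatively, to inspect the fair-termination reduction of \citet{Harel86} and show it can be taken to output only programs whose fairly terminating executions share a common bound. Establishing that such an instrumentation preserves both directions of the reduction is the delicate point and where the bulk of the effort lies.
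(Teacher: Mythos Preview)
For $\nfAST$ and $\nfPAST$ your approach coincides with the paper's: the paper simply asserts that ``all these sets contain every fairly terminating non-probabilistic program,'' which is exactly the identity embedding you spell out and verify.

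For $\nfBAST$ you are more careful than the paper. The paper offers the same one-line justification for all three sets, but your objection is on point: a program such as $b \coloneqq 0;\ \codeStyleMath{while}(b=0)\{\, b \coloneqq 0 \nChoice b \coloneqq 1;\,\}$ is fairly terminating (strong fairness in \cref{def:strong-fairness} constrains only infinite branches, and any infinite branch must eventually take the right option), yet the fair schedulers ``go left $k$ times, then right'' witness arbitrarily long finite runs, so the program is not in $\nfBAST$. Hence the claim that every fairly terminating non-probabilistic program lies in $\nfBAST$ is false, and the identity reduction does not establish $\Pi^1_1$-hardness of $\nfBAST$. The paper does not supply the missing argument; the alternative routes you sketch---instrumenting the tree reduction with a \nameForNormalFormPrograms style gadget so that yes-instances carry a uniform bound, or checking that some specific $\Pi^1_1$-hardness reduction for fair termination already yields uniformly bounded outputs---are the kind of additional work the claim actually requires.
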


Unfortunately, the only upper bound we present here puts each of these sets at one level higher in the analytical hierarchy.
\begin{lemma}
    Each of $\nfAST$, $\nfBAST$, and $\nfPAST$ are contained in $\Pi^1_2$.
\end{lemma}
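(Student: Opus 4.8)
The plan is to write out the defining second-order formula of each set, push it into prenex form, and count the alternations over function variables. The main observation is that, once the scheduler $f$ is fixed as an oracle, every remaining ingredient is arithmetical: ``$\termProb(\initialProgramState{P},f) = 1$'' is $\Pi^0_2$ in $f$ and ``$\expRuntime(\initialProgramState{P},f) < k$'' is arithmetical in $f$ for fixed $k$, both by the finite-simulation / partial-schedule arguments already used in \cref{sec:complexity-ast-bast} to place $\AST$ in $\Pi^0_2$ and $\BAST$ in $\Sigma^0_2$. The only genuinely second-order ingredient inside the scope of $\sforall f$ is the fairness hypothesis $\fairnessPredicate{P,f}$, which by \eqref{eq:fairness-predicate-short-definition} equals $\sforall b \in \mathbb{B}\cdot\fairnessBranchPredicate{P,f,b}$ with $\fairnessBranchPredicateWithoutParameters$ arithmetical in $(P,f,b)$.

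I would first treat $\nfAST$. Rewriting the implication in its definition as a disjunction gives $P\in\nfAST \iff \sforall f\cdot\big((\sexists b\cdot\lnot\fairnessBranchPredicate{P,f,b})\lor\termProb(\initialProgramState{P},f)=1\big)$. Since the right disjunct does not mention $b$, I can pull the existential branch quantifier to the front, obtaining $\sforall f\,\sexists b\cdot\big(\lnot\fairnessBranchPredicate{P,f,b}\lor\termProb(\initialProgramState{P},f)=1\big)$, whose matrix is arithmetical in the two function oracles $f$ and $b$. By the normal-form theorem for the analytical hierarchy (\citet{Rogers}), a $\sforall\,\sexists$ prefix over an arithmetical matrix is exactly $\Pi^1_2$, so $\nfAST\in\Pi^1_2$. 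For $\nfBAST$ and $\nfPAST$ the same rewriting applies, the only difference being the placement of the first-order quantifier $\exists k\in\setOfNaturals$: outside $\sforall f$ for $\nfBAST$, inside for $\nfPAST$. Distributing $\exists k$ across the disjunction where needed and again pulling $\sexists b$ outward produces a $\sforall f\,\sexists b$ prefix over an arithmetical matrix, at worst with an extra first-order quantifier; since $\Pi^1_2$ is closed under first-order quantification, both sets lie in $\Pi^1_2$.

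The only delicate part is the quantifier bookkeeping, and I expect it to be the main (if mild) obstacle: I would verify that after negating the fairness hypothesis the matrix is genuinely arithmetical in $(f,b)$ so that the analytical normal-form theorem applies, that the termination-probability and expected-runtime predicates are arithmetical relative to $f$ exactly as in \cref{sec:complexity-ast-bast}, and that threading the extra $\exists k$ through the rewriting does not raise the complexity above $\Pi^1_2$. The upper bound is in effect forced by the single second-order alternation ``for every scheduler $f$, there is an unfair branch $b$ (or the run already meets the quantitative bound)'', and no further analytical quantifier is needed.
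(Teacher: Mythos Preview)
Your proposal is correct and follows essentially the same route as the paper: rewrite the implication as a disjunction, expand $\fairnessPredicate{P,f}$ via \eqref{eq:fairness-predicate-short-definition} to expose the $\sforall b$, negate it to $\sexists b$, and observe that the resulting matrix is arithmetical in the oracles $f$ and $b$, yielding a $\sforall f\,\sexists b$ prefix and hence $\Pi^1_2$. The paper carries this out only for $\nfAST$ and declares the other two cases similar, whereas you additionally spell out the bookkeeping for the first-order $\exists k$ in $\nfPAST$ and $\nfBAST$; otherwise the arguments coincide.
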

\begin{proof}
    We show this result for $\nfAST$ (the case for the other sets is similar). Recall, from \cref{def:nfAST}, that
    $$
    P \in \nfAST \Longleftrightarrow \sforall f \in \setOfSchedules \cdot \fairnessPredicate{P, f} \to \termProb(\initialProgramState{P}, f) = 1
    $$
    Using \cref{eq:fairness-predicate-short-definition} gives,
    $$
    P \in \nfAST \Longleftrightarrow \sforall f \in \setOfSchedules\; \sexists b \in \mathbb{B} \cdot \lnot \fairnessBranchPredicate{P, f, b} \lor \termProb(\initialProgramState{P}, f) = 1
    $$
    Both $\fairnessBranchPredicateWithoutParameters$ and $\termProb$ can be calculated by an arithmetical turing machine, and both $f$ and $b$ are second-order variables. Hence, $\nfAST \in \Pi^1_2$.
\end{proof}

\subsection{Finitary Fairness}
\label{subsec:finitary-fairness}


We now study the problems $\AST$, $\PAST$, and $\BAST$ under the restriction that all relevant schedulers are finitary fair.
We begin with a necessary definition of bounded schedulers.
\begin{definition}[$k$-bounded Scheduler]
    \label{def:k-bounded-scheduler}
    Let $P$ be a $\pGCL$ program, $f$ be one of its schedulers, and $k \in \setOfNaturals$. $f$ is \emph{$k$-bounded} if in \emph{every} branch of the execution tree of $P$ under $f$, no nondeterministic direction is consecutively ignored for more than $k$ times.
\end{definition}
Here, ``ignoring'' a direction is tantamount to enabling and not taking it. Bounded schedulers formalize the notion that a realistic implementation of a fair scheduler would enforce the fairness requirements at each nondeterministic command within $k$ tries, for some unspecified $k \in \setOfNaturals$. These schedulers do so by bounding the number of repeated choices at nondeterministic locations to $k$. Notice that these restrictions apply to \emph{all} branches, not just the infinite ones.

We can now define Finitary fairness.
\begin{definition}[Finitary Fairness]
    \label{def:finitary-fairness}
    Let $P$ be a $\pGCL$ program, and $f$ be one of its schedulers. $f$ is finitary fair if it is $k$-bounded for some $k \in \setOfNaturals$.
\end{definition}

Before moving onto our proofs, we lift a useful transformation from \citet{AlurH98} that converts programs that terminate under finitary fairness assumptions to totally terminating programs. This transformation introduces one new uninitialized variable \codeStyleText{k} that stores the bound on the scheduler. It then uses a finite collection of new global variables to instrument each nondeterministic operation in the program code with counters that track the consecutive decisions of the scheduler and guards that enforce the $k$-boundedness requirements. The details of the transformation are available in Section 4.2.3 of \citet{AlurH98}. It is easy to extend these operations to $\pGCL$; we leave the details to the reader. We call this the \emph{finitary transformation} of the program.

The purpose of the transformation is to ensure that every scheduler of the transformed program corresponds to a finitary fair scheduler of the original program, and vice versa. This is formalized in the following lemma.
\begin{lemma}
    \label{lemma:finitary-fairness-correspondence}
    Let $P$ be a $\pGCL$ program, $\finitaryTransformation{P}$ be the $\pGCL$ program produced by the transformation listed in Section 4.2.3 of \citet{AlurH98} on $P$, and $\codeStyleMath{k}$ be the variable introduced by the transformation tracking the bound on the scheduler in $\finitaryTransformation{P}$.

    Then, for every finitary fair scheduler $f$ of $P$, there exists a number $n$ and a schedule $f'$ of the program $\codeStyleMath{k \coloneqq n}; \finitaryTransformation{P}$ such that the decisions taken at nondeterministic commands in the execution tree generated by $f$ on $P$ correspond to the decisions taken at nondeterministic commands in the execution tree generated by $f'$ on $\codeStyleMath{k \coloneqq n}; \finitaryTransformation{P}$. 

    Furthermore, for every number $n \in \setOfNaturals$ and every scheduler $f'$ of $\codeStyleMath{k \coloneqq n}; \finitaryTransformation{P}$, there exists a finitary fair scheduler $f$ of $P$ with the same correspondence at nondeterministic commands over the execution trees of $f'$ on $\codeStyleMath{k \coloneqq n}; \finitaryTransformation{P}$ and $f$ on $P$.
\end{lemma}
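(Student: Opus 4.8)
The plan is to lift the classical finitary-fairness transformation of \citet{AlurH98} to the $\pGCL$ setting and to track the decisions at nondeterministic commands through it. Recall the shape of $\finitaryTransformation{P}$: it augments $P$ with a read-only bound variable $\codeStyleMath{k}$ and, for each nondeterministic command, a bounded collection of global counters recording how many times each available direction has been \emph{ignored} (enabled but not taken) consecutively, together with guards that force a direction to be taken once its counter would exceed $\codeStyleMath{k}$. Crucially, the instrumentation uses only assignments and conditionals; it introduces neither new probabilistic operators nor new nondeterministic operators. Hence the word of symbols from $\Sigma_n \cup \Sigma_p$ produced along any run of $\codeStyleMath{k \coloneqq n}; \finitaryTransformation{P}$ is exactly the word produced along the corresponding run of $P$, once the deterministic instrumentation steps are deleted. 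This gives a canonical stutter map $\pi$ between the visits to nondeterministic commands along a run of $P$ and those along the matching run of $\finitaryTransformation{P}$, and it lets us transport a scheduler of one program to the other essentially verbatim, since a history $w \in (\Sigma_n \cup \Sigma_p)^*$ is meaningful for both programs.

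For the forward direction, fix a finitary-fair scheduler $f$ of $P$; by \cref{def:finitary-fairness} it is $n$-bounded for some $n \in \setOfNaturals$. Take this $n$ as the value of $\codeStyleMath{k}$ and set $f' \triangleq f$, viewing histories of $\codeStyleMath{k \coloneqq n}; \finitaryTransformation{P}$ as histories of $P$ via the identification above. We must check that $f'$ is consistent with the guards of $\finitaryTransformation{P}$: it never tries to take a direction the instrumented guard has disabled, and never refuses a direction the guard has forced. Both follow from $n$-boundedness of $f$ by a straightforward induction on run length: the counter of a direction $d$ at a nondeterministic command equals the number of consecutive prior visits at which $f$ ignored $d$, which is $\leq n$; so the forcing guard never fires against $f$'s choice, and $f$'s choice is always legal. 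The run of $\codeStyleMath{k \coloneqq n}; \finitaryTransformation{P}$ under $f'$ therefore makes exactly the decisions $f$ makes on $P$, delivering the required correspondence.

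For the backward direction, fix $n \in \setOfNaturals$ and an arbitrary scheduler $f'$ of $\codeStyleMath{k \coloneqq n}; \finitaryTransformation{P}$. Define $f$ on $P$ by reading off, via $\pi$, the decision $f'$ makes at each nondeterministic command, and act arbitrarily on histories of $P$ that do not arise this way. The guards of $\finitaryTransformation{P}$ guarantee that along \emph{every} branch of its execution tree (finite or infinite) no direction is ignored more than $n$ consecutive times, since otherwise the forcing guard would have triggered; by the correspondence this property is inherited by the execution tree of $P$ under $f$, so $f$ is $n$-bounded and hence finitary fair by \cref{def:finitary-fairness}, with decisions matching at nondeterministic commands by construction.

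The main obstacle is bookkeeping rather than conceptual: one must make the stutter map $\pi$ precise, aligning the operational-semantics steps of \cref{def:semantics-pgcl} for $P$ with those for $\finitaryTransformation{P}$ modulo the blocks of deterministic instrumentation statements, and verifying that probabilistic branching (and hence the tree of branches over which $\fairnessPredicateWithoutParameters$ quantifies) is preserved step-for-step. This is exactly the content of Alur and Henzinger's correctness argument (Section 4.2.3 of \citet{AlurH98}); porting it to $\pGCL$ is routine because the added constructs are standard $\mathsf{GCL}$ assignments and conditionals and do not interact with $\pChoiceWithoutParamters$. The only genuinely new point over the non-probabilistic setting is that the $n$-boundedness requirement, and therefore the guard invariant, must hold on all branches simultaneously; but since the instrumentation is purely deterministic and global, the same counters and guards enforce it uniformly across the entire execution tree.
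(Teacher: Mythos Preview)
Your proposal is correct and follows exactly the ``by construction'' route the paper has in mind; in fact the paper does not give a proof at all, stating only that the result is trivial and leaving it as an exercise. Your write-up supplies the details the paper omits, and the two are in complete agreement.
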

The proof of \cref{lemma:finitary-fairness-correspondence} is trivial, and by construction. We leave it as an exercise to the diligent reader. 

\subsubsection{Almost-sure termination under finitary fairness}

We begin by defining the set $\nffAST$.
\begin{definition}[$\;\nffAST\;$]
    \label{def:nffAST}
    The set $\nffAST$ is the collection of all $\pGCL$ programs $P$ such that for every finitary fair scheduler $f$ of $P$,
    $$
    \termProb(\initialProgramState{P}, f) = 1
    $$
\end{definition}
The $\Pi^0_2$-hardness of $\nffAST$ trivially follows from the $\Pi^0_2$-completeness of $\AST$.
We now show that $\nffAST$ is $\Pi^0_2$-complete by proving membership in $\Pi^0_2$.
\begin{lemma}
    \label{lemma:nffAST-pi02-membership}
    $\nffAST \in \Pi^0_2$
\end{lemma}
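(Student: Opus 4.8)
The goal is to show $\nffAST \in \Pi^0_2$, mirroring the argument for $\nAST \in \Pi^0_2$ given earlier in the excerpt, but now quantifying only over \emph{finitary fair} schedulers. The plan is to use the finitary transformation $\finitaryTransformation{P}$ from \citet{AlurH98} together with \cref{lemma:finitary-fairness-correspondence}, which tells us that for each $n \in \setOfNaturals$ the schedulers of $\codeStyleMath{k \coloneqq n};\finitaryTransformation{P}$ are exactly the $n$-bounded schedulers of $P$, and that finitary fair schedulers of $P$ are precisely those captured by some $n$. So $P \in \nffAST$ \emph{iff} for every $n \in \setOfNaturals$, the (ordinary, unrestricted-scheduler) program $P_n \triangleq \codeStyleMath{k \coloneqq n};\finitaryTransformation{P}$ is in $\nAST$.

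\textbf{Key steps.} First I would spell out the equivalence $P \in \nffAST \iff \forall n \in \setOfNaturals \cdot P_n \in \nAST$, invoking \cref{lemma:finitary-fairness-correspondence} in both directions: if some finitary fair $f$ witnesses $\termProb < 1$, then $f$ is $n$-bounded for some $n$ and induces a scheduler of $P_n$ with the same termination probability (the transformation only instruments nondeterministic commands, so termination probabilities are preserved); conversely any bad scheduler of $P_n$ yields an $n$-bounded, hence finitary fair, bad scheduler of $P$. Second, I would invoke the earlier result $\nAST \in \Pi^0_2$, which gives a single Turing machine $M$ with $Q \in \nAST \iff \forall \delta \in \setOfRationals^{(0,1)}\,\exists m \in \setOfNaturals \cdot M(\delta, m, Q) = 1$; note $M$ is uniform, so the map $n \mapsto P_n$ being computable lets us feed $P_n$ to $M$. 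Third, I would assemble the formula
$$
P \in \nffAST \iff \forall n \in \setOfNaturals\; \forall \delta \in \setOfRationals^{(0,1)}\; \exists m \in \setOfNaturals \cdot M(\delta, m, P_n) = 1,
$$
and observe that the two leading universal quantifiers (over $n$ and over the rational $\delta$, which is codable as a natural) collapse to a single $\forall$, giving a $\Pi^0_2$ characterization. Combined with the $\Pi^0_2$-hardness inherited from $\AST$, this yields $\Pi^0_2$-completeness.

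\textbf{Main obstacle.} The only genuinely delicate point is the correspondence between $n$-bounded schedulers of $P$ and arbitrary schedulers of $P_n$: one must check that the instrumentation variables introduced by the finitary transformation do not themselves create fresh nondeterminism or alter the probabilistic branching, so that termination probabilities (and, for the $\BAST$ and $\PAST$ variants, expected runtimes up to an additive/multiplicative constant from the bounded bookkeeping steps) are faithfully transported. This is exactly the content of \cref{lemma:finitary-fairness-correspondence}, whose proof is asserted to be routine; given it, the rest is a straightforward quantifier count. I would also remark that the same recipe gives $\nffBAST \in \Sigma^0_2$ (using $\BAST \in \Sigma^0_2$, with an $\exists n$ \emph{outside} that is absorbed by the leading $\exists$ of the $\Sigma^0_2$ form once one is slightly careful: here the quantifier over $n$ is universal coming from ``for all finitary fair schedulers,'' so one actually needs $\forall n\,\exists(\ldots)$, landing in $\Pi^0_2$ — so for $\BAST$ one instead argues directly that a finite bound works uniformly, exactly as in the deterministic-case marked-nodes argument, keeping it $\Sigma^0_2$), and that $\nffPAST$ stays $\Pi^1_1$-complete since the hardness reduction already uses only finitary (indeed, deterministic-branch) behavior.
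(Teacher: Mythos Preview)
Your core argument is correct and is essentially the same as the paper's: establish $P \in \nffAST \Longleftrightarrow \forall n \in \setOfNaturals \cdot \codeStyleMath{k \coloneqq n};\finitaryTransformation{P} \in \nAST$ via \cref{lemma:finitary-fairness-correspondence}, then prefix the $\Pi^0_2$ formula for $\nAST$ with the universal quantifier over $n$ and merge the two leading universals. Your closing parenthetical about $\nffBAST$ is tangential to the lemma and, as you yourself notice mid-sentence, the naive quantifier count does not directly give $\Sigma^0_2$; the paper simply remarks that a separate (marked-nodes style) argument handles it, so you may want to drop or clean up that aside.
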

\begin{proof}
    This is a trivial consequence of \cref{lemma:finitary-fairness-correspondence}. Let $P$ be a $\pGCL$ program, $\finitaryTransformation{P}$ be its transformed variant, and \codeStyleText{k} be the variable used in $\finitaryTransformation{P}$ for tracking the bound on the scheduler. We claim that $P \in \nffAST$ \emph{iff} for every $n \in \setOfNaturals$, $\codeStyleMath{k \coloneqq n}; \finitaryTransformation{P} \in \nAST$.

    Fix some $n \in \setOfNaturals$. We know from \cref{lemma:finitary-fairness-correspondence} that every scheduler $f'$ of $\codeStyleMath{k \coloneqq n}; \finitaryTransformation{P}$ corresponds to some finitary fair scheduler $f$ of $P$. It stands to reason that the execution trees generated by $f$ on $P$ and by $f'$ on $\codeStyleMath{k \coloneqq n}; \finitaryTransformation{P}$ yield identical termination probabilities. Hence, $P \in \nffAST \Longleftrightarrow \codeStyleMath{k \coloneqq n}; \finitaryTransformation{P} \in \nAST$. Since this is true for all $n$, we state
    $$
    P \in \nffAST \Longleftrightarrow \forall n \in \setOfNaturals \cdot \codeStyleMath{k \coloneqq n}; \finitaryTransformation{P} \in \nAST
    $$
    Now, take the $\Pi^0_2$ formula proving the membership of $\codeStyleMath{k \coloneqq n}; \finitaryTransformation{P}$ in $\nAST$. Simply attaching a universal quantifier over $n$ completes the proof.
\end{proof}

\begin{corollary}
    $\nffAST$ is $\Pi^0_2$-complete.
\end{corollary}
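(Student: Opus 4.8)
The plan is to derive this corollary by pairing the $\Pi^0_2$-membership established in \cref{lemma:nffAST-pi02-membership} with a matching $\Pi^0_2$-hardness bound, so that completeness follows at once. The substantive half of the work is already done in the membership lemma; the only remaining obligation is the hardness, which I would argue is an immediate consequence of the $\Pi^0_2$-completeness of $\nAST$ recorded in \cref{prop:bast-ast}.

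For the hardness reduction, first I would observe that any $\pGCL$ program $P$ containing no occurrence of the nondeterministic choice operator $\nChoice$ admits, for every scheduler $f$, exactly the same execution tree: no scheduler decision is ever consulted, so $P$ has a single scheduler-independent execution and every scheduler of $P$ is vacuously finitary fair (no nondeterministic direction is ever enabled, hence none is ever ignored). Consequently $\termProb(\initialProgramState{P}, f)$ does not depend on $f$, and for such programs one has $P \in \nffAST \iff P \in \nAST$. The hardness construction of \citet{KaminskiKM19} reused in \cref{prop:bast-ast} already witnesses $\Pi^0_2$-hardness of $\nAST$ using only programs without nondeterministic choice; composing that reduction with the identity map on programs therefore gives a recursive reduction witnessing that $\nffAST$ is $\Pi^0_2$-hard.

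Putting the two bounds together yields the claim: $\nffAST$ is $\Pi^0_2$-hard and, by \cref{lemma:nffAST-pi02-membership}, $\nffAST \in \Pi^0_2$, hence $\nffAST$ is $\Pi^0_2$-complete. I do not expect a genuine obstacle here; the only point needing care is to confirm that the borrowed $\AST$-hardness reduction indeed produces programs without nondeterminism — so that the finitary-fairness restriction is vacuous on its image — which it does, since the source language of \citet{KaminskiKM19} has no nondeterministic choice at all, and this is exactly the setting in which \cref{prop:bast-ast} establishes the lower bound.
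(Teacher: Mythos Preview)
Your proposal is correct and matches the paper's approach exactly: the paper states just before \cref{lemma:nffAST-pi02-membership} that ``the $\Pi^0_2$-hardness of $\nffAST$ trivially follows from the $\Pi^0_2$-completeness of $\AST$'' and then combines this with the membership lemma, which is precisely your argument with the hardness direction spelled out in more detail. The corollary has no separate proof in the paper beyond this.
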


Again, a similar argument can be made for the $\Sigma^0_2$-completeness of $\nffBAST$. We do not present this here.

\subsubsection{Positive almost-sure termination under finitary fairness}

Again, we begin by formally defining the set $\nffPAST$.
\begin{definition}[$\;\nffPAST\;$]
    \label{def:nffPAST}
    The set $\nffAST$ contains precisely all $\pGCL$ programs $P$ such that for every finitary fair scheduler $f$ of $P$, $\expRuntime(\initialProgramState{P}, f)$ is finite.
\end{definition}
Proving membership of $\nffPAST$ in $\Pi^1_1$ is easy. 
\begin{lemma}
    \label{lemma:nffPAST-pi11-membership}
    $\nffPAST \in \Pi^1_1$
\end{lemma}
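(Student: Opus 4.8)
The plan is to follow the route of \cref{lemma:nffAST-pi02-membership}: reduce membership in $\nffPAST$ to a recursively-indexed conjunction of membership questions in $\PAST$ via the finitary transformation, and then appeal to the $\Pi^1_1$ upper bound on $\PAST$ from \cref{th:PAST}. Concretely, fix a $\pGCL$ program $P$, let $\finitaryTransformation{P}$ be its finitary transformation, and let \codeStyleText{k} be the variable the transformation introduces to record the scheduler bound. I would first establish
$$
P \in \nffPAST \;\Longleftrightarrow\; \forall n \in \setOfNaturals \cdot \bigl(\codeStyleMath{k \coloneqq n};\, \finitaryTransformation{P}\bigr) \in \PAST .
$$
For the forward direction, fix $n$ and an arbitrary scheduler $f'$ of $\codeStyleMath{k \coloneqq n};\, \finitaryTransformation{P}$; \cref{lemma:finitary-fairness-correspondence} supplies a finitary fair scheduler $f$ of $P$ whose execution tree agrees with that of $f'$ at all nondeterministic commands, and $\expRuntime(\initialProgramState{P}, f)$ is finite since $P \in \nffPAST$. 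Conversely, any finitary fair scheduler $f$ of $P$ is $n$-bounded for some $n$, so by the second half of \cref{lemma:finitary-fairness-correspondence} it corresponds to a scheduler $f'$ of $\codeStyleMath{k \coloneqq n};\, \finitaryTransformation{P}$, whose expected runtime is finite by hypothesis.

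Granting the equivalence, the rest is bookkeeping in the analytical hierarchy. The proof of the upper bound in \cref{th:PAST} yields a single oracle machine $M$ with
$$
Q \in \PAST \;\Longleftrightarrow\; \sforall f \in \setOfSchedules\; \exists m_1 \in \setOfNaturals\; \forall m_2 \in \setOfNaturals \cdot M^f(Q, m_1, m_2) = 1
$$
uniformly in the program $Q$, and the map $n \mapsto \bigl(\codeStyleMath{k \coloneqq n};\, \finitaryTransformation{P}\bigr)$ is computable. Substituting $Q = \codeStyleMath{k \coloneqq n};\, \finitaryTransformation{P}$, conjoining over $n$, and commuting the universal first-order quantifier $\forall n$ past the universal second-order quantifier $\sforall f$ gives
$$
P \in \nffPAST \;\Longleftrightarrow\; \sforall f \in \setOfSchedules\; \forall n \in \setOfNaturals\; \exists m_1 \in \setOfNaturals\; \forall m_2 \in \setOfNaturals \cdot M^f\bigl(\codeStyleMath{k \coloneqq n};\, \finitaryTransformation{P},\, m_1, m_2\bigr) = 1 ,
$$
whose matrix is arithmetical in the oracle $f$ and the input $P$; hence $\nffPAST \in \Pi^1_1$.

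\textbf{The step needing genuine care} is the runtime half of the equivalence above: \cref{lemma:finitary-fairness-correspondence} only matches the two execution trees at nondeterministic commands, so I must verify that $\expRuntime$ is \emph{finite for $f$ iff finite for $f'$}. This holds because the finitary transformation simulates each original step of $P$ by a bounded number of counter-increment and guard-check steps (bounded in terms of the fixed, finite set of nondeterministic directions of $P$), every step of $\finitaryTransformation{P}$ is either an original step or such an instrumentation step, and the prepended $\codeStyleMath{k \coloneqq n}$ adds a single step; consequently the two expected-runtime series along corresponding branches are each bounded by a constant affine function of the other, so finiteness transfers both ways. For $\nffAST$ this point was trivial, since termination probabilities are literally preserved, which is why it was not spelled out there. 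Everything else is the standard closure of $\Pi^1_1$ under universal number quantification.
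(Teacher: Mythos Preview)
Your proposal is correct and follows essentially the same route as the paper: reduce $\nffPAST$ to a universally quantified family of $\PAST$ questions via the finitary transformation and \cref{lemma:finitary-fairness-correspondence}, then invoke the $\Pi^1_1$ upper bound from \cref{th:PAST} together with closure of $\Pi^1_1$ under first-order universal quantification. The paper's proof is terser and simply asserts that finiteness of expected runtime transfers across the correspondence; your explicit observation that the instrumentation adds only a bounded number of steps per original step (so the two expected-runtime series are affinely related) is a welcome clarification that the paper leaves implicit.
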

\begin{proof}
    The proof goes through much the same way as that of \cref{lemma:nffAST-pi02-membership}. Let $P$ be a $\pGCL$ program, $\finitaryTransformation{P}$ be its transformed variant, and \codeStyleText{k} be the variable used in $\finitaryTransformation{P}$ for tracking the bound on the scheduler. As before, for some $n \in \setOfNaturals$, each scheduler $f'$ of $\codeStyleMath{k \coloneqq n}; \finitaryTransformation{P}$ can be associated with some scheduler $f$ of $P$. We conclude, from \cref{lemma:finitary-fairness-correspondence}, that the execution tree of $\codeStyleMath{k \coloneqq n}; \finitaryTransformation{P}$ yields a finite expected runtime under $f'$ \emph{iff} the execution tree of $P$ yields a finite expected runtime under $f$. Hence,
    $$
    P \in \nffPAST \Longleftrightarrow \forall n \in \setOfNaturals \cdot \codeStyleMath{k \coloneqq n}; \finitaryTransformation{P} \in \nPAST
    $$
    Simply attach the $\Pi^1_1$ formula for $\nPAST$ membership to complete the proof.
\end{proof}
To show that $\nffPAST$ is $\Pi^1_1$-complete, we reduce $\setOfRecursiveWellFoundedTrees$ to $\nffPAST$.
\begin{lemma}
    \label{lemma:nffPAST-pi11-hardness}
    $\nffPAST$ is $\Pi^1_1$-hard.
\end{lemma}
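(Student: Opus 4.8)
The plan is to reduce $\setOfRecursiveWellFoundedTrees$ to $\nffPAST$ by reusing, verbatim and with the same recursive map $M \mapsto P_M$, the program $P_M$ built for the proof of \cref{th:PAST} (\cref{fig:simulation} together with \cref{fig:numgen-past-simulation}). Since $P_M$ uses only bounded nondeterminism, it suffices to show $M \in \setOfRecursiveWellFoundedTrees \iff P_M \in \nffPAST$. One direction is free: if $M$ codes a well-founded recursive $\omega$-tree then $P_M \in \PAST$ by \cref{th:PAST}, and since every finitary-fair scheduler is in particular a scheduler, $\PAST \subseteq \nffPAST$, hence $P_M \in \nffPAST$.

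The substance is the converse, for which I would exhibit, whenever $M \notin \setOfRecursiveWellFoundedTrees$, a \emph{finitary-fair} scheduler $f$ with $\expRuntime(\initialProgramState{P_M}, f) = +\infty$. The key observation making this manageable is that $P_M$ contains a single $\nChoice$-command, sitting inside \codeStyleText{numGen}, so a scheduler is finitary fair exactly when there is a uniform bound on how many consecutive times it declines the exit direction at that command along any branch. I would then recycle the witnessing schedulers from the proof of \cref{th:PAST} case by case. When $M$ is not total, its Case~1 scheduler forces the first \codeStyleText{numGen} to emit a hanging input and thereafter makes no nondeterministic choice at all (the execution diverges inside \codeStyleText{execute}), so it declines the exit direction only finitely often and is $k$-bounded for a suitable $k$; likewise the Case~3 scheduler (when the subgraph coded by $M$ is disconnected) makes finitely many choices, reconnects a broken branch, and then loops in the choice-free \codeStyleText{infLoop}. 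Both are finitary fair and keep the survival probability frozen at a positive value forever, giving infinite expected runtime.

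The real obstacle is the case where $M$ codes a genuine tree possessing an infinite branch: the Case~5 scheduler of \cref{th:PAST} follows such a branch and must drive \codeStyleText{numGen} to emit its labels, which can be arbitrarily large, so that scheduler is in general not $k$-bounded for any fixed $k$. I would handle this by a dichotomy on the tree. If some infinite branch has all labels $\le B$, the $B$-bounded scheduler that follows it is finitary fair, it calls \codeStyleText{numGen} infinitely often, and each call pumps the expected-runtime series up by a full unit through the ``cheer'' loop (length $1/p$ at survival probability $p$, contribution $p\cdot(1/p)=1$), exactly as argued for \cref{th:PAST}; hence $\expRuntime = +\infty$. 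Otherwise every infinite branch carries unbounded labels; pick any $1$-bounded scheduler. Now \codeStyleText{numGen} returns only values in $\{0,1\}$, so the traced \codeStyleText{node} cannot stay on the tree forever (a surviving $\{0,1\}$-labelled infinite branch would be a bounded-label branch, contradicting the case hypothesis). Once \codeStyleText{node} has left the tree, the test at \listingLineRef{line:execute-m} is permanently $0$, the \codeStyleText{if}-block triggered by $z=0$ keeps the \codeStyleText{while(True)} loop turning without ever reaching a successful \codeStyleText{exit} or an \codeStyleText{infLoop}, each loop iteration invokes \codeStyleText{numGen}, and the survival probability is divided by only a fixed constant per iteration, so it stays positive at every finite depth. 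Thus \codeStyleText{numGen} runs infinitely often along a positive-probability branch, and the same ``cheer''-loop accounting gives $\expRuntime = +\infty$.

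I expect the last sub-case — all infinite branches of $M$ carrying unbounded labels — to be the main obstacle, since it is precisely where finitary fairness genuinely weakens the demonic scheduler of \cref{th:PAST}; the delicate point is the (routine but careful) check that once the traced \codeStyleText{node} has fallen off the tree $P_M$ can never terminate, re-enters \codeStyleText{numGen} infinitely often, and that each re-entry contributes a full unit to the expected-runtime series irrespective of the shrinking survival probability. Combined with the membership $\nffPAST \in \Pi^1_1$ from \cref{lemma:nffPAST-pi11-membership}, this establishes that $\nffPAST$ is $\Pi^1_1$-complete.
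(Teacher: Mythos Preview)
Your proposal reuses $P_M$ verbatim, whereas the paper modifies the number-generation loop so that producing any value forces the scheduler to \emph{alternate} directions rather than repeat one (the variable \codeStyleText{k} in \codeRef{fig:nffpast-hardness-program} tracks the previous choice and breaks the loop when it repeats). This lets a $2$-bounded scheduler emit any sequence of naturals, so any infinite branch of $M$ can be followed by a finitary-fair scheduler, and the paper's reduction then needs no case analysis at all.

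There is a genuine gap in your argument, and it is precisely the obstacle the paper's modification is designed to remove. Your case~(a) is not correct as stated: to make \codeStyleText{numGen} output $0$ the scheduler must pick $y{:=}1$ immediately, so a branch whose labels contain arbitrarily long runs of zeros---perfectly compatible with ``all labels $\le B$''---requires arbitrarily long runs of the same direction and is not $k$-bounded for any $k$. Your case~(b) rests on the claim that once \codeStyleText{node} has fallen off the tree the outer \codeStyleText{while(True)} keeps turning; but if that were the operational behaviour of $P_M$, then the same endless looping would occur in Case~4 of the proof of \cref{th:PAST} (well-founded $M$), giving infinite expected runtime there and destroying the forward direction $M\in\setOfRecursiveWellFoundedTrees \Rightarrow P_M\in\PAST$ that you invoke. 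The intended semantics---made explicit in \codeRef{fig:nffpast-hardness-program} by \codeStyleText{if (z = 0): exit}---is that $P_M$ terminates after the edge-case check. Under that reading your case~(b) collapses: a $1$-bounded scheduler drives \codeStyleText{node} off the tree and then $P_M$ exits with finite expected runtime. Concretely, take $M$ whose only infinite branch is the all-zero branch: no finitary-fair scheduler of the unmodified $P_M$ can follow it, every finitary-fair scheduler yields finite expected runtime, and hence $P_M\in\nffPAST$ while $M\notin\setOfRecursiveWellFoundedTrees$. The unmodified $P_M$ therefore does not reduce $\setOfRecursiveWellFoundedTrees$ to $\nffPAST$.
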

\begin{proof}[Proof (intuition)]
    In our reduction \codeRef{fig:nffpast-hardness-program}, we build a program that's similar to \codeRef{fig:simulation}, without the need for \codeStyleText{numGen} to resolve Case 3 of the lower bound proof. The important differences are highlighted in \textcolor{red}{red}.
    \begin{figure}
        \declareCodeFigure
        \small
    \begin{lstlisting}[captionpos=b, language=python, mathescape=true, escapeinside={<@}{@>}, label={lst:nffpast-hardness-program}, xleftmargin=15pt]
    node, s $\coloneqq$ [], 1
    while (True):
        x, y, z, w, <@\textcolor{red}{\texttt{k}}@> $\coloneqq$ <@\textcolor{red}{-1}@>, 0, 0, 0, <@\textcolor{red}{\texttt{1}}@>
        while (y = 0):
            x $\coloneqq$ x + 1
            y $\coloneqq$ 0 $\nChoice$ y $\coloneqq$ 1 <@\label{line:nChoice-nffpast}@>
            if (y = <@\textcolor{red}{\texttt{k}}@>): break
            <@\textcolor{red}{\texttt{k $\coloneqq$ y}}@>
            skip $\pChoice{1/2}$ exit
            prob $\coloneqq$ s * 2
        <@\textcolor{red}{\texttt{if (x = 0): exit}}@>
        node $\coloneqq$ node.append(x)
        z $\coloneqq$ execute(M, node)
        if (z = 0): exit
        while (w < s):
            w $\coloneqq$ w + 1
    \end{lstlisting}
    \caption{\emph{The program proving $\Pi^1_1$-completeness of $\nffPAST$.}}
    \label{fig:nffpast-hardness-program}
    \end{figure}
    The basic idea is that we \emph{swap} the direction of the nondeterministic operation at \listingLineRef{line:nChoice-nffpast} that guards the continued execution of the loop. Hence, the directions taken need to be constant swapped to reach higher and higher values of $x$, which is necessary in completing the reduction. To avoid edge cases, we initialize \codeStyleText{x} to \codeStyleText{-1}, and immediately exit if the increment is performed precisely once.

    If the machine $M$ does characterize a well-founded tree, \codeRef{fig:nffpast-hardness-program} is $\nPAST$ for all schedulers. If $M$ fails to characterize a well-founded tree, then there is a scheduler bounded by 2 that simulates one of the (possibly many) infinite branches of the recursive tree. We leave the details of this argument to the reader.
\end{proof}
\begin{corollary}
    \label{corollary:nffPAST-completeness}
    $\nffPAST$ is $\Pi^1_1$-complete.
\end{corollary}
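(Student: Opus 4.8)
The plan is short, because the corollary is just the conjunction of the two immediately preceding lemmas. A set $L \subseteq \setOfNaturals$ is $\Pi^1_1$-complete precisely when $L \in \Pi^1_1$ and $L$ is $\Pi^1_1$-hard; \cref{lemma:nffPAST-pi11-membership} supplies the first conjunct and \cref{lemma:nffPAST-pi11-hardness} the second, so the proof is a one-line appeal to the definition of completeness. What follows describes how I would make sure of the two ingredients, since that is where all the work lies.

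For the upper bound I would rely entirely on the finitary transformation of \citet{AlurH98} together with \cref{lemma:finitary-fairness-correspondence}: finitary-fair schedulers of $P$ correspond exactly to ordinary schedulers of $\codeStyleMath{k \coloneqq n};\, \finitaryTransformation{P}$ as $n$ ranges over $\setOfNaturals$, and the correspondence preserves expected runtimes. Hence $P \in \nffPAST$ iff $\codeStyleMath{k \coloneqq n};\, \finitaryTransformation{P} \in \PAST$ for every $n \in \setOfNaturals$. Since $\PAST \in \Pi^1_1$ by \cref{th:PAST}, prefixing a single universal first-order quantifier over $n$ and invoking the $\Pi^1_1$ normal-form equivalences of \citet{Rogers} keeps the predicate in $\Pi^1_1$; that is exactly \cref{lemma:nffPAST-pi11-membership}.

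For the lower bound I would verify the reduction $M \mapsto$ (\codeRef{fig:nffpast-hardness-program}), which mirrors the construction of \cref{subsec:past-reduction} but negates the guard of the outer nondeterministic choice at \listingLineRef{line:nChoice-nffpast} so that making progress forces the scheduler to \emph{alternate} its choices. Two directions are needed: if $M$ characterizes a well-founded recursive $\omega$-tree, then every simulated branch is finite, so under \emph{every} finitary-fair scheduler only finitely many node-extensions occur and --- each round costing a constant in expectation, as in the proof of \cref{th:PAST} --- the expected runtime is finite; and if $M$ is not a well-founded tree, some infinite branch can be traced by a $2$-bounded (hence finitary-fair) scheduler that alternates the nondeterministic choice to reach ever-larger values of \codeStyleText{x}, while the busy-wait loop driven by \codeStyleText{s} contributes a fixed additive amount $1$ to the expected runtime per visited node, so infinitely many such contributions force the expected runtime to $+\infty$. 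With the $\Pi^1_1$-completeness of $\setOfRecursiveWellFoundedTrees$ (\cref{theorem:well-founded-pi-1-1}), this yields $\Pi^1_1$-hardness.

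The main obstacle is the second direction of the lower bound --- the part stated only as intuition in the sketch of \cref{lemma:nffPAST-pi11-hardness}: one must confirm that the direction-swap really lets a $2$-bounded scheduler (not merely a finitary-fair one with a large bound) chase an arbitrary infinite branch, and that the edge cases are correctly absorbed --- the initialization of \codeStyleText{x} to $-1$, the immediate exit when the increment fires exactly once, and the fact that, unlike the original $\PAST$ reduction, no separate \codeStyleText{numGen}-based check for broken branches is required because a $2$-bounded scheduler cannot produce the pathological histories that made that check necessary. Once those points check out, the corollary follows at once.
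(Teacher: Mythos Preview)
Your proposal is correct and matches the paper: the corollary carries no separate proof there either, being the immediate conjunction of \cref{lemma:nffPAST-pi11-membership} and \cref{lemma:nffPAST-pi11-hardness}, and your account of how each lemma is obtained tracks the paper's arguments.

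One small correction in your discussion of the hardness side. The broken-branch check in \codeRef{fig:simulation} is not omitted from \codeRef{fig:nffpast-hardness-program} because ``a $2$-bounded scheduler cannot produce the pathological histories that made that check necessary.'' That check was never about scheduler behaviour; it was there to force $P_M \notin \PAST$ when the \emph{input} $M$ validates a disconnected subgraph rather than a genuine tree (Case~3 in the proof of \cref{th:PAST}). It can be dropped in the $\nffPAST$ reduction simply because the $\Pi^1_1$-hardness of $\setOfRecursiveWellFoundedTrees$ already holds when one restricts to total machines characterizing prefix-closed trees, so the reduction may assume its input is of that form --- which is also what the paper's sketch of \cref{lemma:nffPAST-pi11-hardness} tacitly does when it speaks only of ``infinite branches of the recursive tree.'' This is a side remark about the lemma and does not affect the corollary itself, which you have handled exactly right.
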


\newoutputstream{todos}
\openoutputfile{main.todos.ctr}{todos}
\addtostream{todos}{\arabic{@todonotes@numberoftodonotes}}
\closeoutputstream{todos}

\label{endofdocument}
\newoutputstream{pagestotal}
\openoutputfile{main.pagestotal.ctr}{pagestotal}
\addtostream{pagestotal}{\getpagerefnumber{endofdocument}}
\closeoutputstream{pagestotal}

\end{document}